\documentclass{llncs}
\usepackage{xcolor}
\definecolor{cream}{RGB}{255,249,235}
\usepackage{tikz}
\usetikzlibrary{arrows.meta,positioning,shapes.geometric,fit,calc}
\usepackage{algorithm}
\usepackage{algpseudocodex}
\usepackage{amssymb}
\usepackage{enumerate}
\usepackage{amsmath}
\usepackage{academicons}
\usepackage{xcolor}
\usepackage{hyperref}
\usepackage{xcolor}
\usepackage{soul}
\usepackage{graphicx}
\usepackage{lineno}
\usepackage{booktabs,longtable,tabularx}
\usepackage{tikz}
\usetikzlibrary{arrows.meta,positioning}
\usetikzlibrary{positioning,fit}
\usepackage{booktabs}
\usepackage{tabularx}
\usepackage{array}
\usepackage{multirow}
\usepackage{listings}
\usepackage{tabularx}
\usepackage[a4paper, total={7in, 9in}]{geometry}
\usepackage{array}
\usepackage{algorithm}
\usepackage{algpseudocodex}

\makeatletter
\def\algbackskip{\hskip-\ALG@thistlm}
\makeatother
\AtBeginDocument{%
 \providecommand\BibTeX{{%
   \normalfont B\kern-0.5em{\scshape i\kern-0.25em b}\kern-0.8em\TeX}}}
\DeclareUnicodeCharacter{2212}{\ensuremath{-}}   
\begin{document}
\title{DPLAN: Minimal Connectivity to Floorplan Generation}
 \author{Rohit Lohani, Krishnendra Shekhawat}
\author{
Rohit Lohani$^{1}$\thanks{Corresponding author: Rohit Lohani, p20210045@pilani.bits-pilani.ac.in} \and
Krishnendra Shekhawat$^{1}$
}
\institute{BITS Pilani, Department of Mathematics, Pilani Campus, India}
\maketitle
\begin{abstract}
Automated floor plan generation is an important problem in computational architectural design. The goal is to construct a floor plan from user-defined room numbers and door requirements. The user specifies which rooms must share a door and which rooms must not be adjacent. However, these requirements do not determine the exact placement or shape of the rooms. The task is therefore to arrange the rooms in a single floor plan so that all required door connections are satisfied and no rooms overlap. To address this problem, we propose \emph{Door Connectivity to Floor Plan Generation (DPLAN)}, a graph-based prototype that generates floor plans from door and non-adjacency constraints. 
The framework operates in three stages.
\begin{enumerate}
    \item \textbf{Connectivity adjustment:} 
    The user-defined graph is examined. If it is disconnected, additional edges are added to connect its components.
\item \textbf{Graph construction:} 
    From the connected graph, a bi-connected plane triangulation is constructed to ensure the existence of a floor plan without overlapping or empty spaces.
\item \textbf{Floor plan generation:} 
    The triangulated graph is transformed into floor plans using two modes.
    \begin{itemize}
    \item For \textbf{rectangular floor plans (RFPs)}, separating triangles are removed by modifying edges without adding new vertices, so that no extra rooms are created.
  \item For \textbf{orthogonal floor plans (OFPs)}, separating triangles are removed by introducing additional vertices, which allows rectilinear room shapes.
    \end{itemize}
\end{enumerate}
By enforcing both door and non-adjacency requirements, the framework generates floor plans that satisfy the given constraints. 
Our proposed method is based on graph construction, plane triangulation, and systematic graph modification. The proposed method is implemented in Python and includes a prototype that allows interactive input of constraints and visualization of the generated floor plans.
Currently, the framework supports rectangular plot boundaries. 
Future work includes support for non-rectangular plots, dimension-based scaling, and circulation modeling to further extend its practical applicability.
\end{abstract}  
\keywords{Algorithm, Plane Graphs, Connectivity, Triangulation, Separating Triangle, Floor plan.}
\section{Introduction}
\begin{figure}
\centering
\resizebox{\linewidth}{!}{%
\begin{tikzpicture}[
  node distance=25mm and 20mm,
  box/.style={
    draw, rounded corners,
    fill=cream,
    align=left,
    inner sep=6pt,
    font=\small,
    text width=6.6cm
  },
  widebox/.style={
    draw, rounded corners,
    align=left,
    fill=cream,
    inner sep=6pt,
    font=\small,
    text width=13.6cm
  },
  decision/.style={
    draw, diamond,
    fill=cream,
    aspect=2.2,
    align=center,
    inner sep=2pt,
    font=\small
  },
  arrow/.style={-Latex, thick}
]
\node[widebox] (start) {%
\textbf{Start: User Inputs}\\
(1) Plane graph $G=(V,E)$\\
(2) forbidden set $N$\\
(3) Rectangular boundary\\
(4) Choose mode: RFP or OFP
};

\node[decision, below=of start] (conn) {Is $G$\\connected?};

\node[box, left=of conn, xshift=-10mm] (stage1) {%
\textbf{Stage 1: Connectivity augmentation}\\
\textbf{Algorithm \ref{OF-CON}} $\rightarrow G_C$\\
Connect components using outer-face vertices;\\
reject candidate edges in $N$ when feasible;\\
fallback may relax $N$
};

\node[box, below=of conn] (stage2) {%
\textbf{Stage 2: Bi-connectivity augmentation}\\
\textbf{Algorithm \ref{BY}} $\rightarrow G_B$\\
Eliminate articulation points by connecting blocks;\\
respect $N$ when feasible; fallback may relax $N$
};

\node[widebox, below=of stage2] (stage3) {%
\textbf{Stage 3: Plane triangulation}\\
\textbf{Algorithm \ref{TRI}} $\rightarrow G_T$\\
Triangulate non-triangular interior faces via ear-clipping;\\
\texttt{IsValidDiagonal} rejects $(a,c)\in N$;\\
Fallback may insert an edge even if forbidden
};

\node[decision, below=of stage3] (mode) {Mode?};

\node[box, left=of mode, xshift=-10mm] (stage4rfp) {%
\textbf{Stage 4 (RFP): Remove separating triangles}\\
\textbf{Algorithm \ref{STR}} $\rightarrow G_F$ (PTPG)\\
Try: delete exterior ST edge; else replace an interior edge and $\#$ST decreases; respect $N$ when feasible;\\
fallback may relax $N$
};

\node[box, below=of stage4rfp] (stage5rfp) {%
\textbf{Stage 5 (RFP): Rectangular floor plans}\\
Input: PTPG $G_F$\\
Apply rectangular-dual method (Koźmiński et al.~\cite{kozminski1985rectangular})\\
Output: multiple dimensionless RFPs (UFPs)
};

\node[box, right=of mode, xshift=10mm] (stage4ofp) {%
\textbf{Stage 4 (OFP): Allow rectilinear rooms}\\
Start from $G_T$ (PTG may contain separating triangles)
};

\node[box, below=of stage4ofp] (stage5ofp) {%
\textbf{Stage 5 (OFP): OFP synthesis}\\
Add auxiliary vertices to eliminate separating triangles;\\
triangulate as needed $\rightarrow$ ST-free modified graph;\\
build an RFP for modified graph;\\
merge auxiliary regions into neighbors\\
Output: multiple OFPs
};

\coordinate (midstage5) at ($(stage5rfp.south)!0.5!(stage5ofp.south)$);
\node[widebox] (gui) at ($(midstage5)+(0,-18mm)$) {%
\textbf{Our Prototype:}\\
Visualize user-edges vs added edges.
};

\draw[arrow] (start) -- (conn);

\draw[arrow] (conn.west) -- node[above, font=\small]{No} (stage1.east);
\draw[arrow] (stage1) |- (stage2);

\draw[arrow] (conn) -- node[right, font=\small]{Yes} (stage2);

\draw[arrow] (stage2) -- (stage3);
\draw[arrow] (stage3) -- (mode);

\draw[arrow] (mode.west) -- node[above, font=\small]{RFP} (stage4rfp.east);
\draw[arrow] (stage4rfp) -- (stage5rfp);

\draw[arrow] (mode.east) -- node[above, font=\small]{OFP} (stage4ofp.west);
\draw[arrow] (stage4ofp) -- (stage5ofp);

\draw[arrow] (stage5rfp) |- (gui);
\draw[arrow] (stage5ofp) |- (gui);

\end{tikzpicture}
}
\caption{Flowchart of the DPLAN pipeline from minimal door-connectivity input (with non-adjacency constraints) to RFP/OFP generation.}
\label{FLOW}
\end{figure}
\begin{figure}
\centering
    \includegraphics[width = 0.95 \textwidth]{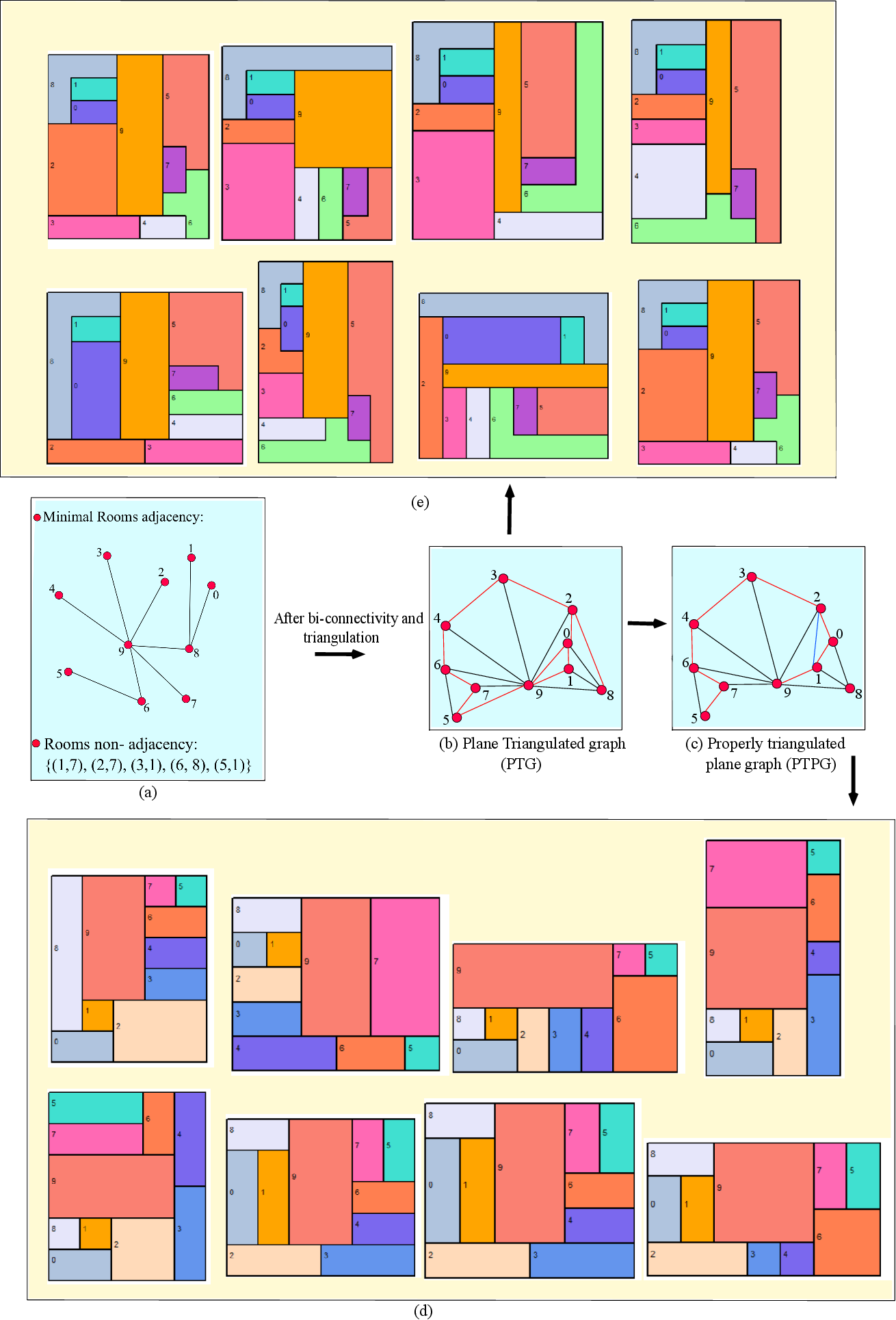}
    \caption{(a) Minimal input graph with non-adjacency constraints. (b) A plane triangulated graph constructed using the proposed approach. (c) A properly triangulated plane graph obtained from the proposed approach. (d) Orthogonal floor plans generated from the resulting PTG. (e) Rectangular floor plans generated from the resulting PTPG.}
    \label{FF-1}
\end{figure}

Floor planning is a fundamental problem in architectural design because the arrangement of rooms determines access between spaces and the connections between rooms. A floor plan must satisfy several requirements, such as which rooms should be adjacent, which rooms should remain separate, and how the design fits within the given plot boundary. 
These requirements influence one another, which makes the design process difficult. As the number of rooms increases, the number of possible arrangements grows rapidly. For this reason, manual exploration becomes time-consuming, and simple computational search methods become inefficient.\\
Over the years, many automated floor-planning methods have been developed. These include optimization-based techniques, shape grammar approaches, graph-based models, and machine learning methods (discussed in section \ref{Lite}). 
Among them, graph-based approaches are effective because they represent rooms as vertices and spatial relationships as edges. 
This representation clearly shows the connections and separations between rooms. Graph-based models have also been used in areas such as VLSI floorplanning and computational architecture, where connectivity and planarity are important. However, many existing systems require users to specify complete adjacency information in advance or depend on heuristic optimization methods. This reduces flexibility and limits the range of floor plans that can be generated. In practical design settings, architects often specify only essential requirements, such as which rooms must be connected, which must remain separate, the shapes of rooms, and the boundary of the floor plan. Ensuring that both adjacency and non-adjacency requirements are satisfied in the final floor plan in a systematic way is challenging. Although modern computer-aided design tools and data-driven methods (see Section~\ref{Lite}) can generate initial layouts, they provide limited control over the underlying structure. It is often unclear how properties such as bi-connectivity, triangulation, and the handling of separating triangles are ensured. \\
In this paper, we present \emph{Door Connectivity to Floor Plan Generation} (DPLAN), a graph-based framework for automated floor plan generation (see Figure \ref{FLOW}). The system builds floor plans from essential adjacency and non-adjacency requirements derived from door connections, 
without requiring complete room-to-room specifications (see Figure~\ref{FF-1}). The framework first ensures that the input graph is connected. It then converts the graph into a bi-connected plane-triangulated form. From this representation, multiple floor plans are generated that satisfy the given constraints. 
The resulting floor plans remain valid while allowing different room arrangements.\\
The current implementation supports dimensionless floor plans within rectangular plot boundaries. This provides a basis for studying the structural feasibility and variation in floor-plan design. Future work includes support for non-rectangular plots, room dimensions, and circulation modeling to extend the method toward practical architectural applications.
By combining graph-based modeling with interactive software support, 
DPLAN offers a clear and extensible approach to automated floor plan generation.\\
The paper is organized as follows. 
Section~\ref{Preliminaries} introduces the main definitions and notation used in this work. Section~\ref{Lite} reviews related literature. 
Sections~\ref{Gap} and~\ref{Our} discuss existing approaches, identify research gaps, and position our contribution. Section~\ref{Meth} describes the proposed methodology, including the step-by-step process from graph construction to floor plan generation, along with the associated algorithms. Section~\ref{Apply} explains the use of the DPLAN interface. 
Sections~\ref{conclusion} and~\ref{FE} discuss limitations and directions for future work. Finally, Section~\ref{App} provides an appendix that includes runtime analysis, comparisons with existing methods, the scope of the approach, and computational complexity and correctness analyses of the implemented algorithms.
\begin{figure}
     \centering
  \includegraphics[scale=0.85]{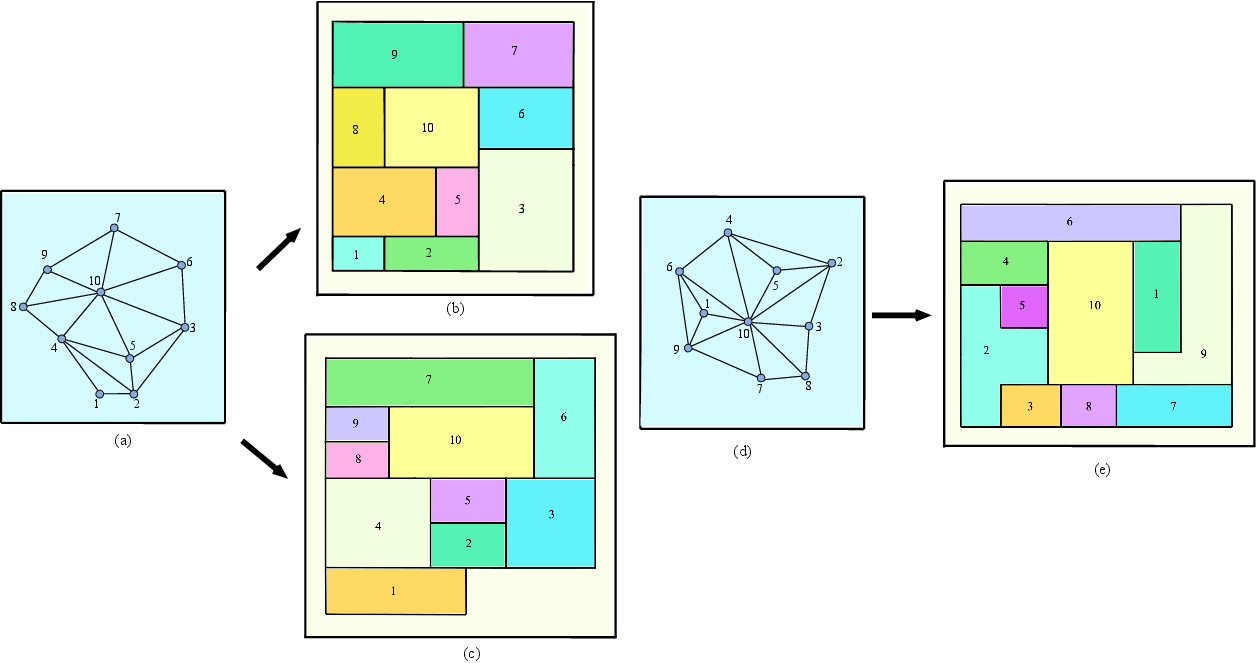}
     \caption{(b) Rectangular floor plan based on the graph in (a); (c) non-rectangular floor plan derived from the same graph; (e) orthogonal floor plan generated from the graph in (d).}
     \label{Rule-1}
     \end{figure}
\section{Terminologies}\label{Preliminaries}
This section introduces the basic concepts and terminology used in this study. 
These definitions provide a clear foundation for representing and analyzing floor plans using graph structures and spatial relationships.
\begin{enumerate}[(i).]

\item \textbf{Floor Plan \cite{rinsma_1988}:}
A \emph{floor plan} ($F$) consists of an outer polygonal region that is internally divided into $n$ non-overlapping sub-regions, each representing an individual \emph{room}. The arrangement of these rooms follows the connectivity prescribed by a graph on $n$ vertices, where each vertex corresponds to a room. The polygon enclosing the entire subdivision is referred to as the \emph{floor plan boundary}, and the straight line segments that separate neighboring rooms form the \emph{walls}. Two rooms are regarded as adjacent precisely when they meet along a wall segment of positive length.

Floor plans can be classified according to the geometry of their outer boundary and the shapes of the rooms they contain. The most common categories are described below:

\begin{enumerate}[(a).]

\item \textbf{Rectangular Floor Plan (RFP):}  
A floor plan is termed a \emph{rectangular floor plan} when both the enclosing plot and all interior spaces are bounded by rectangles. In such layouts, every room admits four orthogonal sides, and the overall configuration fits entirely within a rectangular boundary (see Figure~\ref{Rule-1}a).

\item \textbf{Orthogonal Floor Plan (OFP):}  
A floor plan is described as an \emph{orthogonal floor plan} if its outer boundary is rectangular, while allowing some rooms to assume orthogonal shapes that are not strictly rectangular, including configurations such as \emph{L}-, \emph{T}-, or \emph{C}-shaped regions. An illustration of such rooms is shown in Figure~\ref{Rule-1}e (rooms 9 and 2).

\item \textbf{Non-rectangular Floor Plan (NRFP):}  
A floor plan is categorized as a \emph{non-rectangular floor plan} when the shape of its outer boundary is not rectangular, independent of whether the internal rooms themselves are rectangular or non-rectangular. An example is provided in Figure~\ref{Rule-1}c.

\end{enumerate}

\item \textbf{Graph Representations of Floor Plans:}
A floor plan can be naturally modeled using graph-theoretic structures. In such representations, each vertex corresponds to a room, and an edge between two vertices indicates that the corresponding rooms share a common wall. Graphs constructed in this manner are known as \emph{adjacency graphs}. Modeling floor plans as adjacency graphs enables the application of established graph algorithms for analysis, optimization, and floor plan synthesis.

In addition to adjacency graphs, floor plans are often studied using \emph{dual graphs}. Given a planar embedding of a graph, its dual is obtained by associating each face of the original (primal) graph with a vertex in the dual graph. Two vertices in the dual graph are connected by an edge if and only if their corresponding faces in the primal graph share a common boundary edge.

Within the scope of this work, adjacency graphs for RFPs and NRFPs consist of vertices representing rectangular rooms and edges representing shared walls between them. In contrast, graphs derived from OFPs may include vertices associated with both non-rectangular and rectangular rooms with orthogonally shaped boundaries, while edges consistently represent adjacency through common walls.\\
\item \textbf{Plane Graphs and Their Variants \cite{bhasker1987linear}:}
A \emph{plane graph} is a graph together with a fixed drawing in the plane in which edges are represented as non-intersecting curves, meeting only at shared endpoints. This drawing induces a decomposition of the plane into connected regions known as \emph{faces}. The unique unbounded region is called the \emph{outer} (or exterior) face, while all bounded regions are referred to as \emph{inner} (or interior) faces. A vertex that lies on the boundary of the outer face is termed an \emph{exterior vertex}; all remaining vertices are classified as \emph{interior vertices}.

Certain subclasses of plane graphs play a central role in floor plan generation and are described below:

\begin{enumerate}[(i).]

\item \textbf{Plane Triangulated Graph (PTG):}  
A plane graph is considered \emph{triangulated} when every interior face is enclosed by exactly three edges, whereas the outer face may consist of three or more edges (see Figure~\ref{Rule-1}d). A triangular cycle whose interior contains at least one vertex is referred to as a \emph{separating triangle}, as it divides the graph into distinct regions (see Figure~\ref{Rule-1}d, separating triangle $(6,9,10)$).

\item \textbf{Properly Triangulated Plane Graph (PTPG):}  
A \emph{properly triangulated plane graph} is a connected plane graph in which all interior faces are triangular, and no separating triangles occur (see Figure~\ref{Rule-1}b). The absence of separating triangles imposes a stronger structural condition, which is useful for algorithmic construction of floor plans.

\end{enumerate}

\end{enumerate}

\section{Literature Survey}
\label{Lite}

Automated floor plan generation has developed along two main research directions. 
The first direction is constraint-based. 
It uses graph models or rule-based procedures to explicitly represent requirements such as connectivity, planarity, and other structural conditions needed for floor plan construction. 
The second direction is data-driven. 
It learns geometric patterns and design regularities from large collections of existing floor plans, often using generative models that convert a graph specification into a geometric layout.\\
In constraint-based methods, the floor plan is viewed as a partition of the building boundary into rooms, and the design requirements are encoded as an adjacency or connectivity graph. 
These methods are interpretable and can provide guarantees for the constraints they explicitly model. 
However, they often assume that a well-structured graph is already given. 
The process of converting sparse design requirements into such a structured graph is not always addressed directly.\\
In contrast, learning-based methods can generate visually diverse and realistic layouts. 
However, constraints such as strict non-adjacency, triangulation, or bi-connectivity are usually encouraged through training objectives rather than guaranteed by construction.

\subsection{Graph-theoretic approaches}
Table~\ref{tab:rule_methods} presents representative constraint-based approaches that start from an explicit adjacency graph and generate candidate floor plans through graph transformations or optimization procedures. 
These methods work effectively when the input graph is already consistent and sufficiently detailed. 
However, they generally do not address the problem of minimally modifying an input graph while enforcing explicit non-adjacency requirements.

\subsection{Learning-based and hybrid approaches}
Table~\ref{learning_methods} presents representative learning-based approaches that generate room geometries from a given graph and boundary condition. 
These methods can produce realistic and diverse floor plans. 
However, structural constraints (such as adjacency, triangulation, connectivity) are usually encouraged through training objectives rather than guaranteed by construction, and the input graph is typically assumed to be internally consistent.

\begingroup
\small
\setlength{\tabcolsep}{6pt}
\renewcommand{\arraystretch}{1.35}

\begin{longtable}{p{2.0cm} p{0.8cm} p{3.2cm} p{5.0cm} p{5.2cm}}
\caption{Graph-theoretic and rule-based methods for floor plan generation.}
\label{tab:rule_methods}\\

\toprule
\textbf{Work} & \textbf{Year} & \textbf{Primary input} & \textbf{Main idea / representation} & \textbf{Output and remarks} \\
\midrule
\endfirsthead

\toprule
\textbf{Work} & \textbf{Year} & \textbf{Primary input} & \textbf{Main idea / representation} & \textbf{Output and remarks} \\
\midrule
\endhead

\midrule
\multicolumn{5}{r}{\small\textit{Continued on next page}}\\
\endfoot

\bottomrule
\endlastfoot

Ko\'zmi\'nski \& Kinnen \cite{kozminski1985rectangular} &
1985 &
Planar graph under rectangular-dual conditions &
Characterizes when an adjacency graph admits a rectangular dual and gives a constructive method &
Strong guarantees, but assumes a highly structured planar input instead of building one from sparse constraints. \\

Rinsma \cite{rinsma1988rectangular} &
1988 &
Graph-theoretic floor plan existence setting &
Provides existence-style results under stated combinatorial conditions &
Mainly theoretical; not presented as a complete generation pipeline. \\

Lai \& Leinwand \cite{lai1988algorithms} &
1990 &
Plane graph (embedded) &
Reduces rectangular-dual construction to a bipartite matching problem on a derived graph &
Gives an algorithm when the input is already planar/embedded and satisfies the required conditions. \\

M. Kurowski \cite{kurowski2003simple} &
2003 &
Plane near-triangulation / suitable planar graph &
Constructs a dissection-like floor plan from a suitable planar graph using a simple procedure &
Efficient construction, but does not target rich architectural constraints (e.g., explicit forbidden adjacencies). \\

Liao \textit{et al.} \cite{liao2003compact} &
2003 &
Planar embedding via orderly spanning tree &
Uses planar encoding to build compact rectilinear drawings/dissections &
Focused on planar encoding/drawing; not designed for constraint-heavy floor plan briefs. \\

Marson \& Musse \cite{marson2010automatic} &
2010 &
Room list and rough size/grouping intent &
Recursive partitioning (treemap-like subdivision) followed by adjustments &
Fast and interactive, but strict adherence to the given adjacency/non-adjacency is not the main guarantee. \\

Zhang \textit{et al.} \cite{zhang2011improved} &
2011 &
Plane triangulation &
Studies transformations that improve triangulation structure for downstream constructions &
Useful when a triangulation is already available; it does not focus on repairing sparse/disconnected briefs. \\

Mirahmadi \& Shami \cite{mirahmadi2012novel} &
2012 &
Procedural rules + room program &
Rule-based subdivision with corridor-driven heuristics and optimization &
Produces plausible plans, but relies on heuristics rather than certified graph conditions. \\

Hao Hua \cite{hua2016irregular} &
2016 &
Template/shape cues + relational constraints &
Matches a target template/topology and uses search/optimization for feasibility &
Handles irregular regions well; feasibility is mainly achieved by optimization. \\

Wang \textit{et al.} \cite{wang2018customization} &
2018 &
Existing floor plan (or derived structure) + edits &
Edits/repairs a plan by graph extraction and graph transformations &
Effective when a coherent seed plan exists; less focused on generating from scratch under sparse constraints. \\

Upasani \textit{et al.} \cite{upasani2020automated} &
2020 &
A valid dimensionless arrangement + geometric bounds &
First fixes a valid arrangement, then assigns dimensions via optimization &
Produces dimensioned plans, assuming a valid topology/arrangement is already given. \\

Shekhawat \textit{et al.} \cite{SHEKHAWAT2021103718} &
2021 &
Adjacency graph (+ optional boundary/dimensions) &
Combines graph construction with optimization/dimensioning for floor plan generation &
Algorithmic workflow, but assumes the adjacency specification is already meaningful and complete enough. \\

Shekhawat \textit{et al.} \cite{shekhawat2023automated} &
2023 &
Input graph + room-shape categories &
Generates plans allowing specified orthogonal/non-rectangular room shapes within constraints &
Supports richer room shapes; feasibility depends on the structural graph and the allowed shape family. \\

Bisht \textit{et al.} \cite{https://doi.org/10.1111/cgf.14451} &
2022 &
Adjacency graph (+ optional dimensions) &
Enumerates topologically distinct floor plan/candidates consistent with a given graph, then dimensions them &
Good for exploring alternatives once the adjacency graph is fixed; less focused on minimal graph repair/augmentation. \\

Shiksha \textit{et al.} \cite{SHIKSHA2026104506} &
2025 &
Rule-level brief (required/optional adjacency, forbidden contacts, circulation cues) &
Converts a brief into candidate design graphs using rules, then synthesizes plans &
Expressive for brief-to-graph translation; emphasis is not on minimal augmentation to a certified backbone under all constraints. \\

\end{longtable}
\endgroup
\begingroup
\small
\setlength{\tabcolsep}{6pt}
\renewcommand{\arraystretch}{1.25}

\begin{longtable}{p{2.0cm} p{0.8cm} p{3.2cm} p{5.0cm} p{5.2cm}}
\caption{Learning-based (data-driven) methods for floor plan generation and reconstruction.}
\label{learning_methods}\\

\toprule
\textbf{Work} & \textbf{Year} & \textbf{Primary input} & \textbf{Main idea / representation} & \textbf{Output and remarks} \\
\midrule
\endfirsthead

\toprule
\textbf{Work} & \textbf{Year} & \textbf{Primary input} & \textbf{Main idea / representation} & \textbf{Output and remarks} \\
\midrule
\endhead

\midrule
\multicolumn{5}{r}{\small\textit{Continued on next page}}\\
\endfoot

\bottomrule
\endlastfoot

C. Liu \cite{liu2018floornet} &
2018 &
RGBD video / 3D scans &
Multi-branch network predicts floor plan geometry/semantics, with an IP step for vector output &
Targets floor plan reconstruction from scans rather than brief-based generation. \\

W. Wu \textit{et al.} \cite{wu2019data} &
2019 &
Boundary + dataset priors (optionally, room program) &
Two-stage learned synthesis: allocate rooms, then generate/refine walls within the boundary &
Fast and plausible, but strict feasibility may require post-processing. \\

J. Chen \textit{et al.} \cite{Chen_2019_ICCV} &
2019 &
RGBD evidence (top-view cues) &
Room-wise optimization guided by learned cues and consistency terms &
Focused on reconstruction quality, not free-form generation from constraints. \\

Hu \textit{et al.} \cite{10.1145/3386569.3392391} &
2020 &
Constraint graph + boundary &
Graph2Plan: generates a floor plan from a graph and boundary using learned decoding stages &
Can follow sparse constraints, but hard constraints often need explicit validation/repair. \\

Nauata \textit{et al.} \cite{nauata2020house} &
2020 &
Adjacency graph (rooms/types/contacts) &
House-GAN: graph-conditioned GAN generates room boxes consistent with the input graph &
Diverse generation; hard constraints may be violated without explicit checking. \\

Nauata \textit{et al.} \cite{nauata2021house} &
2021 &
Graph + initial plan for refinement &
House-GAN++: iterative refinement improves a generated plan under the same graph conditioning &
Cleaner outputs, but strict feasibility still typically needs validation steps. \\

S. Wang \textit{et al.} \cite{wang2021actfloor} &
2021 &
Boundary + activity/function priors &
ActFloor-GAN: activity-guided synthesis for human-centric residential plans &
Produces plausible designs; constraint correctness is mainly learned rather than guaranteed. \\

J. Sun \textit{et al.} \cite{sun2022wallplan} &
2022 &
Boundary + design constraints &
WallPlan: predicts an intermediate wall-graph, then decodes to a structured plan &
More structured than raster-only methods; still may require rule checks for CAD validity. \\

S. Shabani \textit{et al.} \cite{shabani2023housediffusion} &
2023 &
Constraint graph + vector corner/door representation &
HouseDiffusion: conditional diffusion that directly generates vector floor plans &
Stronger geometric control; performance depends on constraint encoding and training coverage. \\

Z. Han \textit{et al.} \cite{hangraph2pix} &
2024 &
Adjacency graph (+ attributes) &
Graph2pix: graph-conditioned generator produces floor plan images aligned to the graph &
Usually followed by vectorization/cleanup for CAD use. \\

Chen Ma Thi \textit{et al.} \cite{thi2024deep} &
2024 &
Boundary + user constraints &
Learns boundary partitioning under constraints, then reconstructs a coherent plan &
Practical synthesis, but formal guarantees are typically not provided. \\

Zhang \textit{et al.} \cite{zhang2024maskplan} &
2024 &
Partial user specification (incomplete constraints/attributes) &
MaskPLAN: masked generation completes missing attributes and produces a plan from partial input &
Supports interactive partial briefs; still benefits from explicit feasibility checks. \\

S. Hong \textit{et al.} \cite{hong2024cons2plan} &
2024 &
Graph, boundary, or both &
Cons2Plan: conditional diffusion generates vector floor plans under various conditions &
Vector output helps, but architectural constraints often still need validation. \\

P. Zeng \cite{zeng2024residential} &
2024 &
Multiple conditions (boundary + functional constraints) &
Diffusion-based synthesis designed to improve controllability under multiple inputs &
Good diversity/control; CAD-grade validity often requires post-processing. \\

M. Abouagour \textit{et al.} \cite{abouagour2025gflan} &
2025 &
Boundary (and door/entry context) &
Factorizes the task: first plan topology, then realize geometry with neural components &
Explicitly separates topology from geometry; still not a formal graph-certification pipeline. \\

\end{longtable}
\endgroup

\section{Gap in the Existing Research}\label{Gap}
\begin{figure}
     \centering
  \includegraphics[scale=0.85]{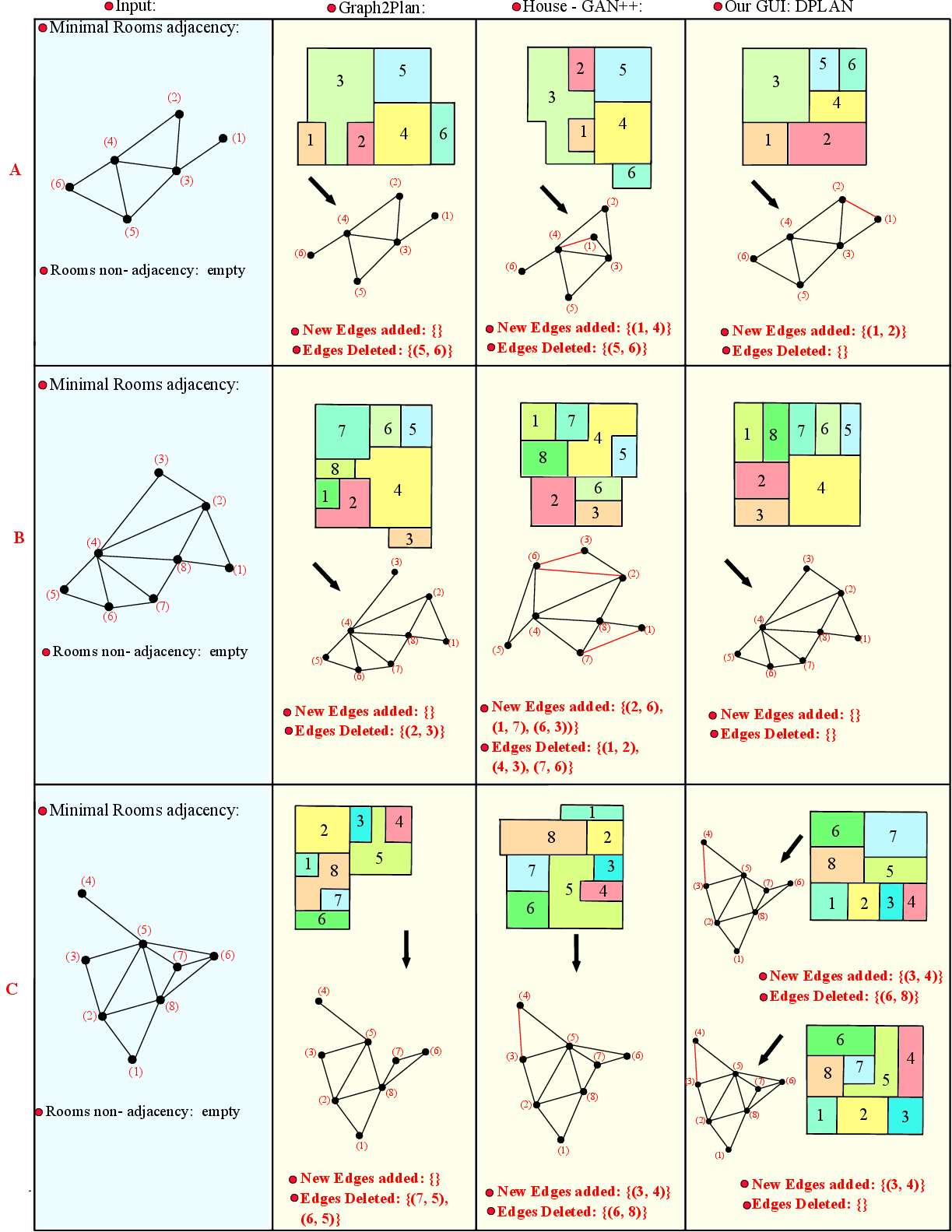}
     \caption{Visual comparison of inputs and generated outputs using our method, HouseGAN++, and Graph2Plan.}
     \label{F-10}
     \end{figure}
In the early phase of architectural design, floor plan requirements are usually incomplete. Designers often specify only a few essential room relationships and outline a rough building boundary, while leaving many other adjacency decisions open. This reflects the exploratory nature of conceptual design, in which spatial arrangements are refined step by step rather than fixed at the outset. In contrast, many automated floorplanning methods assume a much stronger and more structured input. Typically, they require a planar, connected, triangulated adjacency graph that satisfies the conditions for constructing a rectangular/orthogonal floor plan. In these approaches, connectivity between spaces is often inferred from shared boundaries, and explicit non-adjacency constraints are rarely enforced during the initial graph construction. As a result, there is a gap between the flexible, partially specified inputs used in practice and the stricter assumptions made by many computational models.\\
A large body of graph-theoretic \cite{SHEKHAWAT2021103718,https://doi.org/10.1111/cgf.14451,marson2010automatic,mirahmadi2012novel,wang2018customization,hua2016irregular} research has demonstrated how adjacency graphs can be translated into families of feasible floor plans, particularly for rectangular or orthogonal floor plans. Several studies have shown that properly triangulated plane graphs (PTPGs), or PTGs, a closely related planar graph class, serve as a necessary backbone for generating rectangular floor plans. These methods are effective once such a graph is available, but they typically assume that the user already possesses the expertise to provide or construct a valid graph. From an architectural perspective, this places a substantial burden on the designer, who must understand graph planarity, connectivity, triangulation, and the presence or absence of separating triangles to provide a usable input. How to derive such graphs algorithmically from sparse adjacency and non-adjacency information remains largely unexplored.\\
Recent years have also seen significant progress in learning-based approaches for layout/floor plan generation, including systems such as Graph2Plan\cite{10.1145/3386569.3392391}, House-GAN\cite{nauata2020house}, House-GAN++\cite{nauata2021house}, ActFloor\cite{wang2021actfloor}, and WallPlan\cite{sun2022wallplan}, Graph2pix\cite{hangraph2pix}, etc.  These methods leverage large datasets to learn spatial patterns and generate visually plausible layouts under given boundary conditions. While powerful, their behaviour is fundamentally data-driven. Constraint satisfaction is guided by learned priors or soft loss functions rather than explicit combinatorial guarantees. As a result, adjacency requirements, non-adjacency requirements, and structural properties such as triangularity or bi-connectivity may fail to hold, particularly when the input specification is sparse, inconsistent, or deviates from the training distribution (see Figure \ref{F-10}). When such violations occur, these models offer limited repair mechanisms, since they do not reason directly about the underlying door-connectivity graph (see Table~\ref{Compare}). These systems generate a floor plan conditioned on an input relationship graph, but they treat that graph as a prerequisite rather than an object to be constructed further. As a result, when the user specification is sparse, disconnected, or inconsistent, neither learning-based methods provides a mechanism to complete and certify the underlying door graph before geometric layout generation.\\
Taken together, the literature lacks a transparent, principled mechanism that bridges the gap between high-level architectural briefs and downstream floor-plan/layout generation. There is no standard pipeline that repairs incomplete or disconnected specifications, enforces connectivity and bi-connectivity under hard non-adjacency constraints, and upgrades the resulting structure into a properly triangulated plane graph or plane triangulated graph suitable for established rectangular or orthogonal floorplanning algorithms. The present work addresses this missing layer by treating the constraint graph itself as the primary design artifact. Starting from minimum adjacency and explicit non-adjacency information, we construct and certify a PTPG or PTG through a sequence of graph-theoretic augmentations, adding only the necessary edges. In doing so, door connectivity and separation requirements become explicit, verifiable, and controllable objects, rather than incidental by-products of geometric optimization or data-driven synthesis.
\begin{table}[t]
\centering
\caption{Comparison between Graph2Plan, HouseGAN++, and DPLAN}
\label{Compare}
\renewcommand{\arraystretch}{1.2}
\scriptsize
\begin{tabular}{p{2.5cm} p{3.3cm} p{3.3cm} p{3.3cm}}
\toprule
\textbf{Parameter} &
\textbf{Graph2Plan~\cite{10.1145/3386569.3392391}} &
\textbf{HouseGAN++~\cite{nauata2021house}} &
\textbf{DPLAN (Proposed Method)} \\
\midrule

Input specification
& Room connectivity graph with room types and counts, together with a boundary extracted from the dataset.
& Bubble diagram describing room relationships and a coarse program; boundary control is indirect.
& Adjacency graph with an explicit non-adjacency set and a user-defined rectangular outer boundary. \\

\midrule
Adjacency satisfaction
& Attempts to realize the input graph, but generated layouts may introduce or miss certain adjacencies.
& Encourages specified relationships, yet some edges may not be preserved due to generative sampling.
& All required adjacencies are strictly enforced, and all non-adjacent pairs remain separated throughout graph construction and in the final layout. \\

\midrule
Floorplan boundary
& Follows dataset boundaries; minor deviations may occur in generated results.
& Produces data-driven irregular boundaries; direct user control is limited.
& Always fits the layout exactly within the specified rectangular boundary. \\

\midrule
Topological guarantees
& No formal guarantees on planarity, biconnectivity, or removal of separating triangles.
& Focuses on visual realism; rectangular dual realizability is not ensured.
& Explicitly constructs a separating-triangle-free plane triangulation, ensuring the existence of a valid rectangular dual consistent with the constraints. \\

\midrule
Role of data
& Requires large annotated datasets for training; performance depends on training distribution.
& Strongly data-driven; generalization beyond trained building types may be limited.
& Fully algorithmic and independent of training data; applicable without example floorplans. \\

\midrule
Nature of output
& Raster floorplans with extracted room boxes; dimensions reflect learned data patterns.
& Vector layouts resembling professional designs; sizes and proportions follow learned distributions.
& Families of dimensionless rectangular floor plans that can later be dimensioned using optimization or rule-based methods. \\

\midrule
User control and interpretability
& User specifies graph and boundary, but geometric realization is learned and less transparent.
& Designer sketches relationships; internal generative decisions are not directly interpretable.
& Each modification (connectivity, biconnectivity, triangulation, separating-triangle handling) corresponds to a clear algorithmic step, making the process transparent and verifiable. \\

\bottomrule
\end{tabular}
\end{table}

\section{Our Proposed Work}
\label{Our}
We propose an interactive prototype that transforms an input door-connectivity specification into a graph suitable for floor-plan construction. 
From this backbone, the system generates floor plans that can be selected and further refined. Unlike approaches that assume the input constraint graph is fixed in advance, our method treats the evolving graph as the main design object. 
Each modification step is explicit, providing clear insight into how feasibility is established.

\subsection{Inputs}
The user specifies (i) a minimum door-adjacency graph representing only essential connections, (ii) a non-adjacency set representing forbidden connections, and (iii) a rectangular boundary template.
This input format aligns with early-stage briefs while remaining precise enough to support correctness by construction processing.

\subsection{Certified backbone construction}
Our developed prototype implements a staged augmentation pipeline, where each stage has a focused structural goal and can be inspected and validated:
\begin{itemize}
  \item \textbf{Connectivity augmentation:} if the input graph contains multiple connected components(blocks), the system adds the minimum number of edges needed to make it connected, rejecting any candidate edge that violates forbidden/non-adjacency edges.
  \item \textbf{Bi-connectivity augmentation:} the connected graph is augmented to eliminate articulation points. This ensures that the graph remains connected even after removing a single vertex, while still respecting the given non-adjacency constraints.
  \item \textbf{Constrained planar embedding and triangulation:} The bi-connected graph is then embedded in the plane and further refined by adding diagonals that satisfy the given constraints. This process produces a triangulated structure suitable for the subsequent synthesis stage.
  \item \textbf{Handling separating triangles (User mode-specific):} When the construction requires strictly rectangular modules, separating triangles are handled by carefully removing and adding edges so that a rectangular floor plan is possible. In contrast, if more flexible orthogonal floor plans/layouts are allowed, separating triangles are resolved by introducing additional nodes into the graph.
\end{itemize}
\subsection{Interaction and Downstream Synthesis}

After the combinatorial backbone has been formally constructed and verified, i.e., PTPG or PTG, the prototype produces multiple dimensionless floor plan/layout alternatives and enables interactive refinement. The system visually distinguishes between edges defined by the user and those introduced during algorithmic augmentation. Whenever the user modifies adjacency requirements or non-adjacency/forbidden-contact edge pairs, the system immediately checks feasibility and reports whether the updated specification remains valid.\\
This framework addresses a practical limitation in existing approaches. Data-driven methods often generate visually convincing floor plans, but the underlying structural decisions that inform them are not directly transparent. Conversely, traditional graph-based methods provide formal correctness guarantees but typically operate in a fixed, non-interactive workflow. 
In contrast, the proposed prototype keeps the combinatorial structure editable, transparent, and formally validated before any geometric optimization or rendering is performed.
\section{Methodology}
\label{Meth}
This section presents the graph-theoretic framework underlying our approach to architectural floor plan generation, focusing on the algorithmic workflow and practical considerations. The method starts with user-provided inputs: a door-based adjacency graph and explicit non-adjacency constraints, which together describe essential spatial relationships. To obtain a connected, structurally valid representation suitable for floor-plan synthesis, the graph is augmented with additional edges corresponding to interior wall adjacencies. A key aspect of this stage is the treatment of separating triangles, as their presence directly affects the geometric realizability of the resulting floor plan. When the objective is to generate strictly rectangular floor plans, the augmented graph must remain free of separating triangles while respecting all input constraints, since such configurations ensure compatibility with rectangular modules. In contrast, when non-rectangular room shapes are allowed, separating triangles are handled by introducing additional vertices that correspond to new spatial modules, enabling the synthesis of orthogonal floor plans.\\
The proposed framework is structured as a five-stage pipeline, where each stage performs a specific structural or geometric refinement. Only the stages necessary for the user’s design objectives are executed. The process begins with the user-provided minimal constraint graph. The first step is to check whether the graph is connected. If it is disconnected, Algorithm~\ref{OF-CON} adds the minimum required edges to obtain a connected graph while strictly respecting all specified non-adjacency constraints. If the graph is already connected, the method proceeds to the next stage. In the second stage, Algorithm~\ref{BY} augments the connected graph to make it biconnected. This ensures that the graph remains connected after the removal of any single vertex and that all required adjacency relations are preserved during this augmentation. The resulting biconnected graph is then converted into the intermediate representation needed for further processing. Subsequently, triangular configurations are treated according to the desired module geometry. When strictly rectangular modules are required, separating triangles are removed in a constraint-preserving manner so that all adjacency conditions remain satisfied.\\
After these steps, the output is either a plane triangulated graph (PTG) or a properly triangulated plane graph (PTPG), depending on the chosen layout model. This certified structure forms the input to the floorplan synthesis stage. The details of each stage are described in the following section
\subsection{Minimal Connectivity to 1-connected Graph.}\label{M-C}
\begin{algorithm}
\caption{: $Connectivity$ $with$ $Non$-$adjacency$}
\label{OF-CON}
\begin{algorithmic}[1]
\Function{$Connectivity$}{$G(V,E)$, $non{\_}adj{\_}set$}
  \State $augmentation{\_}edges \gets \{\}$.
  \State $components \gets Connected{\_}Components(G)$, $m$ $=$ $|{components}|$. 
  \State $outerfaces \gets \{\}$. \Comment{For each component $i$, store vertices lying on its outer face boundary}
  \If{$length(components)$ $==$ $1$}
  \State  \Return \textcolor{red}{$G_{C}(V_C,E_C)$ = $G(V,E)$}.
  \Else
  
  \For{$i \gets 0$ \textbf{to} $length(components)-1$}  \Comment{Compute outerface boundary of each component via planar embedding}  
     \State $outerfaces[i] \gets Extract\_Outerface(components[i])$.
  \EndFor
  \State $possible{\_}edges \gets Generate\_Possible\_Edges(outerfaces, non{\_}adj{\_}set)$.
  \State Initialize $root[i] \gets i$ for $i = 0.....|components|-1$.
 \For{$(u,v) \in possible{\_}edges$} \Comment{For each vertex $u$, $component\_ID(u)$ gives the ID of the component containing $u$.} 
     \State $c_u \gets Component\_ID(u)$, $c_v \gets Component\_ID(v)$.
     \If{$Find\_Comp(c_u) \neq Find\_Comp(c_v)$}
        \State $Union\_Comp(c_u,c_v)$.
        \State $augmentation{\_}edges \gets augmentation{\_}edges \cup \{(u,v)\}$, $m\gets m - 1$.
     \EndIf
  \EndFor

  \State $E \gets E \cup augmentation{\_}edges$.
  \State \Return \textcolor{red}{$G_{C}(V_C,E_C)$ = $G(V,E)$, $m$}.
  \EndIf
 \Function{$Extract\_Outerface$}{$component$}
  \State \Return $\{\, v \in V_C \;\mid\; \text{$v$ lies on the boundary of $component$} \,\}$.

\EndFunction 
\Function{$Generate\_Possible\_Edges$}{$outerfaces$, $non{\_}adj{\_}set$}
  \State $possible{\_}edges \gets \{\}$.
  \For{$i \gets 0$ \textbf{to} $length(outerfaces)-1$}
    \For{$j \gets i+1$ \textbf{to} $length(outerfaces)-1$}
      \For{$u \in outerfaces[i]$}
        \For{$v \in outerfaces[j]$}
          \If{$(u,v) \notin non{\_}adj{\_}set$}
            \State $possible{\_}edges \gets possible{\_}edges \cup \{(u,v)\}$.
          \EndIf
        \EndFor
      \EndFor
    \EndFor
  \EndFor
  \State \Return $possible{\_}edges$.
\EndFunction
\Function{$Find\_Comp$}{$u$}
    \If{$root[u] \neq u$}
       \State $root[u] \gets Find\_Comp(root[u])$.
    \EndIf
    \State \Return $root[u]$.
  \EndFunction
\Function{$Union\_Comp$}{$u,v$}
    \State $root_u \gets Find\_Comp(u), root_v \gets Find\_Comp(v)$.
    \If{$root_u \neq root_v$}
       \State $root[root_v] \gets root_u$.
    \EndIf
  \EndFunction
  \EndFunction
  \If{$m>1$} \Comment{(fallback): ignore $non$-$adj$-$constraints$ and retry replacements}
   \State $G_{C}(V_C,E_C) \gets Connectivity$($G_C(V_C,E_C)$, $non{\_}adj{\_}set$ = $\phi$). 
  \EndIf

\end{algorithmic}
\end{algorithm}
Connectivity is a basic requirement for floor-plan synthesis. If the constraint graph is disconnected, the rooms form separate components that cannot be realized as a single layout. Therefore, the graph must be at least connected (1-connected) so that all spaces belong to one coherent system. A connected graph provides the foundation for later steps such as biconnectivity and triangulation, which depend on a unified structure to produce a consistent geometric layout. Earlier work, including Roth et al.~\cite{roth1982turning}, also notes that valid floor plans are derived from connected and subsequently triangulated layout graphs. When users specify only door-based relations, the resulting graph may not be connected. In such cases, we introduce additional wall-adjacency edges to ensure connectivity before further processing. This augmentation is performed carefully so that no given non-adjacency constraints are violated. Algorithm~\ref{OF-CON} formalizes this step by inserting edges only when necessary.\\\\
\textbf{Illustration of Algorithm \ref{OF-CON} (see Figure \ref{F-1}):}\\\\
\textbf{1. Steps 1–6}: begin by taking the input graph $G(V, E)$ together with the user-specified non-adjacency constraints shown in  Figure \ref{F-1}(a). We then determine how the graph is partitioned by identifying all of its connected components. When this check reveals that the graph is already fully connected, meaning only a single component is present, the process stops, and the original graph is returned. This early termination prevents unnecessary augmentation and ensures that additional processing occurs only when the input actually contains multiple disconnected parts.\\\\
\textbf{2. Steps 7–11:} are invoked when the input graph is divided into more than one connected component. In this stage, each component is examined individually, beginning with the extraction of the vertices that appear on its outer boundary, shown in Figure \ref{F-1}(b). For the example ($G(V,E)$) under consideration, these boundary sets are $\{2, 6, 8, 7, 0, 13, 1\}$, $\{9, 11, 12, 10\}$, and $\{3, 4, 5\}$. Using these outerface vertices, the algorithm then forms all possible cross-component pairs and removes any pair that conflicts with the specified non-adjacency constraints. The pairs that remain after this filtering step represent the viable edges through which disconnected components may be linked. For the graph $G(V, E)$ in the example, this procedure yields 51 such admissible connections/edges while satisfying non-adjacency constraints (see Figure \ref{F-1}c):\\ \textbf{[21(outerface[1]--outerface[2]) + 18(outerface[1]--outerface[3]) + 12(outerface[2]-outerface[3])] = 51}, which are stored in the $possible\_edges$ set.\\\\
\textbf{3.  Steps 12–16:} instantiate a union–find data structure in which each connected component is initially represented as its own set (for example, all vertices of component[1] = $\{0,2,7,8,6,1,13,14\}$ are assigned the same component index; see Figure \ref{F-1}(b)). The algorithm then processes the acceptable/candidate edges from $possible\_edge\_set$ sequentially: for each edge $(u,v)$, it checks whether $u$ and $v$ belong to distinct union–find sets; if so, it accepts $ (u, v) $ as an augmentation edge and merges the two sets/components. For instance, $(2,9)$ is accepted because vertex 2 belongs to component[1] and 9 to component[2], and the pair does not violate non-adjacency constraints; after this union, a subsequent candidate edge such as $(2,11)$ is skipped because 11 now belongs to the same merged set as 2. This accept-and-merge process continues for the remaining candidate edges until all components are combined into a single connected graph. Only those edges that connect previously disjoint components are stored in the $augmentation\_edges\_set$.\\\\
\textbf{4. Steps 17–18:} conclude Algorithm \ref{OF-CON} by incorporating the selected augmentation edges into the existing edge set $E$, thereby completing the connectivity of the graph. Once these edges have been inserted, the algorithm produces the final connected graph $G_C$, shown in Figures \ref{F-1}(c) and \ref{F-1}(d), which reflects the fully unified structure resulting from the earlier augmentation steps.\\\\
\textbf{5. Steps 19–37:} of Algorithm \ref{OF-CON} rely on several helper/additional functions that perform the supporting work necessary for the main method to run smoothly. The $Extract\_Outerface$ function generates the vertices on the boundary cycle of each component in the input plane graph, as illustrated in Figure \ref{F-1}(b) (see outerface[1], outerface[2], outerface[3]). The $Generate\_Possible\_Edges$ function then checks all cross-component pairs drawn from these boundary vertices and removes any pair that violates the non-adjacency constraints. The $union$ function keeps track of the component to which each vertex belongs, using the operations $Find_Comp$ and $Union_Comp$; for example, once an edge in the $augmentation\_edges$ set merges components 1 and 2, any later edge whose endpoints lie in these two components is skipped, since they have already been joined. Every edge in the $augmentation\_edges$ set satisfies the non-adjacency constraints, and each chosen edge genuinely reduces the number of disconnected components. Through these checks, Algorithm \ref{OF-CON} ultimately produces a valid connected graph $G_C$, as shown in Figure \ref{F-1}(a–d).\\\\
\textbf{5. Steps 38–39:} If the current constraints do not allow additional edges to be inserted to achieve full connectivity, a fallback step is applied. This step temporarily ignores non-adjacency constraints and re-executes the connectivity function to merge all components.\\\\
By applying Algorithm \ref{OF-CON}, we obtain a connected graph $G_C$ derived from the user-specified input graph $G$ together with the stated non-adjacency constraints (see Figure \ref{F-1}(a–d)). 
Since we are treating the input graph as a plane drawing, the outer face of each connected component is well-defined. In Algorithm~\ref{OF-CON}, the vertices incident to the outer face are identified prior to any augmentation, and all additional edges are inserted only afterward; consequently, the extraction of outer-face vertices is independent of the augmentation step and does not require recomputation or planarity verification. Once the final connected graph $G_C$ is obtained, a plane embedding of $G_C$ serves as input to the subsequent algorithm, which further processes it to achieve bi-connectivity.
\begin{figure}
\centering
    \includegraphics[width = 0.95 \textwidth]{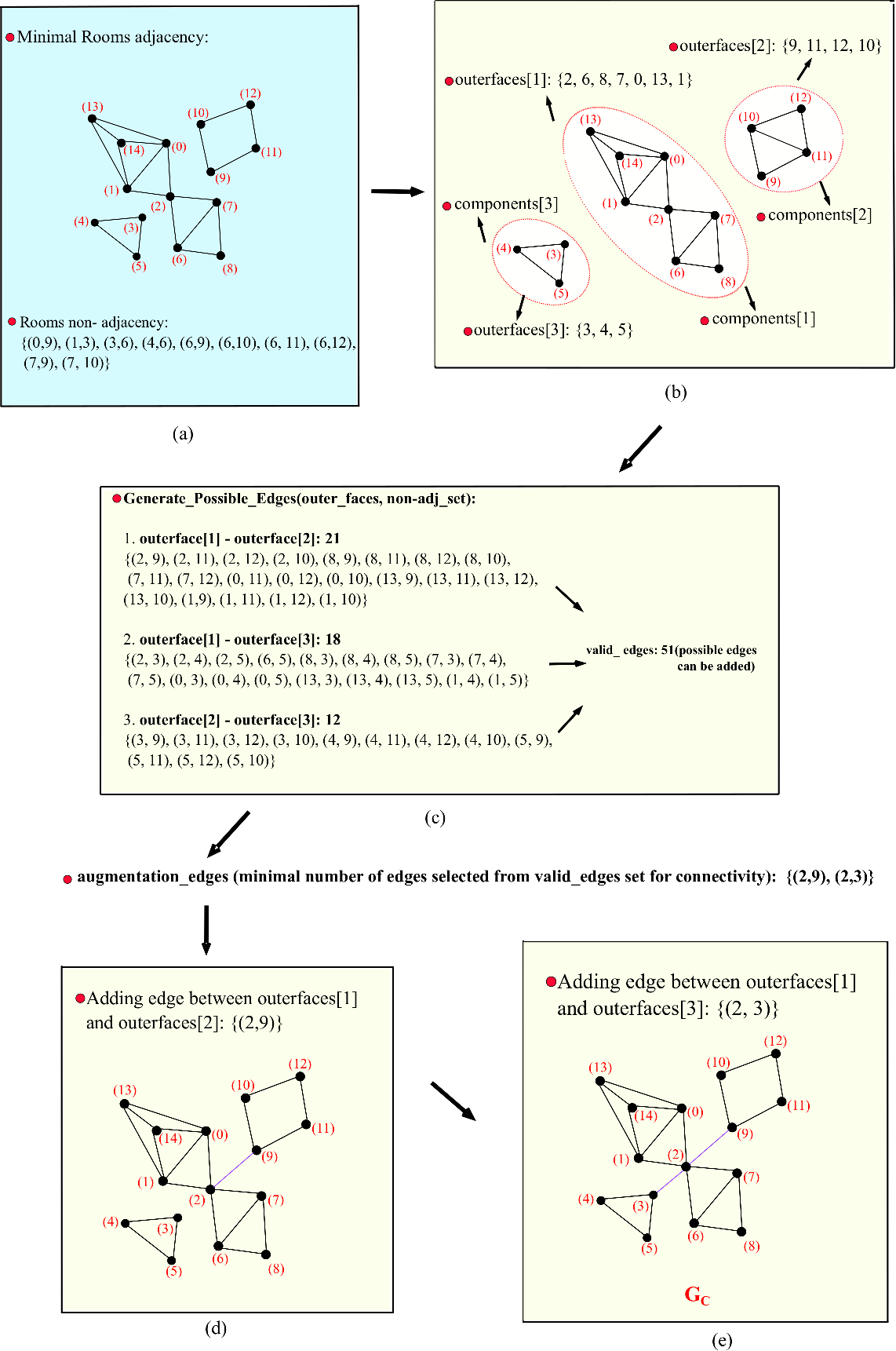}
    \caption{(a-e) Construction of a connected graph $G_C$ while preserving the specified input constraints.}
    \label{F-1}
\end{figure}
\subsection{Connectivity graph to Bi-connected graph}\label{C-BY}
\begin{algorithm}
\caption{: $Biconnectivity$ $with$ $non-adjacency$ }
\label{BY}
\algnewcommand\To{\textbf{to }}
  \begin{algorithmic}[1]
  \Function{$Biconnectivity$}{$G_C(V_C,E_C)$,$articulation{\_}points$, $non{\_}adj{\_}set$}
  \State $selected{\_}edges$ $=$ \{\}.
   \For{$v \in articulation{\_}points$}
  \State $blocks \gets Blocks$($G_C$,$v$).
  \State $valid{\_}edges \gets Valid{\_}Edges$($blocks$, $non{\_}adj{\_}set$).
  \State $bcn{\_}edges \gets Connect{\_}Blocks$($blocks$,$valid{\_}edges$).
  \State $selected{\_}edges \gets selected{\_}edges$ $\cup$ $bcn{\_}edges$.
  \EndFor
        \State $E \gets E$ $\cup$ $selected{\_}edges$.
  \State \Return \textcolor{red}{$G_{B}(V_B,E_B)$ $=$ $G_C(V_C,E_C)$}.

  \Function{$Blocks$}{$G$,$r$} 
   \State $G'(V',E')$ $=$ $G(V,E)$.
   \State $\{u_1,u_2,...., u_k\} \gets nbd(v) $, $V' \gets V'$ $-$ $\{v\}$.  \Comment{nbd($v$) = vertices adjacent to $v$ in $G$}
   \For{$i \gets 1$ \To $k$}
   \State $E' \gets E'$ $-$ $\{(v,u_i)\}$.
   \EndFor
   \State $Components \gets Connected{\_}components(G')$. \Comment{$Connected{\_}components(G')$: Returns $component$: generator of sets.}
   \For{$i \gets 0$ \textbf{to} $length(Components)-1$}  \Comment{Compute outerface boundary of each component via planar embedding}  
     \State $blocks[i] \gets Extract\_Outerface(Components[i])$.
  \EndFor
    \State \Return $blocks$.  \Comment{$blocks$ = \{$ B_1,B_2,...., B_k$\}.}
    \EndFunction
    \Function{$Valid{\_}Edges$}{$blocks$, $non{\_}adj{\_}set$} 
   \State $valid{\_}edges$ = \{\}.
    \For{$i \gets 0$ \To $length(blocks)-1$}
    \For{$j \gets i + 1$ \To $length(blocks)-1$}
    \State $BL_1\gets blocks[i]$, $BL_2 \gets blocks[j]$ 
    \For{ $u \in BL_1$}
    \For{ $v \in BL_2$}
    \If{$(u,v) \not\in non{\_}adj{\_}set$}
    \State $valid{\_}edges \gets valid{\_}edges$ $\cup$ $\{(u,v)\}$.
    \EndIf
    \EndFor
       \EndFor
    \EndFor
       \EndFor
       \State \Return $valid{\_}edges$. 
    \EndFunction
    \Function{$Connect{\_}Blocks$}{$blocks$,$valid{\_}edges$, $non{\_}adj{\_}set$} 
  \State $block{\_}graph \gets Empty{\_}Map()$.\Comment{Maps a block index to its corresponding set of adjacent blocks.}
  \State $b \gets length(blocks)$.
  \State $node{\_}block \gets Empty{\_}Map()$.\Comment{Associates each node with the block it belongs to.} 
  \For{$i \gets 0$ \To $b-1$}
  \For{$u \in blocks[b]$}
  \State $node{\_}block[u] \gets b$.
  \EndFor
  \EndFor
  \State $parent \gets Array[0,.....,b-1]$.
   \For{$i \gets 0$ \To $b-1$}
   \State $parent[i] \gets i$. \Comment{Every block initially serves as its own representative.}
   \EndFor
   \Function{$Find[u]$}{} \Comment{To find the root of a block in connected components.}
   \If{$parent[u] \neq u$}
    \State $parent[u] = Find[parent[u]]$.
   \EndIf
   \State \Return $parent[u]$.
   \EndFunction
    \Function{$Union$}{$u,v$} \Comment{Union function to connect two blocks.}
    \State $root_u = Find[u]$, $root{\_}v = Find[v]$.
    \If{$root{\_}u \neq root{\_}v$}
     \State $parent[root{\_}v] = root{\_}u$.
    \EndIf
   \EndFunction
   \State $valid{\_}edges.sort()$.  \Comment{ Sort valid edges by weight (default: unit weight).}
   \State $selected{\_}edges$ $=$ \{\}.
   \For{$u, v  \in valid{\_}edges$}
   \State $block{\_}u = node{\_}block[u]$, $block{\_}v = node{\_}block[v]$.
   \If{$Find[block{\_}u] \neq Find[block{\_}u]$}
   \State $Union (block{\_}u, block{\_}v)$. 
   \State $selected{\_}edges \gets selected{\_}edges$ $\cup$ $\{(u,v)\}$.
   \State $block{\_}graph[block{\_}u] \gets block{\_}graph[block{\_}u]$ $\cup$ $\{block{\_}v\}$.
   \State $block{\_}graph[block{\_}v] \gets block{\_}graph[block{\_}v]$ $\cup$ $\{block{\_}u\}$.
   \EndIf
   \EndFor
\State $connected\_blocks \gets \left|\{\, Find[u] \mid u \in [0, b-1] \}\right|$. \Comment{Check if all blocks are connected.} 
 \If{$connected\_blocks > 1$} \Comment{fallback: Add non-adjacent edges if necessary to connect all blocks.}
 \For{$u,v \in non{\_}adj{\_}set$ $and$ $u,v \in blocks$} 
 \State $block{\_}u = node{\_}block[u]$, $block{\_}v = node{\_}block[v]$.
  \If{$Find[block{\_}u] \neq Find[block{\_}u]$}
   \State $Union (block{\_}u, block{\_}v)$, $selected{\_}edges \gets selected{\_}edges$ $\cup$ $\{(u,v)\}$.
   \State $block{\_}graph[block{\_}u] \gets block{\_}graph[block{\_}u]$ $\cup$ $\{block{\_}v\}$, $block{\_}graph[block{\_}v] \gets block{\_}graph[block{\_}v]$ $\cup$ $\{block{\_}u\}$.
   \EndIf
 \EndFor
\EndIf
\State \Return $selected{\_}edges$.
    \EndFunction
 \EndFunction
\end{algorithmic} 
\end{algorithm}
Koźmiński et al. \cite{kozminski1985rectangular} showed that generating a rectangular or orthogonal floor plan requires the underlying layout graph to be both bi-connected and triangulated. Consequently, when deriving a rectangular floor plan from a user-specified minimal connectivity graph, the first step is to augment the previously obtained one-connected graph (from Algorithm \ref{OF-CON}) so that it becomes bi-connected, after which triangulation can be performed. Our proposed algorithm (Algorithm \ref{BY}) provides a procedure for constructing a bi-connected graph while respecting user-defined constraints. Its objective is to introduce additional adjacencies only where necessary, ensuring that the graph remains connected even after the removal of any single vertex. Our proposed Algorithm \ref{BY} takes as input a plane-connected graph $G_C$ and articulation points in $G_C$ (using the approach proposed by Tarjan \cite{tarjan1972depth}), identifies the blocks associated with them, and then selectively adds edges within these blocks. Throughout this process, it adheres strictly to the supplied Non-Adjacency list, ensuring that no non-adjacency pairs are ever connected. To demonstrate the algorithm's operation, we illustrate Algorithm \ref{BY} step-by-step using a representative example with minimal connectivity.\\\\
\textbf{Illustration of Algorithm \ref{BY} (see Figure \ref{F-2}):}\\\\
\textbf{1. Steps 1--7:} We begin with the plane-connected graph $G_C(V_C, E_C)$ obtained from Algorithm~\ref{OF-CON}, together with the list of articulation points and the set of user-specified non-adjacency constraints (see Figure~\ref{F-2}(a)). The idea is to add a minimal extra edges so that the graph becomes bi-connected, while respecting the input constraints as far as possible. An empty set $selected\_edges$ is created to store all final $bcn\_edges$. The algorithm then scans the list of articulation points one by one. For a fixed articulation point $v$, the procedure \texttt{Blocks$(G_C,v)$} is called to see how the neighbourhood of $v$ breaks into connected components when $v$ is removed, and \texttt{Valid\_Edges} lists all permitted candidate edges between those component vertices while satisfying non-adjacency constraints. Finally, \texttt{Connect\_Blocks} chooses a suitable subset of these candidate edges (i.e., the minimal number of edges added to make the graph bi-connected), and the chosen edges are added to $selected\_edges$.\\\\
\textbf{2. Steps 8--9:} After all articulation points have been processed, the set $selected\_edges$ contains every new edge that we intend to insert. These are now added to the original edge set $E_C$ to form $E' = E_C \cup selected\_edges$, and the resulting graph is denoted by $G_B(V_B, E_B)$. In the running example, the red edges appearing in Figure~\ref{F-2}(e) are exactly those stored in $selected\_edges$. The graph $G_B$ is the bi-connected graph that will be passed to the triangular phase construction step.\\\\
\textbf{3. Steps 10--18:} The helper function \texttt{Blocks$(G_C,v)$} determines how the graph around $v$ is organized. Conceptually, we delete $v$ and all edges incident to it from $G_C$, and then compute the connected components of the remaining graph $G'$. For each such component, we will compute its outerface vertices and record them as a block, so we obtain a collection
\[
blocks = \{B_1,B_2,\ldots,B_\ell\}.
\]
These blocks represent the regions that would separate if $v$ were removed. For the example in Figure~\ref{F-2}, when $v=9$ we obtain two blocks
\[
blocks[1] = \{10,11,12\}, \qquad
blocks[2] = \{0,13,1,2,6,8,7,3,4,5\},
\]
as shown in Figure~\ref{F-2}(b). The block decompositions for $v=3$ and $v=2$ are displayed in Figures~\ref{F-2}(c) and~\ref{F-2}(d).\\\\
\textbf{4. Steps 19--28:} Once the blocks are known, the function \texttt{Valid\_Edges(blocks, non\_adj\_set)} builds a pool of valid edges that do not break the non-adjacency rules. For every unordered pair of distinct blocks $B_i$ and $B_j$, it looks at all vertex pairs $(u,w)$ with $u \in B_i$ and $w \in B_j$. If $(u,w)$ appears in the non-adjacency set, that pair is discarded; otherwise, it is inserted into the set $valid\_edges$. In this way, $valid\_edges$ collects every allowed way of joining different blocks around the same articulation point. The edges that are finally drawn in Figures~\ref{F-2}(b)--(d) (such as $(0,10)$ for vertex $9$ or $(4,2)$ for vertex $3$) come from this pool.\\\\
\textbf{5. Steps 29--63:} The primary task of determining which edges should be retained is carried out by the function \texttt{$Connect\_Blocks$}. Initially, each block is placed in its own set in a union-find data structure, and for each vertex, we store the index of the block it belongs to. The edges in $valid\_edges$ are then inspected one by one (since all edges have equal weight in our setting, any fixed ordering is sufficient). For each candidate edge $(u,w)$, we identify the blocks $block\_u$ and $block\_w$ that contain $u$ and $w$. If these blocks currently reside in distinct union find sets, the edge $(u,w)$ is accepted: it is appended to $selected\_edges$, the two sets are merged via a \texttt{Union} operation, and the auxiliary block graph is updated accordingly. This accept-then-merge process continues until no remaining candidate edge can reduce the number of disconnected block sets. If multiple block sets are still disconnected at this point, the fallback mechanism of our proposed algorithm makes another pass over all admissible pairs to connect the remaining components of the block graph, while ignoring the non-adjacency constraints.\\
In the Example \ref{F-2}, for articulation point $9$ there are $27$ $valid_edges$ set (It is generated in the same manner that we show during identifying $valid_edges$ in Algorithm \ref{OF-CON}: see Figure \ref{F-1}c). The edge $(0,10)$ unifies as blocks $blocks[1]$ and $blocks[2]$ (Figure~\ref{F-2}(b)). For instance, $(10,0)$ (belongs to $valid_edges$) is admitted into $bcn\_edges$ because vertex $10$ belongs to blocks[1] and vertex $0$ belongs to blocks[2], and this pairing does not violate any non-adjacency rules defined by the user. After this merge, the next candidate edge, such as $(10,13)$ (which belongs to $valid\_edges$), is rejected because $10$ and $13$ now lie in the same merged set. This merging process continues across the remaining candidate edges in the $valid\_edges$ set until all components are combined into a single connected graph. Only the edges that connect different components are added to $bcn\_edges$. \\
Similarly, for articulation point $3$, the edge $(4,2)$ performs the merging of its two adjacent blocks (see Figure~\ref{F-2}(c)). A similar process for articulation point $2$ yields a small collection of edges such as $(6,1)$, $(6,5)$, and $(7,11)$ that ensure all blocks incident to vertex $2$ ultimately merge into one merged block (Figure~\ref{F-2}(d)). Each edge stored in $selected\_edges$ therefore satisfies the non-adjacency rules, joins vertices from different blocks of some articulation point, and decreases the number of disconnected block sets.
Combining all steps, Algorithm~\ref{BY} converts the connected graph $G_C$ generated by Algorithm~\ref{OF-CON} into the bi-connected graph $G_B$ shown in Figure~\ref{F-2}(e). The resulting graph $G_B$ contains no articulation points and thus serves as the robust starting structure for the subsequent triangulation algorithm.\\\\
By applying Algorithm \ref{BY}, we obtain a bi-connected graph $G_B$ derived from the input graph $G_C$ together with the stated non-adjacency constraints (see Figure \ref{F-2}(a–e)). 
Since we are treating the input graph as a plane drawing, the outer face of each connected component is well-defined. In Algorithm~\ref{BY}, the vertices incident to the outer face are identified prior to any augmentation, and all additional edges are inserted only afterward; consequently, the extraction of outer-face vertices is independent of the augmentation step and does not require recomputation or planarity verification. Once the final bi-connected graph $G_B$ is obtained, a plane embedding of $G_B$ serves as the input for the subsequent algorithm, which further processes it to achieve triangulation.
\begin{figure}
\centering
    \includegraphics[width = 0.97 \textwidth]{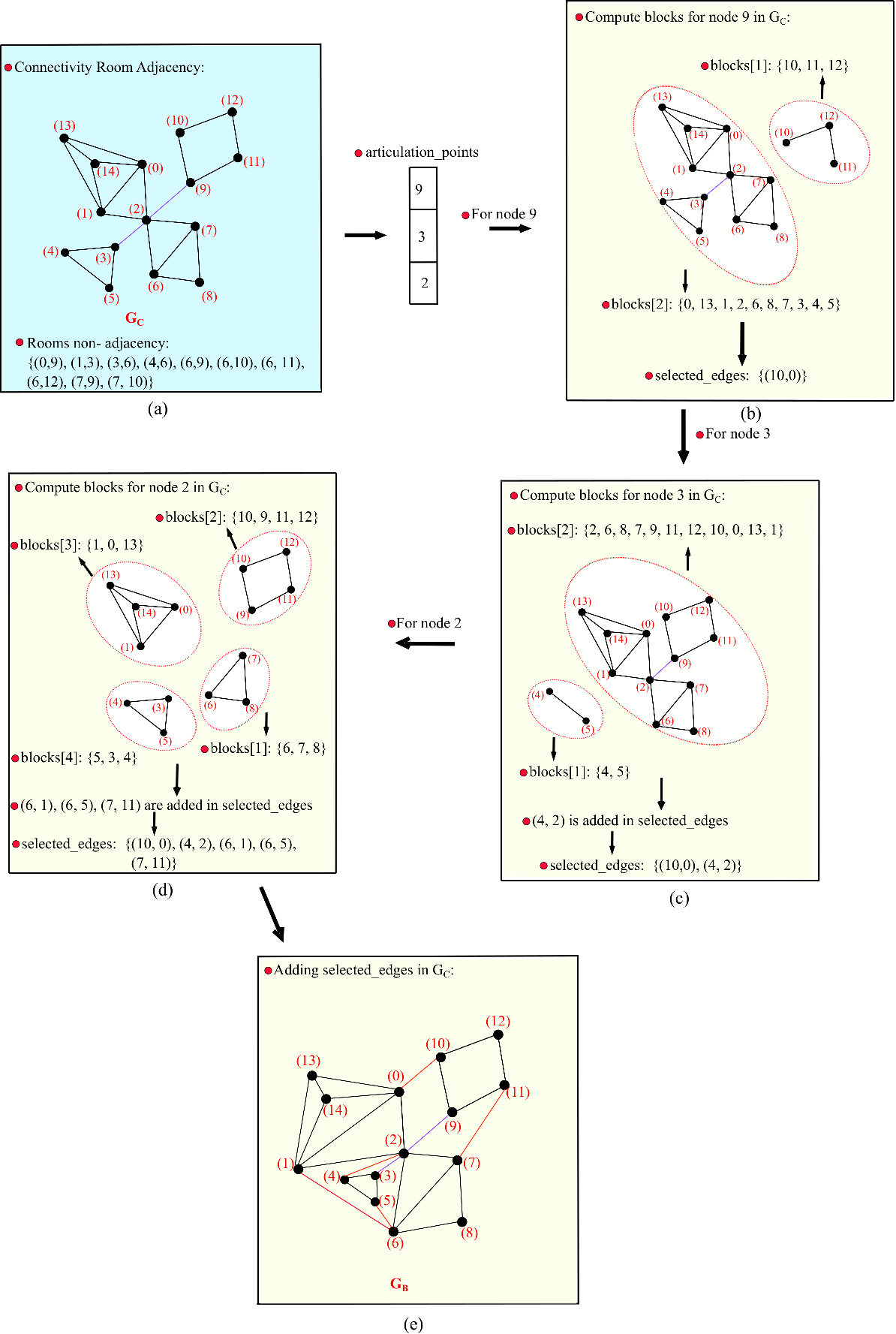}
    \caption{(a-e) Construction of a Bi-connectivity graph $G_B$ while preserving the specified input constraints.}
    \label{F-2}
\end{figure}
\subsection{Bi-connected to Triangulation}\label{BY-TRI}
\begin{algorithm}
\caption{: $Triangulation$ $with$ $non$-$adjacency$}
\label{TRI}
\algnewcommand\To{\textbf{to }}
\begin{algorithmic}[1]

\Function{$Triangulation$}{$G_B(V_B, E_B), non\_adj\_set$}

\State $faces \gets Find\_Faces(G_B)$.
\State $non\_tri\_faces$ $\gets []$.
\For{each $face \in faces$}\Comment{$Vert(face)$: Returns the list of vertices forming the face.}
    \State $vertices \gets Vert(face)$. \Comment{$Is\-\_Exterior\_Face$: Checks if the given vertices form the outer boundary of the graph.}
    \If{$vertices > 3$ and $Is\_Exterior\_Face(vertices, G_B) = false$}
\State $non\_tri\_faces.append(vertices)$.
    \EndIf
  \EndFor
\State $Triangulate\_edges \gets []$.
\State $Triangulate\_edges \gets Clip\_Triangulation(non\_tri\_faces, G_B, non\_adj\_set)$.
 \State $E \gets E \cup Triangulate{\_}edges$.
   \State \Return \textcolor{red}{$G_{T}(V_T,E_T)$ $=$ $G_{B}(V_B,E_B)$}.
\Function{$Find\_Faces$}{$G_B(V,E)$}
  \State $faces \gets []$, $E'$ $=$ $E$.
    \While{$E' \neq \varnothing$}
    \State Choose $s \in E'$.
    \State $P \gets [s]$.
    \State $c \gets s$.
    \While{true}  \Comment{$Get\_Next\_Edge$: Given $e = (u,v)$, returns the next edge $(v,w)$ around $v$ in $G_B$.}
      \State $n \gets \text{$Get\_Next\_Edge$}(c, G_B)$.
      \If{$n = s$}
        \State $\text{faces.append}(P)$.
        \State \textbf{break}
      \Else
        \State $P.\text{append}(n)$.
        \State $c \gets n$.
        \State $E'.\text{remove}(n)$.
      \EndIf
    \EndWhile
  \EndWhile
  \State \Return $faces$.
\EndFunction
\Function{$Clip\_Triangulation$}{$non\_tri\_faces, G_B, non\_adj\_set$}
  \State $new\_edges \gets []$.
  \For{each $face \in non\_tri\_faces$}
    \State $vertices \gets Vert(face)$. \Comment{$Vert(face)$: Returns the ordered list of vertices forming the face.}
    \While{$|vertices| > 3$}
      \State $clip\_found \gets false$. \Comment{$Clip\_found$ is a boolean variable}
      \For{$i \gets 0$ \To $|vertices|-1$}
        \State $(a,b,c) \gets Consecutive(vertices,i)$.
        \If{$Is\_Valid\_Diagonal$($a, c, vertices, face, non\_adj\_set$)}
           \State $new\_edges \gets new\_edges.append((a,c))$.
         \State $vertices \gets vertices.remove(b)$.
          \State $clip\_found \gets true$.
          \State \textbf{break}.
        \EndIf
      \EndFor
      \If{$clip\_found == false$} \Comment{fallback: To connect all blocks ignore non-adjacency constraints.}
        \State $(a,b,c) \gets Consecutive(vertices,0)$.
        \State $new\_edges \gets new\_edges.append((a,c))$.
         \State $vertices \gets vertices.remove(b)$.
      \EndIf
    \EndWhile
  \EndFor
  \State \Return $new\_edges$.
\EndFunction
 \Function{$Consecutive$}{$vertices,i$} \Comment{$vertices = (v_1, v_2, \dots, v_l)$ is an ordered list of vertices}
  \State $a \gets v_i$.
  \State $b \gets v_{(i+1) \bmod l}$.
  \State $c \gets v_{(i+2) \bmod l}$.
  \State \Return $(a,b,c)$.
\EndFunction

\Function{$Is\_Valid\_Diagonal$}{$a, c, vertices, face, non\_adj\_set$}

  \If{$(a,c) \in non\_adj\_set$}
    \State \Return false.
  \EndIf
 
  \For{each edge $(v_i, v_{i+1}) \subseteq vertices$}  \Comment{Check that diagonal does not intersect any $face$ edges}
    \If{$(a,c)$ intersects $(v_i, v_{i+1})$ \textbf{and} $(v_i,v_{i+1}) \neq (a,b)$, $(b,c)$}
      \State \Return false.
    \EndIf
  \EndFor
  \If{$(a,c) \in \mathrm{Diag}(vertices)$ \textbf{ and } $(a,c) \subseteq \mathrm{int}(vertices)$}
 \Comment{Check that diagonal is inside the polygon face.}
    \State \Return true. \Comment{$\mathrm{Diag}(vertices)$ denotes the set of all possible diagonals of polygon $face$.}
  \Else \Comment{$\mathrm{int}(vertices)$ denotes the interior region of polygon $face$.}
    \State \Return false
  \EndIf
\EndFunction
\EndFunction
\end{algorithmic}
\end{algorithm}
 \begin{figure}
\centering
    \includegraphics[width = .5 \textwidth]{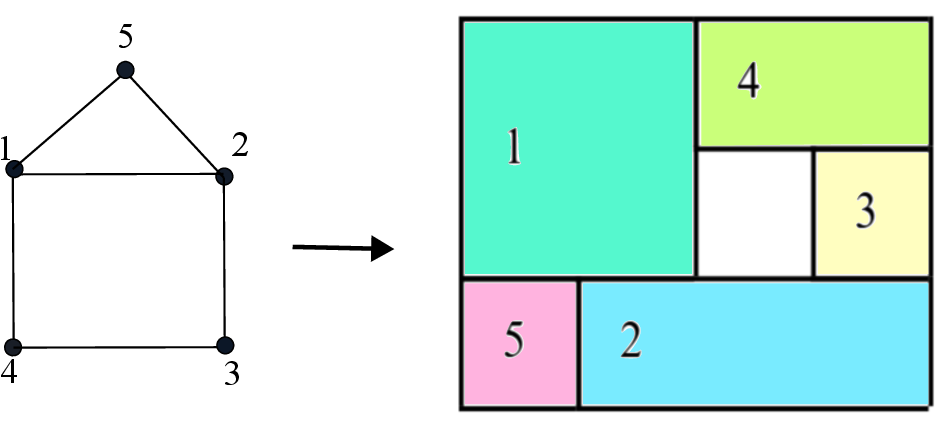}
    \caption{Additional space is created in the floor plan corresponding to the graph with a non-triangular face (1,2,3,4).}
    \label{EX-1}
\end{figure}
After constructing a bi-connected graph from the user-specified door connectivity, the next stage performs triangulation to introduce interior wall adjacencies, eliminating unused space within the rectangular floor plan. Failure to triangulate, i.e., leaving any face non-triangular, can cause the floor plan/layout generation procedure to create voids or inefficiencies (see Figure \ref{EX-1}). To address this, we proposed Algorithm \ref{TRI} that triangulates a bi-connected graph while respecting user-defined non-adjacency constraints. The method compiles all non-triangular faces and non-adjacent edges, then iteratively decomposes each face into triangles. Using an ear-clipping strategy inspired by Mei et al. \cite{mei2012ear}, it attempts to insert diagonals that neither violate user-imposed adjacency restrictions nor introduce disallowed/non-adjacency edges. When such valid diagonals exist, they are added to the graph, incrementally refining spatial relationships as illustrated in Figure \ref{F-3}, where six interior adjacencies are introduced. If a face admits no diagonal other than those explicitly prohibited non-adjacency edges, the algorithm temporarily relaxes these restrictions to ensure the graph becomes fully triangulated. This fallback mechanism guarantees a complete triangulation while preserving user constraints whenever feasible. To demonstrate the algorithm's operation, we illustrate Algorithm \ref{TRI} step-by-step using a representative Bi-connectivity graph as input with respect to an example \ref{F-3}.\\\\
\textbf{Illustration of Algorithm \ref{TRI} (see Figure \ref{F-3}):}\\\\
\textbf{1. Steps 1--7:} Algorithm~\ref{TRI} takes as input the bi-connected room adjacency graph $G_B(V_B,E_B)$ produced by Algorithm~\ref{BY}, together with the input non-adjacency constraints (see Figure~\ref{F-3}(a)). The first task is to identify all faces of the input plane graph $G_B$ that are not yet triangles. The procedure \texttt{Find\_Faces$(G_B)$} is called, and its output is stored in the list \texttt{faces}. For each face, the ordered boundary vertices are obtained via \texttt{Vert(face)}. Whenever the boundary contains more than three vertices, and the face is not the outer boundary (checked by \texttt{Is\_Exterior\_Face}), the corresponding vertex list is appended to \texttt{non\_tri\_faces}. In the running example, this produces five non-triangular faces, collected as
\[
\texttt{non\_tri\_faces[0]} = \{10,9,11,12\},\;
\texttt{non\_tri\_faces[1]} = \{0,2,9,10\},\;
\texttt{non\_tri\_faces[2]} = \{9,2,7,11\},
\]
\[
\texttt{non\_tri\_faces[3]} = \{3,5,6,2\},\;
\texttt{non\_tri\_faces[4]} = \{6,5,4,2,1\},
\]
as shown in Figure~\ref{F-3}(b). These faces are exactly the regions that must be triangulated in the subsequent steps.\\\\
\textbf{2. Steps 8--11:} After the list of non-triangular faces has been obtained, an empty set \texttt{Triangulate\_edges} is created to store all new identified edges that will be added further. The function \texttt{ClipTriangulation$(\texttt{non\_tri\_faces},G_B,\texttt{non\_adj\_set})$} is then invoked. It returns a set of interior new edges that triangulate every face in \texttt{non\_tri\_faces} while satisfying the given non-adjacency constraints. These diagonals are stored in \texttt{Triangulate\_edges} and then added to the existing edge set, $E \leftarrow E \cup \texttt{Triangulate\_edges}$. The resulting triangulated graph, denoted $G_T$, is depicted in Figure~\ref{F-3}(h), where the newly introduced edges are highlighted in green.\\\\
\textbf{3. Steps 12--27:}
The function \texttt{Find\_Faces$(G_B)$} computes all faces of a given plane graph $G_B$. 
It begins with a temporary edge set $E'$, which initially contains all edges of $G_B$, and an empty list \texttt{faces}. While $E'$ is not empty, the algorithm selects an edge $s \in E'$ and starts tracing the boundary of a face. 
A list $P$ is created to store the edges of the current cycle, and a pointer $e$ is set to $s$. 
At each step, the function \texttt{Get\_Next\_Edge$(e,G_B)$} returns the next edge in the cyclic order around the current vertex, according to the fixed planar embedding. If the returned edge equals the starting edge $s$, the traversal of the face is complete, and the cycle $P$ is added to \texttt{faces}. Otherwise, the new edge is appended to $P$, the pointer $e$ is updated, and the edge is removed from $E'$. This process continues until all edges have been assigned to boundary cycles. As a result, the algorithm produces one edge-cycle for each face of the planar embedding. 
In later steps, these cycles are filtered to obtain the set \texttt{non\_tri\_faces}.\\\\
\textbf{4. Steps 28--45:} The main triangulation part is performed by function \texttt{ClipTriangulation}, which applies an ear-clipping strategy to each non-triangular face while respecting the non-adjacency constraints provided by user. For every entry in \texttt{non\_tri\_faces}, the ordered boundary is stored in the list \texttt{vertices}. As long as the current face has more than three vertices, the algorithm attempts to clip an ear: a triple of consecutive vertices $(a,b,c)$ for which the diagonal $(a,c)$ is a valid internal diagonal. At the start of each outer iteration, the Boolean variable \texttt{clip\_found} is set to \texttt{false}. The inner loop (Steps 34--38) scans all positions $i$ along the boundary; for a given $i$, the function \texttt{Consecutive$(\texttt{vertices},i)$} returns the triple $(a,b,c)$ consisting of $v_i$, $v_{i+1}$ and $v_{i+2}$ (indices modulo the current boundary length). The candidate diagonal $(a,c)$ is then checked by \texttt{Is\_Valid\_Diagonal}. If the test succeeds, $(a,c)$ is appended to \texttt{new\_edges}, the ear vertex $b$ is removed from \texttt{vertices}, and \texttt{clip\_found} is set to \texttt{true}$;$ the algorithm then proceeds with the remaining smaller polygon. If no valid ear is found in a complete pass (\texttt{clip\_found} remains \texttt{false}), the algorithm clips the first triple returned by \texttt{Consecutive$(\texttt{vertices},0)$} as a fallback (ignoring the non-adjacency constraints), adds the diagonal $(a,c)$, and removes $b$. This process continues until each face is reduced to a single triangle. The diagonals accumulated in \texttt{new\_edges} across all faces form \texttt{Triangulate\_edges}.\\
Figures~\ref{F-3}(c)--(g) illustrate this behaviour: for \texttt{non\_tri\_faces[0]} = $\{10,9,11,12\}$ (Figure~\ref{F-3}(c)), the triple $(10,9,11)$ yields the accepted diagonal $(10,11)$ in the first iteration; for \texttt{non\_tri\_faces[1]}$=$$\{0,2,9,10\}$ (Figure~\ref{F-3}(d)), the candidate $(0,9)$ is rejected in the first iteration (since $(0, 9)$ is part of non-adjacency constraints), whereas $(2,10)$ is accepted in the second iteration; similar clipping steps on the remaining faces (Figures~\ref{F-3}(e)--(g)) produce diagonals/edges such as $(0,11)$, $(5,2)$ and $(1,5)$.  For \texttt{non\_tri\_faces[1]}$=$$\{6,5,4,2,1\}$ (see Figure \ref{F-3}(g)), the possible edge $(5,2)$ is explicitly rejected because it does not form an internal diagonal of the corresponding face, and therefore fails the \texttt{Is\_Valid\_Diagonal} test.\\\\
\textbf{5. Steps 46--60:}
The final two functions describe how candidate edges/diagonals are selected and validated. 
The function \texttt{Consecutive$(\texttt{vertices}, i)$} returns the triple $(a,b,c)$, where $a$ is the vertex at position $i$, $b$ is its successor, and $c$ is its second successor along the current boundary.\\ 
Formally, $a = v_i$, $b = v_{(i+1)\bmod t}$, and $c = v_{(i+2)\bmod t}$ for a boundary of length $t$. The function \texttt{Is\_Valid\_Diagonal$(a,c,\texttt{vertices},face,\texttt{non\_adj\_set})$} checks whether the diagonal $(a,c)$ can be inserted. 
First, the diagonal is rejected if $(a,c)$ appears in the non-adjacency set (see Figure~\ref{F-3}(b), and similar cases in Figure~\ref{F-3}(d)--(f)). 
Next, the algorithm tests whether $(a,c)$ properly intersects any boundary edge $(v_i,v_{i+1})$, excluding the incident edges $(a,b)$ and $(b,c)$. 
If an intersection occurs, the diagonal is rejected. 
Finally, the algorithm verifies that $(a,c)$ is a valid interior diagonal of the polygon and lies strictly inside the face (see Figure~\ref{F-3}(g)). 
The function returns \texttt{true} only if all checks are satisfied.\\
As a result, every diagonal stored in \texttt{Triangulate\_edges} is an interior edge that does not cross existing edges and respects all non-adjacency constraints. 
After inserting these diagonals into $G_B$, the triangulated graph $G_T$ is obtained (Figure~\ref{F-3}(h)), where every bounded face is a triangle. 
This graph is then ready for the subsequent separating-triangle removal step required for rectangular layout construction.\\\\
By applying Algorithm \ref{TRI}, we obtain a bi-connected and triangulated graph $G_T$ derived from the input graph $G_B$ together with the stated non-adjacency constraints (see Figure \ref{F-3}(a–h)). As the input graph is provided with a plane graph $G_B$, every new edge added within a non-triangular face is checked at the time of insertion to avoid any crossings with existing edges (Steps 51-60 of Algorithm \ref{TRI}). This guarantees that planarity is maintained throughout the construction. Therefore, after obtaining the final graph $G_T$, no separate planarity test is necessary, and the algorithm proceeds directly to the subsequent phase of graph generation.

\begin{figure}
\centering
    \includegraphics[width = 0.95 \textwidth]{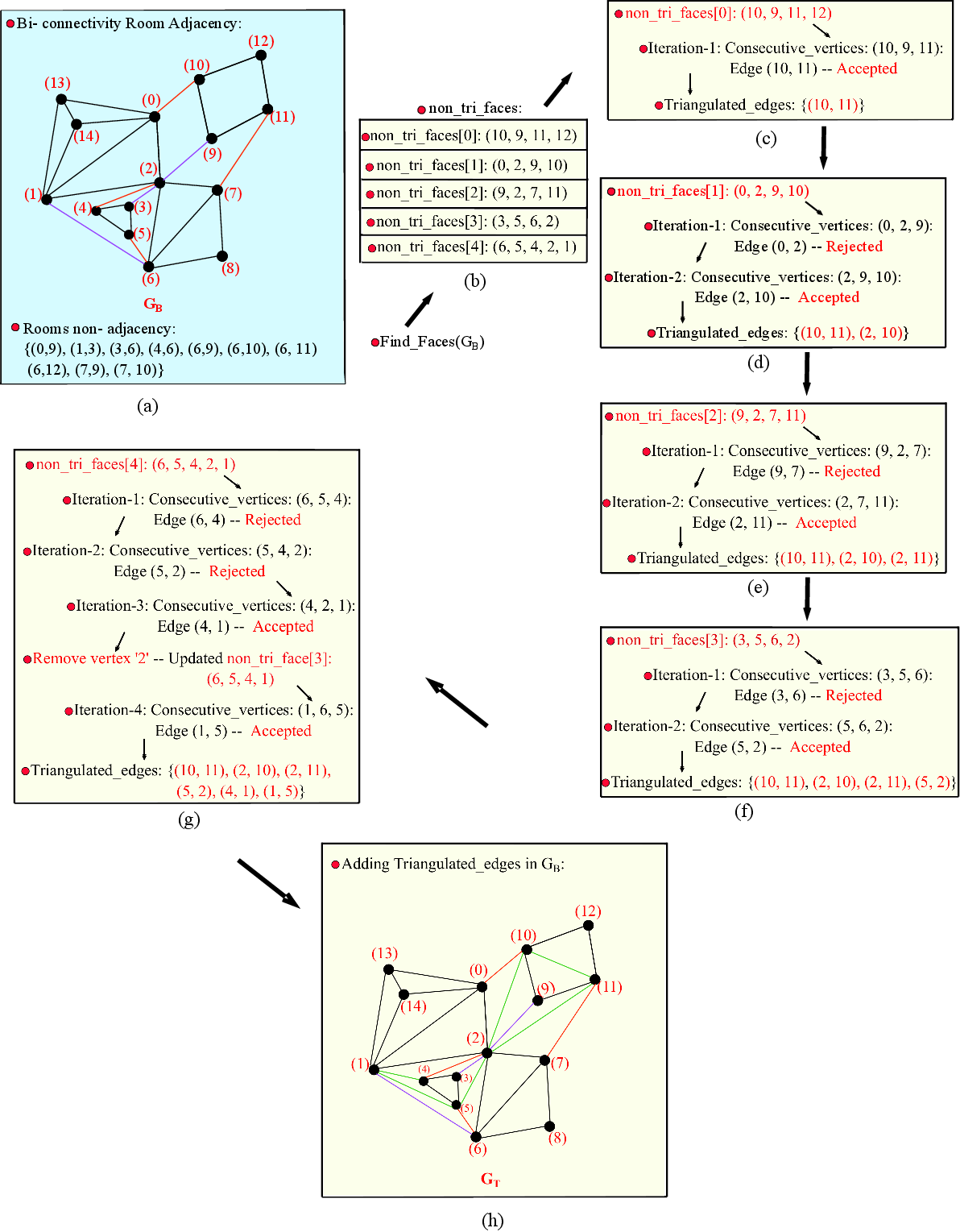}
    \caption{(a-h) Construction of a bi-connected triangulation graph $G_T$ while preserving the specified input constraints.}
    \label{F-3}
\end{figure}
\subsection{Identification and Removal of Separating triangles}\label{Separating}\label{TRI-SEP}
\begin{algorithm}
\caption{Separating triangles removal with non-adjacency}
\label{STR}
\begin{algorithmic}[1]
\Function{$\mathrm{ST\_Edge\_Removal}$}{$G
_T, non\_adj\_set$}
  \State $\mathcal{T}\gets \mathrm{ST}(G_T)$. \Comment{$ST(G_T)$: return all separating Triangles in $G_T$}
  \State $m\gets |\mathcal{T}|$, $E'$ = $E_T$, $G_1(V_1,E_1)$ $=$ $G_T(V_T,E_T)$. 
  \Comment{number of separating triangles}
  \State $(m,G_F)\gets Remove\_ST\_Edge\_Removal (\mathcal{T},G_T,m,E',G_1,|V|,non\_adj\_set).$
  \If{$m>0$}
   \State $E'$ = $\phi$, $(m,G_F)\gets Remove\_ST\_Edge\_Removal(\mathcal{T},G_T,m,E',G_1,|V|,non\_adj\_set).$ 
  \EndIf
  \State \Return \textcolor{red}{$G_F$ ($Free$ $from$ $separating$ $triangle$)}.
\Function{$Remove\_ST\_Edge\_Removal$}{$\mathcal{T},G_T,m,E',G_1,|V|,non\_adj\_set$}

    \ForAll{$T\in \mathcal{T}$}
      \State $removed \gets \text{false}$. \Comment{$removed$ is a boolean variable.}
      \State $E_T \gets \mathrm{Edges}(T)$. 
      \Comment{return edges in separating triangle $T$.}
      \ForAll{edge $e=(x,y)\in E_T$} \Comment{Step 1: try removing an exterior edge}
        \If{$(x,y) \in E'$} \State \textbf{continue} \EndIf
        \State $N_{xy}\gets \mathrm{Common\_Neighbors}(G_T,x,y)$.\Comment{returns common neighbours of $x$ and $y$}
        \If{$|N_{xy}|=2$} \Comment{edge $\{x,y\}$ is an exterior edge.}
          \State $E \gets E$ - $\{(x,y)\}$, $\mathcal{T}\gets \mathrm{ST}(G_T)$, $m\gets m-1$.
          \State $removed \gets \text{true}$.
          \State \textbf{break}
        \EndIf
       \EndFor
      \If{$removed$} \State \textbf{continue} \EndIf
    \Comment{Step 2: try edge replacement $(x,y)\mapsto (r,i)$ respecting $non\_adj\_set$}
      \State $edgeFound \gets \text{false}$.
      \ForAll{edge $e=(x,y)\in E_T$}
       \If{$(x,y) \in E'$} \State \textbf{continue} \Comment{$T$: separating triangle formed by cycle of vertices x, y, z with interior vertices $v_1, ...v_n$.} \EndIf
        \ForAll{$r\in \mathrm{Common\_Neighbors}(G_T,x,y)$}
           \State $d \in \mathrm{nbd}(x)\cap \mathrm{nbd}(y)\cap \mathrm{nbd}(z)$.
          \If{$r\neq d$ \textbf{and} $r\notin T$ \textbf{and} $(r,d) \notin E_T$} 
            \If{$(r,d)\notin \text{non\_adj\_set}$}
              \State $G'_T \gets G_T \cup \{(r,d)\} \setminus \{(x,y)\}$.
               \State $\mathcal{T'}\gets \mathrm{ST}(G'_T)$, $m'\gets |\mathcal{T'}|$.
             \If{$m' \le m-1$ \textbf{and} $\texttt{is\_planar}(G'_T)$}

                \State $G_T \gets G'_T$,  $\mathcal{T} \gets \mathcal{T'}$, $m\gets m'$, $edgeFound \gets \text{true}$.
                \State \textbf{break}.
              \Else
                  \State $G_T \cup \{(x,y)\} \setminus \{(r,d)\}$.
              \EndIf
            \EndIf
          \EndIf
          \EndFor
       \If{$edgeFound$} \State \textbf{break} \EndIf
       \EndFor
      \If{$edgeFound$} \State \textbf{continue} \EndIf
      \Comment{Step 3 (fallback): ignore $non$-$adj$-$constraints$ and retry replacements}
       \ForAll{edge $e=(x,y)\in E_T$}
         \If{$(x,y) \in E'$} \State \textbf{continue}. \Comment{$T$: separating triangle formed by cycle of vertices x, y, z with interior vertices $v_1, ...v_n$.} \EndIf
 \ForAll{$r\in \mathrm{Common\_Neighbors}(G_T,x,y)$}
           \State $d \in \mathrm{nbd}(x)\cap \mathrm{nbd}(y)\cap \mathrm{nbd}(z)$.
          \If{$r\neq d$ \textbf{and} $r\notin T$ \textbf{and} $(r,d) \notin E_T$} 
              \State $G'_T \gets G_T \cup \{(r,d)\} \setminus \{(x,y)\}$.
               \State $\mathcal{T'}\gets \mathrm{ST}(G'_T)$, $m'\gets |\mathcal{T'}|$.
              \If{$m' \le m-1$ \textbf{and} $\texttt{is\_planar}(G'_T)$}
                \State $G_T \gets G'_T$, $\mathcal{T} \gets \mathcal{T'}$, $m\gets m'$, $edgeFound \gets \text{true}$.
                \State \textbf{break}.
              \Else
                  \State $G_T \cup \{(x,y)\} \setminus \{(r,d)\}$.
            
            \EndIf
          \EndIf
      \EndFor
    \EndFor
   \EndFor
  \State \Return $(m,G)$.
\EndFunction
\EndFunction
\end{algorithmic}
\end{algorithm}
\begin{figure}
\centering
    \includegraphics[width = 0.5 \textwidth]{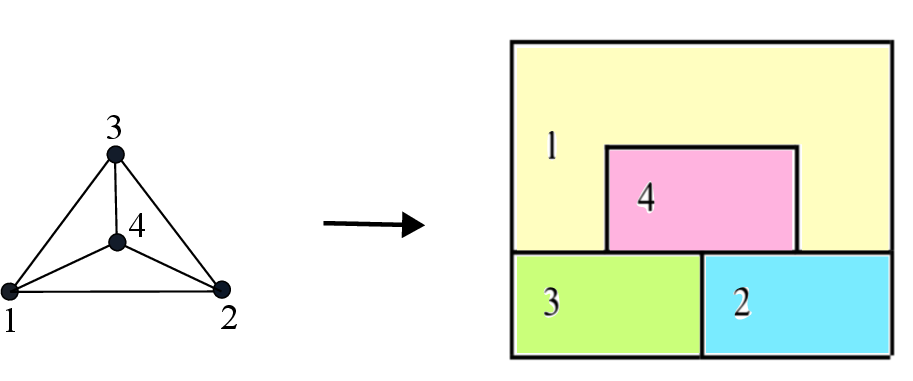}
    \caption{A non-rectangular room (Room 1) appears in the floor plan due to the presence of a separating triangle (1,2,3) in the input graph.}
    \label{EX-2}
\end{figure}
Once a plane-biconnected and triangulated graph $G_T$ has been constructed, the next stage of floor-plan generation depends on the user-specified geometric constraints. 
If no restriction is imposed on module shape (that is, rectilinear shapes such as $L$-, $T$-, or $C$-forms are allowed), the general floor-planning procedure described in Subsection~\ref{OFP} can be applied directly. However, if each module is required to be rectangular, the graph $G_T$ must first be examined for separating triangles. 
Since separating triangles are not compatible with the rectangular floor-plan topology, they must be removed before layout construction. In a rectangular layout, each region is represented by an axis-aligned rectangle with straight and contiguous boundaries. 
Under this representation, three regions cannot form a triangular cycle that encloses another region (see Figure~\ref{EX-2}). It is well established that graphs supporting rectangular floor plans must be free of separating triangles~\cite{rinsma1988existence}.\\
To detect these structures, we apply Johnson’s algorithm~\cite{johnson1975finding} to enumerate all separating triangles, together with the user-defined non-adjacency constraints. 
For each detected triangle, one of its outer edges is selected for removal and replaced with a new diagonal edge. 
This replacement is accepted only if it satisfies the non-adjacency constraints and does not create additional separating triangles. 
If the condition is not satisfied, alternative acceptable edges are considered.

To eliminate a separating triangle $S$ in a graph $G$, we begin by selecting an edge (a separating triangle $S$ is bounded by three edges) incident to $S$ and determining whether this edge lies on the boundary of the outer face of $G$; exterior edges may be removed directly. When all three edges of $S$ are interior, additional structural checks become necessary. The acceptable/candidate edge must be examined to verify that its removal indeed resolves the separating triangle, does not introduce a new separating cycle of the same type, and does not violate any non-adjacency constraints. Algorithm \ref{STR} organizes the decision process into a structured set of cases, each paired with a verified modification. This approach eliminates the need to separate triangles while retaining the graph's structural integrity, which is required for the next stage of layout generation. As illustrated in Figure \ref{F-4} (c-d), resolving the separating triangle (10, 2, 11) involved removing edge (10, 11) and inserting edge (9, 12), thereby preserving triangulation while eliminating the separating structure. Repeating this process yields a set of triangulated graphs free of separating triangles, i.e., PTPG (properly triangulated plane graph) graphs, which then serve as the basis for subsequent rectangular floor plan generation.\\\\
\textbf{Illustration of Algorithm \ref{STR} (see Figure \ref{F-4}):}\\\\
\textbf{1. Steps 1--7:} Algorithm~\ref{STR} starts from the triangulated room adjacency graph $G_T(V_T,E_T)$ returned by Algorithm~\ref{TRI}, together with the room non-adjacency set (Figure~\ref{F-4}(a)). The main function \texttt{ST\_Edge\_Removal} first calls \texttt{ST$(G_T)$} to list all separating triangles of $G_T$; this list is stored in $T$, and its size $m = |T|$ is the current number of separating triangles. The edge set of $G_T$ is copied to a new set $E'_T$, and a reference copy $G_1$ of the original triangulated graph $G_T$ is created. In the running example, \texttt{ST$(G_T)$} returns the five separating triangles shown in Figure~\ref{F-4}(b): $\{0,13,1\}$, $\{10,2,11\}$, $\{2,4,5\}$, $\{1,5,2\}$ and $\{2,1,6\}$. These are the cycles that must be broken in the subsequent steps so that the final graph has no separating triangles. The function \texttt{Remove\_ST\_Edge\_Removal$(T,G_T,m,E'_T,G_1,|V|,\texttt{non\_adj\_set})$} is then called once (and, if needed, a second time) to modify $G_T$ until all these separating triangles are removed. The graph $G_F$ returned by \texttt{ST\_Edge\_Removal} is the separating triangle-free graph used later for rectangular layout construction.\\\\
\textbf{2. Steps 8--22:} Inside \texttt{Remove\_ST\_Edge\_Removal}, the algorithm processes the separating triangles one by one. For a fixed triangle $T\in T$, the flag \texttt{removed} is set to \texttt{false} and the local set $E'_T$ is initialised with the three edges of $T$. The function first attempts a strategy (Step 1 in the comments): delete an exterior edge of $T$ without adding any new edges. To do this, it scans all edges $e=(x,y)$ of $G_T$ and only considers those belonging to $E'_T$. For each such edge, it computes the common neighbours $N_{xy}$ of $x$ and $y$ in $G_T$. If $|N_{xy}| = 2$, the edge $(x,y)$ is incident to exactly two triangular faces and is therefore an exterior edge of $T$; in that case the edge is removed from $E_T$, the list of separating triangles is recomputed as $T \leftarrow \texttt{ST}(G_T)$, the counter $m$ is replaced by the new value $m' = |T|$, and \texttt{removed} is set to \texttt{true}. The algorithm then goes directly to the next triangle. Figure~\ref{F-4}(c) illustrates this case for $\texttt{Sep\_Tri[0]} = \{0,13,1\}$: the edge $(0,13)$ is recognised as an exterior edge and deleted, which removes the separating triangle without introducing any new one.\\\\
\textbf{3. Steps 23--41:} If no exterior edge of a separating triangle $T$ can be safely removed (that is, the flag \texttt{removed} remains \texttt{false}), the algorithm proceeds to a more general resolution strategy (Step~2), in which an edge of $T$ is replaced by a new edge while still respecting the given non-adjacency constraints. A boolean flag \texttt{edgeFound} is initialized to \texttt{false}. The algorithm then considers each edge $e=(x,y)$ of $T$ as a potential candidate edge for replacement. For a selected edge $(x,y)$, it first identifies vertices $r$ that are common neighbours of $x$ and $y$, and then identifies vertices $d$ that are adjacent to $x$, $y$, and the third vertex $z$ of $T$; such vertices $d$ lie in the interior region enclosed by $T$. Whenever $r \neq d$, $r$ is not a vertex of $T$, and the edge $(r,d)$ is not already present in $E_T$, the pair $(r,d)$ is considered as a possible replacement for $(x,y)$.\\
Each candidate edge $(r,d)$ is first checked against the non-adjacency constraints and discarded if it violates any user-specified restriction. Otherwise, a temporary graph $G'_T$ is constructed by inserting $(r,d)$ and removing $(x,y)$. The set of separating triangles in $G'_T$ is then recomputed as $T' = \texttt{ST}(G'_T)$, with $m' = |T'|$. If this modification strictly reduces the number of separating triangles ($m' \le m - 1$), the replacement is accepted: $G_T$ is updated to $G'_T$, the current triangle set is replaced by $T'$, $m$ is updated to $m'$, and \texttt{edgeFound} is set to \texttt{true}, allowing the algorithm to continue with the next separating triangle.\\
Figure~\ref{F-4}(d) illustrates such a successful constrained replacement for $\texttt{Sep\_Tri}[1] = \{10,2,11\}$. In this example, several candidate replacements involving edges incident to $(10,2)$ and $(2,11)$ are rejected until the algorithm removes the edge $(11,10)$ and inserts $(9,12)$. This modification respects the non-adjacency constraints and eliminates the separating triangle without introducing new ones.\\\\
\textbf{4. Steps 42--54:} The final stage of the procedure introduces a fallback strategy (Step~3), which is invoked only when Step~2 fails to identify any admissible edge replacement that satisfies the non-adjacency constraints, that is, when \texttt{edgeFound} remains \texttt{false} after the constrained search. In this stage, the algorithm repeats the examination of triangle edges and candidate vertex pairs, but the non-adjacency check $(r,d) \in \texttt{non\_adj\_set}$ is deliberately omitted. For each admissible pair $(r,d)$ with $r \neq d$, $r \notin T$, and $(r,d) \notin E_T$, a temporary graph $G'_T$ is formed by inserting $(r,d)$ and removing $(x,y)$. The separating triangles of $G'_T$ are then recomputed as $T' = \texttt{ST}(G'_T)$, with $m' = |T'|$. If this modification strictly reduces the number of separating triangles ($m' \le m - 1$), the replacement is accepted following the same update procedure used in Step~2.\\
This fallback mechanism guarantees progress even when all admissible replacements are excluded by non-adjacency constraints. Its effect is illustrated in Figures~\ref{F-4}(e) and~\ref{F-4}(f). For the separating triangles $\{2,4,5\}$ and $\{1,5,2\}$, every constraint-respecting candidate is rejected, prompting Step~3 to bypass the non-adjacency set and replace the edge $(5,2)$ with $(3,6)$, thereby eliminating both triangles simultaneously (Figure~\ref{F-4}(e)). For the remaining triangle $\{2,1,6\}$ (Figure~\ref{F-4}(f)), the algorithm identifies $(1,6)$ as an exterior edge and removes it, eliminating the final separating triangle. After all triangles in $T$ have been processed, the graph $G_T$ evolves into the final graph $G_F$ shown in Figure~\ref{F-4}(f), which contains no separating triangles. Algorithm~\ref{STR} therefore produces a triangulated, non-separating graph that respects non-adjacency constraints whenever possible and serves as a valid combinatorial foundation for the subsequent rectangular layout construction.\\\\
Applying Algorithm~\ref{STR} transforms the triangulated input graph $G_T$, together with the specified non-adjacency constraints, into a properly triangulated plane graph $G_F$ (see Figure~\ref{F-4}(a--f)). Since $G_T$ is given with a fixed plane embedding, any edge introduced during the elimination of separating triangles is inserted only within non-triangular faces created by earlier edge removals. Each such insertion is verified at the time of addition to ensure that it does not intersect existing edges, thereby preserving planarity throughout the process. After construction terminates, the resulting graph $G_F$ admits a planar embedding, which is used directly as the combinatorial input for the subsequent floor-plan generation phase.

\begin{figure}
\centering
    \includegraphics[width = 0.95 \textwidth]{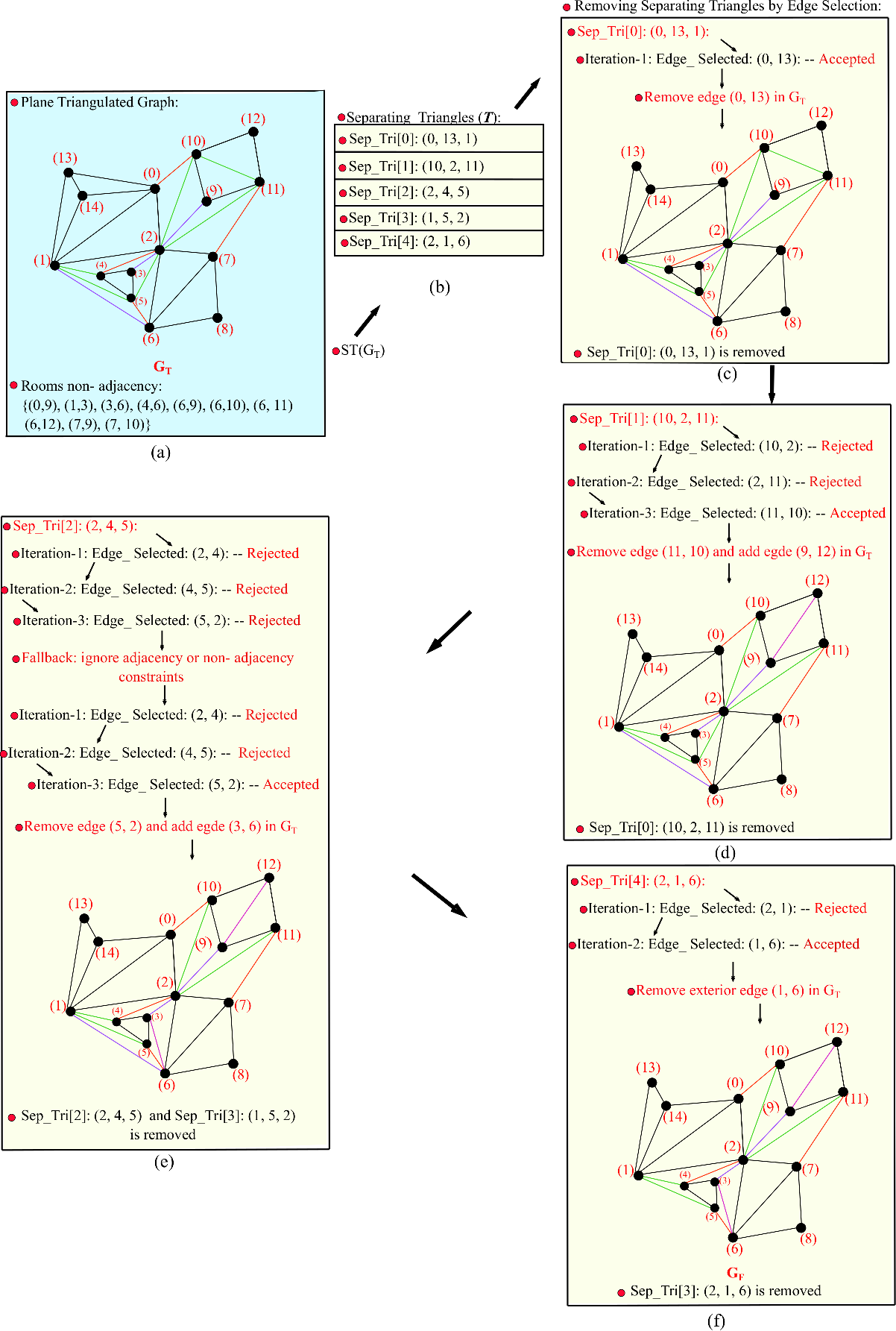}
    \caption{(a-f) Breaking of separating triangles.}
    \label{F-4}
\end{figure}

\subsection{Graph to Floorplan Generation}
\label{PTPG}
\subsubsection{RFP Generation:}

After removing all separating triangles from the triangulated graph $G_T$ (see Subsection~\ref{Separating}), the resulting graph $G_F$ is a properly triangulated plane graph (PTPG) that satisfies the user input constraints. 
Such graphs meet the structural conditions required for rectangular dual construction, as discussed by Koźmiński et al.~\cite{kozminski1985rectangular}. 
These PTPGs therefore serve as input for rectangular floor-plan generation.\\
A \emph{dimensionless floor plan} (UFP) is a layout in which each room is modeled as an axis-aligned rectangle, and only adjacency relations between rectangles are fixed; exact dimensions and areas are not assigned. This abstraction allows designers to examine circulation patterns, relative room placement, and orientation without committing to precise measurements. It also supports comparing alternative layouts and detecting adjacency conflicts early, before geometric refinement.\\
Graphs provide a clear way to represent which rooms must be adjacent and which rooms must remain separate. Therefore, graph-based approaches are well-suited for generating families of dimensionless floor plans. In our implementation, the PTPG constructed in Section~\ref{PTPG} is used as input to the algorithm of Kant et al.~\cite{kant1997regular}, which produces a rectangular representation where each rectangle corresponds to a room.\\
The resulting UFP serves as a structural blueprint that can later be scaled and detailed. 
For the example PTPG shown in Figure~\ref{F-4}f, the corresponding UFPs are presented in Figure~\ref{F-5}. 
Although these layouts share the same underlying adjacency graph, they differ in geometric arrangement. 
For instance, rooms 4 and 1 share a vertical wall segment in one layout and a horizontal wall segment in another. 
More generally, the alternatives vary in relative room placement and shared wall segments while preserving the same adjacency structure.
\begin{figure}
\centering
    \includegraphics[width = 0.95 \textwidth]{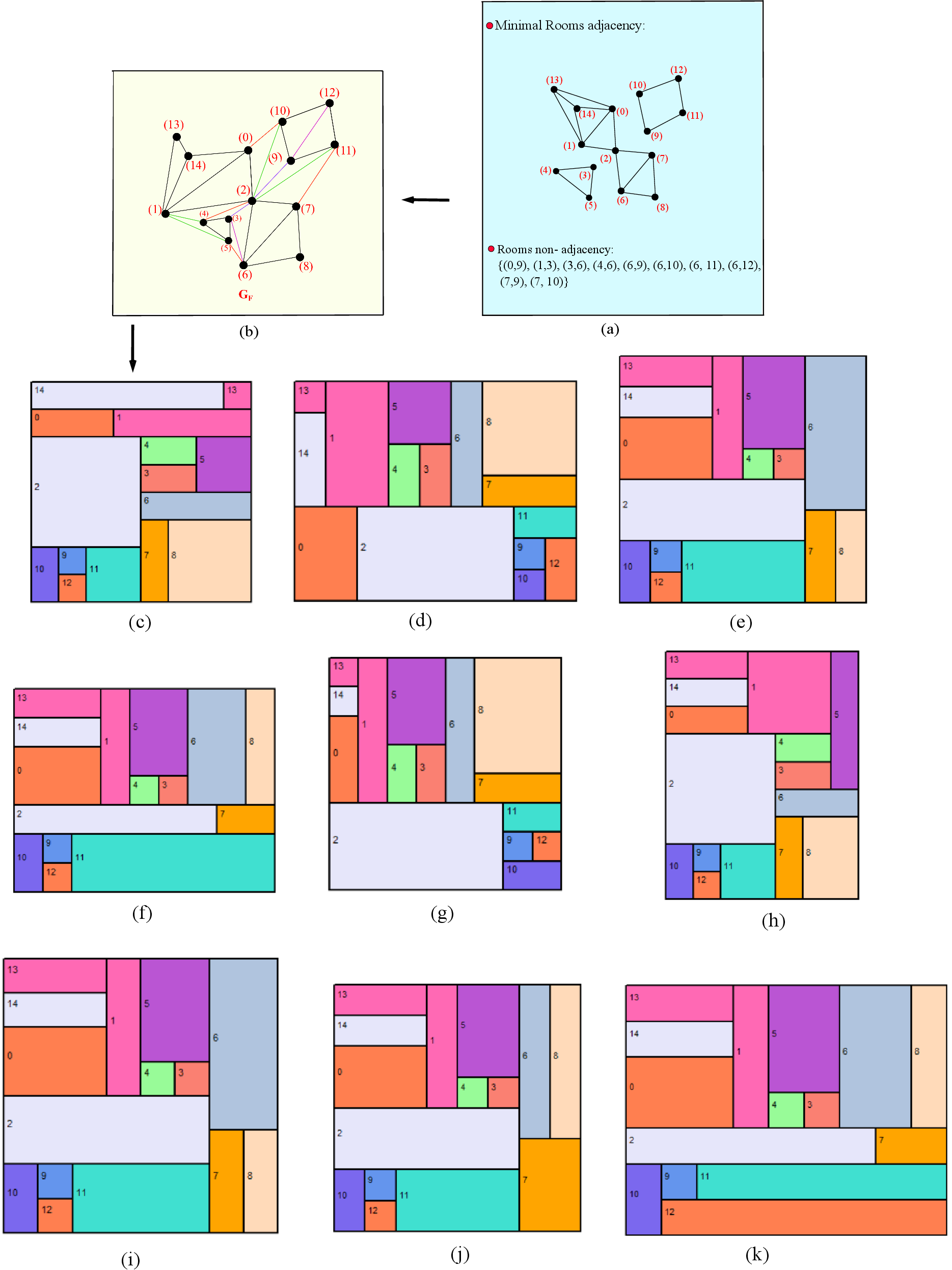}
    \caption{(a-k) Multiple Rectangular Floor plans generated based on the previously generated graph $G_F$.}
    \label{F-5}
\end{figure}

\subsubsection{OFP Generation:}\label{OFP}
In our proposed prototype, designers and architects can generate either rectangular floor plans (RFPs) or orthogonal floor plans (OFPs). When the OFP option is selected, modules can take general rectilinear shapes rather than being restricted to rectangles. See Figure \ref{F-6}: Starting from the generated graph $G_T$, the construction follows the methodology described by Krishnendra et al. \cite{SHEKHAWAT2021103718}. As established by the theorem of Kant and He \cite{kant1997regular}, a bi-connected planar triangulated graph containing a separating triangle cannot admit an RFP. Accordingly, in such cases, a generated PTPG $G_T$ is augmented by introducing auxiliary vertices and further triangulated so that the resulting modified graph is free of separating triangles. An RFP is then constructed for this modified graph, and the regions corresponding to the auxiliary vertices are subsequently merged with suitable neighboring rooms to obtain an OFP for the original $G_T$. This sequence of transformations, illustrated in Figure \ref{F-6} (a-d), systematically removes separating triangles while preserving planarity and adjacency constraints. Furthermore, since a single modified graph may yield multiple valid RFP realizations, selectively merging auxiliary modules enables the generation of multiple distinct OFPs, as shown in Figure \ref{F-6} (e-m).
\begin{figure}
\centering
    \includegraphics[width = 0.91 \textwidth]{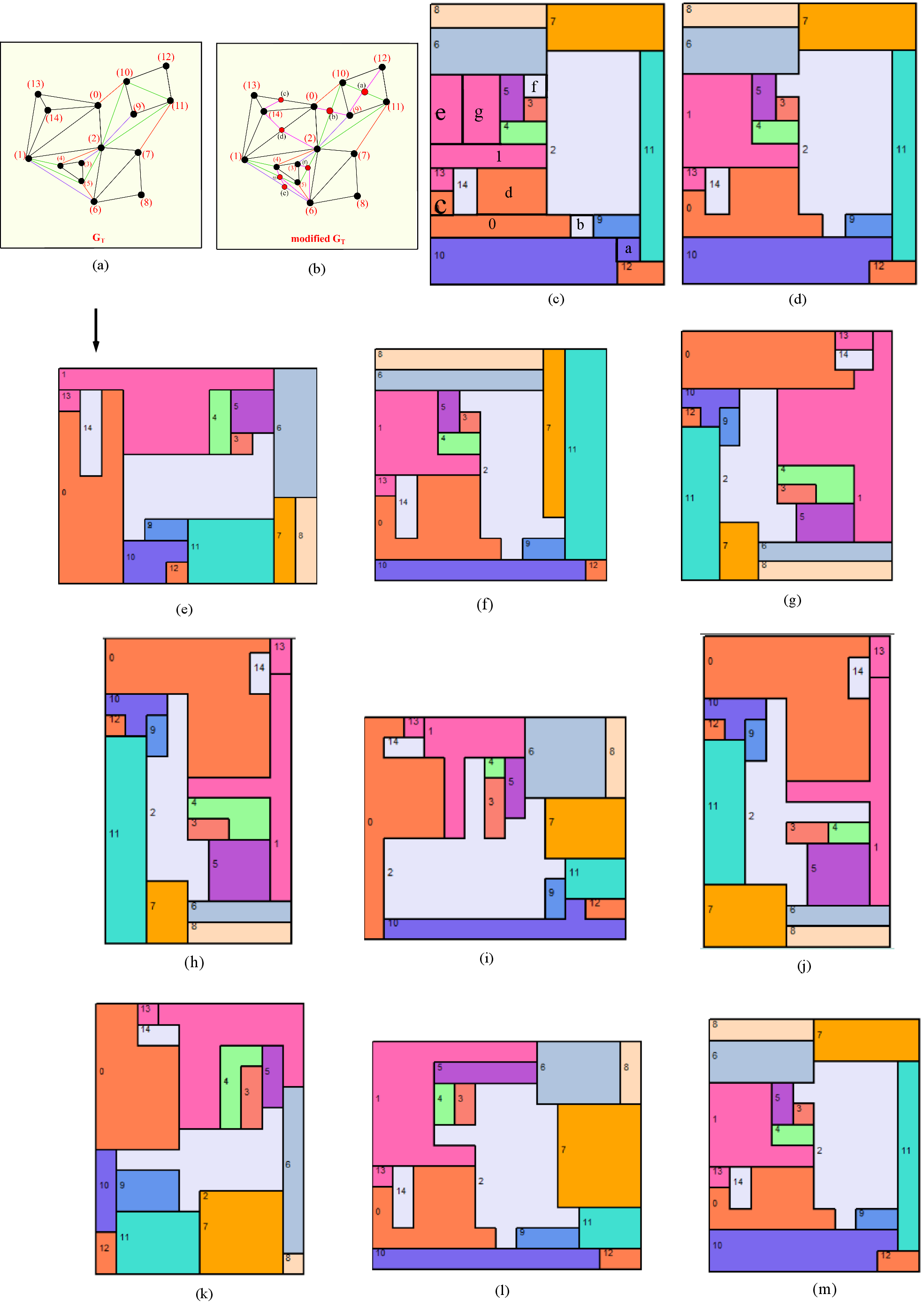}
    \caption{(a-m) Multiple orthogonal Floor plans generated based on the previously generated graph $G_T$.}
    \label{F-6}
\end{figure}
\section{Practical Use of the DPLAN Interface}\label{Apply}
This section illustrates the practical use of the proposed DPLAN prototype by highlighting how automatically generated residential floor plans/layouts can be refined through user-driven customization. After the system produces an initial rectangular layout, users can modify the layout boundaries and room geometries to accommodate individual design preferences and functional constraints, including creating non-rectangular spaces that extend beyond standard configurations. Figures \ref{F-8} and \ref{F-9} present representative examples of two- and three-bedroom residential layouts, respectively, demonstrating how variations in input constraints, such as adjacency and non-adjacency relations between rooms, as well as room shape specifications, yield distinct customized outcomes. These examples highlight the precision enabled by manual refinement and demonstrate how the graph-generation design framework embedded in the software facilitates the generation of realistic, adaptable floor plans. Together, they demonstrate the effectiveness of the proposed approach in producing residential layouts that closely align with user requirements and real-world design considerations.
\begin{figure}
\centering
    \includegraphics[width = 0.95 \textwidth]{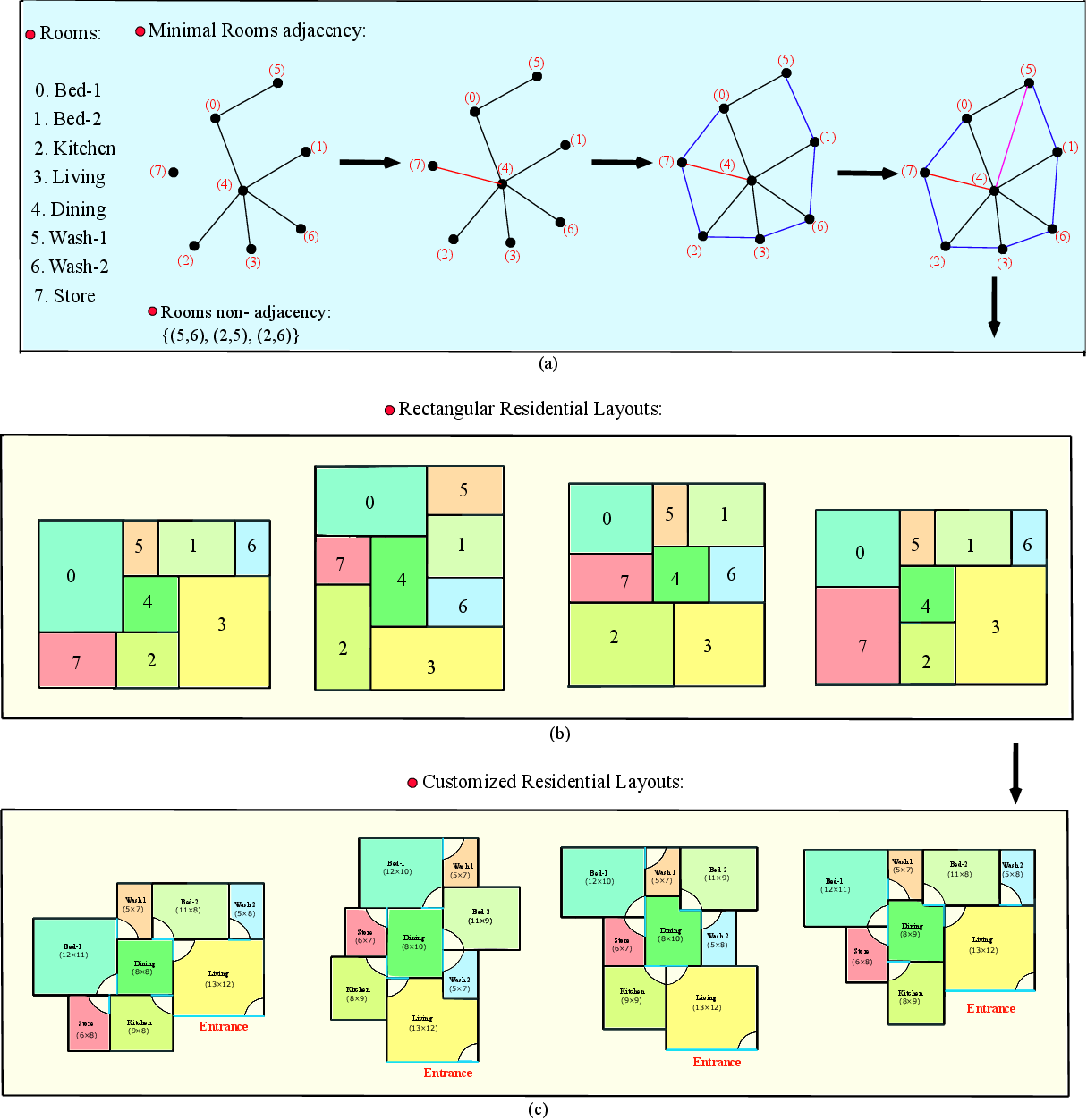}
    \caption{(a–c) Customized two-bedroom residential floor plans obtained through user-guided refinement of the generated layouts.}
    \label{F-8}
\end{figure}
\begin{figure}
\centering
    \includegraphics[width = 1.00 \textwidth]{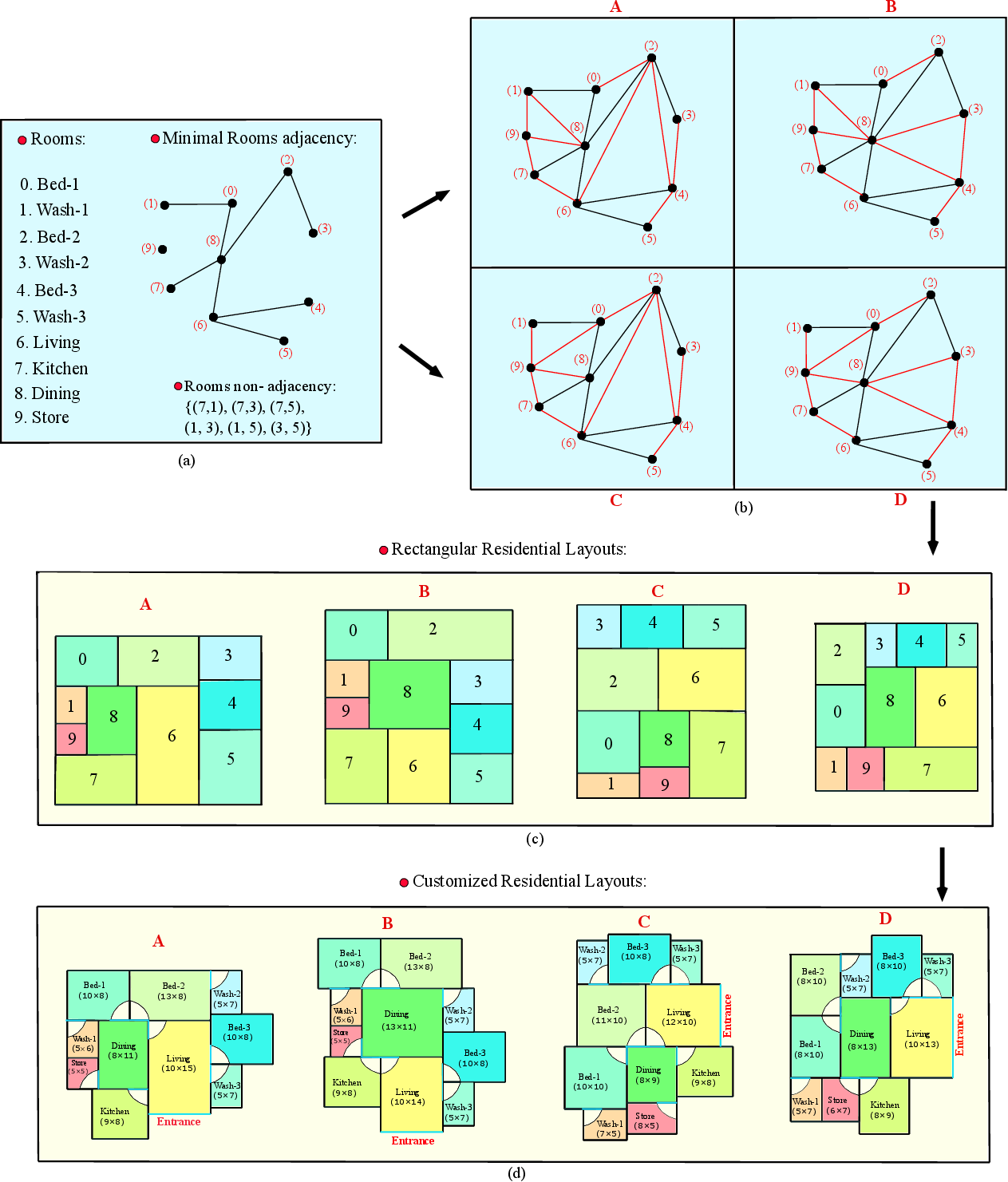}
    \caption{(a–d) Customized three-bedroom residential floor plans obtained through user-guided refinement of the generated layouts.}
    \label{F-9}
\end{figure}

\section{Conclusion}\label{conclusion}
This work introduces a DPLAN prototype that applies a sequence of graph-based algorithms to generate families of floor plans, either rectangular floor plans (RFPs) or orthogonal floor plans (OFPs), from user-defined constraints. 
The user provides high-level spatial requirements, including which rooms must be adjacent (for possible door placement) and which room pairs must remain non-adjacent and not share a wall. Based on these inputs, the system constructs a biconnected, plane-triangulated graph that satisfies both adjacency and non-adjacency constraints. This properly triangulated plane graph (PTPG) acts as the combinatorial foundation for producing multiple, distinct, yet constraint-consistent layout alternatives. At its current stage, the prototype generates layouts with a rectangular outer boundary and produces dimensionless (topological) floor plans suitable for early design exploration and structural analysis.

\section{Future Enhancement}\label{FE}

Several extensions are planned to bring the prototype closer to practical architectural use. 
On the geometric side, future work will support non-rectangular outer boundaries and assign explicit dimensions to rooms and corridors. 
This will allow the system to produce scale-accurate floor plans rather than purely combinatorial layouts. Circulation modelling will also be improved by enabling users to define corridor requirements and preferred access patterns, which can be incorporated directly into the graph constraints. 
From a usability perspective, the interface will be enhanced to simplify constraint editing, comparison of alternative layouts, and export to standard CAD or BIM platforms. Additional developments include automatic door placement, basic three-dimensional visualization of generated layouts, and integration with evaluation modules such as daylight analysis or area-efficiency assessment. 
Together, these extensions aim to develop the prototype into a more complete decision-support tool for architects and space planners.
\section{Supplementary Material}
\label{Supply}

A video illustrating the execution of the proposed Python-based framework, which generates floor plans under different user-defined constraints, is available at the following link:\\
\href{https://drive.google.com/file/d/1mpvako-JXxCN0JERFrzHxO0y4ojeE8W-/view?usp=sharing}
{Demonstration}\par

\noindent A brief demonstration of our updated prototype, reflecting ongoing developments, is provided below. This video highlights the revised editing workflow along with recent usability enhancements:\\
\href{https://drive.google.com/file/d/1PtSCOteH9kqNrCggafm6MAglesn6Wvxk/view?usp=drive_link}{Future Work} \par

\section{Declaration of Interest}
The authors confirm that no financial or personal relationships exist that could be perceived as influencing the results reported in this paper.

\bibliographystyle{unsrt} 
\bibliography{mybib} 
\section{Appendix}\label{App}
\subsection{Analysis of Performance}
\label{sec:performance}

This section explains the computational performance of our proposed pipeline and compares it with existing graph-based and learning-based methods for floorplan generation. We analyze each step of the algorithm separately and describe the type of structural operations performed at that stage. Using the notation and helper functions defined earlier in Section~\ref{Meth}, we derive clear worst-case time complexity bounds for each step.\\
We also discuss how these computational costs relate to the properties of typical architectural room adjacency graphs, such as their size and structural constraints. Finally, we compare our method with established approaches to show where our contribution fits within existing architectural layout generation techniques.

\subsubsection{Overview of the Computational Structure}

The pipeline described in Section~\ref{Meth} converts a user-defined door-adjacency graph \( G(V, E) \) into a properly triangulated plane graph \( G_F \). This transformation is carried out in four structured stages, each enforcing a specific graph-theoretic property required for the final layout/floor plan.

\begin{enumerate}
    \item Algorithm~\ref{OF-CON} ensures connectivity while respecting non-adjacency constraints, producing the graph \( G_C \).
    
    \item Algorithm~\ref{BY} augments the graph to achieve bi-connectivity, producing \( G_B \).
    
    \item Algorithm~\ref{TRI} performs constrained triangulation to obtain \( G_T \).
    
    \item Algorithm~\ref{STR} removes separating triangles (when required), resulting in the final graph \( G_F \), particularly when each module must correspond to a rectangular region.
\end{enumerate}

All four procedures operate directly on a fixed plane embedding of the graph. At every step, adjacency relations are modified only when necessary and in a clearly defined manner. The pipeline does not explore or enumerate multiple alternative adjacency configurations, unlike learning-based or search-based approaches. Instead, it follows a deterministic sequence of structural modifications, where each step serves a single, well-defined purpose: enforcing connectivity, bi-connectivity, triangulation, or eliminating separating triangles.\\
Therefore, the computational cost of the pipeline depends only on the size and structural properties of the input graph. It does not depend on exploring multiple combinatorial possibilities, which keeps the behavior predictable and easier to analyze.
\subsubsection{Run-time Analysis of the Proposed Algorithms}

We now analyze the worst-case time complexity of Algorithms~\ref{OF-CON}--\ref{STR}. 
Let \( n = |V| \) denote the number of rooms (vertices), and let \( m = |E| \) denote the number of edges in the current adjacency graph. Since the input graph is planar, we have \( m = O(n) \) at every stage of the pipeline.\\
The non-adjacency constraints are provided as part of the input and stored in a data structure that allows constant-time access for any specified pair of rooms. This assumption applies only to retrieving the constraint information itself.
During the execution of Algorithms~\ref{OF-CON}--\ref{STR}, these non-adjacency constraints are checked repeatedly whenever new possible/candidate edges are generated and evaluated. Although each individual lookup takes constant time, the total cost of enforcing non-adjacency constraints depends on the number of candidate edge combinations examined. Therefore, the overall contribution of constraint checking is naturally included in the total time complexity of the corresponding algorithmic steps.
\subsubsection{Algorithm \ref{OF-CON}: Connectivity with Non-Adjacency:}
Algorithm \ref{OF-CON} ensures that every room/node lies in a single connected component while satisfying the non-adjacency constraints. Its complexity arises from three primary operations:

\begin{itemize}
    \item \textbf{Connected components:} The call \textsc{Connected\_Components}$(G)$ runs in $O(n+m)$ time using the standard Depth First Search (DFS) algorithm discussed in the paper \cite{korach1988dfs}, which takes linear time complexity.\\
    \item \textbf{Extraction of outerface boundaries:}  For each disconnected component formed during connectivity augmentation, \textsc{Extract\_Outerface}$()$ function traces the boundary of the outer face in the plane embedding to identify its incident vertices. As outlined in Algorithm \ref{OF-CON}, this traversal follows the cyclic edge order around boundary vertices and marks visited edges and vertices to prevent repetition. Since each vertex and edge is encountered at most once across all components, the total cost of outerface extraction is linear in the size of the graph. Combined with the connected component computation and union-find operations, this step contributes $O(n+m)$ time to the overall complexity, where $n$ and $m$ denote the number of vertices and edges, respectively.\\
\item \textbf{Candidate edge generation:} 
After computing the outer-face vertex sets for all connected components, the function \textsc{Generate\_Possible\_Edges} systematically constructs admissible/possible edges that may be used to connect distinct components without violating non-adjacency constraints. Let the input graph consist of $k$ connected components, and let $b_i$ denote the number of vertices incident to the outer face of component $i$. Since only boundary vertices can be connected without destroying planarity, the algorithm restricts its search to cross-component pairs drawn from these outer-face sets. \\
For each unordered pair of components $(i,j)$ with $i<j$, the algorithm examines all vertex pairs $(u,v)$ such that $u \in \text{outerfaces}[i]$ and $v \in \text{outerfaces}[j]$. The total number of candidate pairs considered is therefore
\[
\sum_{i<j} b_i b_j
\;=\; \frac12\!\left(\left(\sum_{i=1}^k b_i\right)^2 - \sum_{i=1}^k b_i^2\right).
\]
Because every outer-face vertex belongs to exactly one component and the union of all outer faces is a subset of the vertex set $V$, we have $\sum_i b_i \le n$. Consequently, the total number of vertex pairs examined is bounded above by $O(n^2)$ in the worst case.\\
Each candidate pair is cross-checked against the Room Non-Adjacency list. This test determines whether adding the corresponding edge would violate a user-specified separation constraint. The non-adjacency set is stored in a hash-based data structure, so each lookup runs in constant expected time. Importantly, this constant-time check applies only to verifying whether a specific pair $(u,v)$ is forbidden; it does not imply that the overall edge-generation process is constant time, as the algorithm must still enumerate all admissible cross-component pairs.\\
Combining the quadratic bound on the number of candidate pairs with the constant-time feasibility check for each pair, the overall time complexity of \textsc{Generate\_Possible\_Edges} is $O(n^2)$ in the worst case. \\
\item \textbf{Union find merging:} 
Once the set of admissible candidate edges has been generated, Algorithm \ref{OF-CON} iterates through this set to progressively merge disconnected components. Each candidate edge $(u,v)$ connects vertices belonging to two distinct components, as determined by their representative elements in the disjoint-set structure. Before performing a merge, the algorithm queries the union-find data structure to check whether $u$ and $v$ already belong to the same component; if not, it performs a union operation to merge the corresponding components.\\
The number of such union--find operations is bounded by the number of candidate edges examined. As established in the candidate edge generation phase, the total number of admissible cross-component vertex pairs is $O(n^2)$ in the worst case. As a result, the total number of find and union operations executed by the augmentation loop throughout Algorithm \ref{OF-CON} is bounded by $O(n^2)$.\\\\
Each connectivity test and component merge in Algorithm \ref{OF-CON} is performed using a union-find data structure augmented with path compression and union-by-rank (or size). With these heuristics, the total cost of performing a sequence of union and find operations on $n$ elements is bounded by $O(p\,\alpha(n))$, where $p$ is the number of operations and $\alpha(\cdot)$ is the inverse Ackermann function \cite{tarjan1975efficiency}. This function grows extremely slowly and remains bounded by a small constant for all input sizes relevant to architectural layout graphs. As a result, although Algorithm~\ref{OF-CON} may invoke union find operations for up to $O(n^2)$ candidate edges, the cost of each operation is negligible compared to the cost of enumerating the candidates themselves. Consequently, the union--find merging phase does not increase the overall asymptotic complexity beyond $O(n^2)$.\\
Combining the quadratic bound on the number of candidate edges with the near-constant amortized cost of each union-find operation, the total time spent in the merging phase of Algorithm \ref{OF-CON} is bounded by $O(n^2)$. This bound reflects the algorithm’s deliberate strategy of exhaustively testing all planarity-preserving, non-adjacent cross-component connections while maintaining efficient component tracking through a disjoint-set data structure.
\end{itemize}
Thus, the total running time of Algorithm \ref{OF-CON} is
\[
T_{\mathrm{conn}} = O(n^2),
\]
with the quadratic term reflecting the necessity of comparing boundary vertices across disconnected components.

\subsubsection{Algorithm~\ref{BY}: Bi-Connectivity with Non-Adjacency:}

Algorithm~\ref{BY} upgrades the connected graph $G_C$ to a bi-connected graph $G_B$ by eliminating articulation vertices, that is, vertices whose removal disconnects the current door-connectivity structure. The algorithm operates by locally repairing articulation-induced separations while respecting the Room Non-Adjacency constraints, and continues until no articulation vertex remains.

\begin{itemize}
    \item \textbf{Block decomposition around an articulation vertex:}
    For a fixed articulation vertex $v$, the procedure \textsc{Blocks}$(G_C,v)$ constructs the graph obtained by removing $v$ and its incident edges, and then computes the connected components of the resulting graph. Using a standard DFS/BFS traversal \cite{korach1988dfs}, this step requires $O(n+m)$ time on the current graph. Since the graph remains planar throughout the pipeline, we have $m=O(n)$, and hence block decomposition is linear in $n$.\\
\item \textbf{Cross-block candidate enumeration under non-adjacency:}
    Let $\{B_1,\dots,B_b\}$ denote the blocks incident to $v$. The routine \textsc{Valid\_Edges} enumerates all vertex pairs $(u,w)$ with $u\in B_i$ and $w\in B_j$ for all unordered block pairs $(B_i,B_j)$, and filters them by testing whether $(u,w)\notin\texttt{non\_adj\_set}$. The number of candidate pairs examined is
    \[
        \sum_{i<j} |B_i|\,|B_j|
        \;=\;
        \tfrac12\!\left(\left(\sum_i |B_i|\right)^2 - \sum_i |B_i|^2\right)
        \;\le\; O(n^2),
    \]
    since the blocks form a partition of $V\setminus\{v\}$. When the Room Non-Adjacency relation is stored in a representation supporting constant-time membership queries, the total cost of this enumeration step is $O(n^2)$ in the worst case.\\
\item \textbf{Merging blocks via union--find:}
    From the admissible candidate edges, the procedure \textsc{Connect\_Blocks} selects a minimal subset that merges all blocks incident to $v$ into a single bi-connected structure. This is implemented using a disjoint-set data structure over block indices. Each candidate edge induces a constant number of \textsc{Find} and \textsc{Union} operations. With path compression and union-by-rank, a sequence of $p$ such operations runs in $O(p\,\alpha(n))$ amortised time. Here, $p=O(n^2)$ in the worst case, so this step is asymptotically dominated by the candidate enumeration cost.
\end{itemize}

\noindent
\textbf{Remark on edge insertions and global progress:}
Although Algorithm~\ref{BY} inserts edges during execution, the graph remains planar at all times; therefore, the total number of edges satisfies $m\le 3n-6$, and all subroutines of cost $O(n+m)$ remain linear in $n$. Moreover, each accepted edge insertion strictly reduces the articulation structure of the graph by merging blocks incident to the processed articulation vertex. Consequently, the total number of accepted insertions over the entire execution is $O(n)$.\\\\
While the above steps are described locally for a single articulation vertex, the algorithm does not incur the worst-case $O(n^2)$ cost independently for each such vertex. Because each insertion reduces the number of articulation-induced blocks and no articulation vertex is processed indefinitely, the total number of cross-block vertex pairs examined across the full execution is bounded by $O(n^2)$. 

\noindent
Therefore, the overall worst-case running time of Algorithm~\ref{BY} is
\[
    T_{\mathrm{biconn}} = O(n^2),
\]
with the quadratic term arising from explicit cross-block candidate enumeration under non-adjacency constraints. In typical architectural door graphs, the number of articulation vertices and the sizes of block partitions are small, and the observed running time is substantially lower than this worst-case bound.

\subsubsection{Algorithm~\ref{TRI}: Constrained Triangulation:}

Algorithm~\ref{TRI} transforms the bi-connected planar graph $G_B$ into a plane triangulation $G_T$ while respecting the Room Non-Adjacency constraints whenever feasible. The procedure operates on the planar embedding of $G_B$ and consists of enumerating all faces and triangulating each non-triangular interior face by inserting admissible diagonals.

\begin{itemize}
    \item \textbf{Face enumeration:}
    The routine \textsc{Find\_Faces}$(G_B)$ extracts all facial cycles of the planar embedding by traversing directed edges according to their cyclic order around vertices. Each directed edge is marked once it has been assigned to a face, and since each edge is incident to at most two faces, the total cost of enumerating all faces is $O(n+m)$. As the graph remains planar throughout, $m=O(n)$ and face enumeration is linear in $n$.\\
\item \textbf{Amortised ear-clipping triangulation \cite{mei2012ear}:}
    Let $f$ be an interior face whose boundary is a simple polygon of length $\ell>3$. The procedure \textsc{Clip\_Triangulation} repeatedly removes boundary vertices by inserting diagonals until $f$ is decomposed into triangles, performing exactly $\ell-3$ successful clipping steps.

    In each step, the algorithm scans the current boundary to identify a triple of consecutive vertices $(a,b,c)$ such that the diagonal $(a,c)$ satisfies the predicate \textsc{Is\_Valid\_Diagonal}. This predicate includes (i) a membership query against the Room Non-Adjacency set, (ii) a geometric intersection test against existing boundary edges, and (iii) a check that the diagonal lies in the interior of the face. In the straightforward implementation, the geometric intersection test dominates and requires $O(\ell)$ time.

    Importantly, each diagonal candidate is tested at most a constant number of times before either being accepted or becoming irrelevant due to boundary reduction. As a result, the total cost of all diagonal validity tests over the full triangulation of a face of length $\ell$ is $O(\ell^2)$. 
\end{itemize}

\noindent
To obtain a graph-level bound, observe that the sum of boundary lengths over all faces in a planar embedding is $O(m)$, and since $m=O(n)$, the total triangulation cost over all faces is
\[
T_{\mathrm{tri}} = \sum_{\text{faces}} O(\ell^2) = O(n^2).
\]

\noindent
In practice, the non-triangular faces arising from architectural door-connectivity graphs are typically small, and admissible diagonals are identified quickly, leading to substantially faster observed runtimes.

\subsubsection{Algorithm~\ref{STR}: Separating-Triangle Removal}

Algorithm~\ref{STR} eliminates separating triangles from the triangulated graph $G_T$ in order to obtain a properly triangulated plane graph $G_F$ (PTPG), which forms the required input class for rectangular-dual based layout construction~\cite{he1999floor}. The algorithm processes separating triangles iteratively and applies local graph modifications that strictly reduce their number, while preserving planarity and triangulation at every step.

\begin{itemize}
    \item \textbf{Detection of separating triangles \cite{johnson1975finding}:}
    The procedure $\textsc{ST}(G_T)$ identifies all separating triangles in the current graph. Since $G_T$ is a plane triangulation, all triangular faces can be enumerated by adjacency-based traversal, and each candidate triangle can be tested for separability by checking whether it encloses interior vertices. This detection step runs in $O(n+m)$ time, and because planarity is maintained throughout the algorithm, $m=O(n)$, yielding linear-time detection.\\
\item \textbf{Exterior-edge deletion test:}
    For a separating triangle $T=(x,y,z)$, Algorithm~\ref{STR} first attempts to delete an exterior edge. For an edge $(x,y)\in E(T)$, the algorithm computes the common neighbours of $x$ and $y$. In a plane triangulation, each edge is incident to exactly two triangular faces, so the number of common neighbours is constant. Consequently, testing whether an edge is exterior and deletable requires only local adjacency checks and incurs constant time per edge. Since each separating triangle has exactly three edges, this stage incurs a cost of $O(1)$ per triangle.\\
    \item \textbf{Constrained edge replacement.}
    If no exterior edge can be deleted, the algorithm attempts to replace an edge of the separating triangle while preserving triangulation and respecting non-adjacency constraints. Let $T=(x,y,z)$ be a separating triangle. For each boundary edge $(x,y)$, the algorithm selects a vertex $r$ from $\textsc{CommonNeighbors}(x,y)$ and a vertex $d$ from $\mathrm{nbd}(x)\cap\mathrm{nbd}(y)\cap\mathrm{nbd}(z)$. In a plane triangulation, each edge has exactly two face-completing common neighbours, and a triangle admits at most one vertex adjacent to all three of its corners. Hence, the number of candidate $(r,d)$ pairs examined per separating triangle is bounded by a constant.\\\\
    For each candidate replacement, the algorithm performs: (i) a membership query against the Room Non-Adjacency set, (ii) a constant-time local edge update that preserves planarity, and (iii) a recomputation of the separating triangles in the modified graph, denoted by $\textsc{ST}(G_T')$. The recomputation step dominates and requires $O(n)$ time.\\
    \item \textbf{Fallback replacement without non-adjacency filtering.}
    If no admissible replacement satisfying the non-adjacency constraints is found, the algorithm repeats the same constant-size candidate search without enforcing the non-adjacency filter. This fallback guarantees progress by ensuring that at least one separating triangle is eliminated, without altering the asymptotic cost.
\end{itemize}

\noindent
\textbf{Progress and global complexity:}
Each accepted deletion or replacement is applied only if it strictly reduces the total number of separating triangles. Let $s$ denote the number of separating triangles present during execution. Since each accepted modification reduces this count, Algorithm~\ref{STR} performs at most $s$ successful iterations. As $s=O(n)$ for planar triangulated graphs, and each iteration incurs a cost of $O(n)$ due to recomputation of $\textsc{ST}(\cdot)$, the overall worst-case running time of Algorithm~\ref{STR} is
\[
    T_{\mathrm{sep}} = O(s\,n) = O(n^2).
\]

\noindent
In practical architectural door-connectivity instances, the number of separating triangles is typically small and decreases rapidly, resulting in observed running times well below this worst-case bound.

\paragraph{Total worst-case complexity of Algorithms~\ref{OF-CON}--\ref{STR}:}
Combining Algorithms~\ref{OF-CON}, \ref{BY}, \ref{TRI}, and \ref{STR}, and using the fact that each stage runs in $O(n^2)$ time under the stated assumptions, the end-to-end worst-case time complexity of the proposed pipeline is
\[
    T_{\mathrm{total}}(n) = O(n^2).
\]

\subsubsection{Practical Behaviour on Architectural Inputs}

Although the theoretical analysis yields a quadratic worst-case bound, the graphs arising from real architectural door-connectivity specifications have additional structure. This structure significantly reduces the running time in practice.

\begin{itemize}
    \item The outer-face boundaries of connected components are usually small. Therefore, the number of cross-component vertex pairs examined during candidate edge generation is much lower than the worst-case \(O(n^2)\) bound.\\
    
    \item Articulation vertices are rare in typical room-adjacency graphs. When they occur, they usually connect only a small number of blocks, so the bi-connectivity augmentation step remains limited in scope.\\
    
    \item Faces created during the intermediate stages are generally small, most often quadrilaterals or pentagons. As a result, the constrained triangulation step requires only a few edge insertions.\\
    
    \item Separating triangles are uncommon in practical inputs and are removed quickly. Hence, Algorithm~\ref{STR} typically completes after only a small number of iterations.
\end{itemize}

\noindent
Consequently, the observed running time on all tested architectural instances remained well below the theoretical upper bound. Empirically, the growth in running time followed a near-quadratic trend but with small constants. This allowed the system to provide interactive performance for layouts containing approximately 20 to 30 rooms.

\subsection{Comparison with Existing Methods}

Table~\ref{tab:method_comparison_backbone} compares our method with existing approaches for floorplan and graph generation. The main difference lies in the starting point. 
Many existing methods assume that the input adjacency graph is already connected, triangulated, and suitable for floorplan construction. 
In contrast, our method is designed for early design stages, where the user may provide only a small set of door adjacencies together with a list of non-adjacency constraints. If the input graph is disconnected or incomplete, our system first repairs it. 
It makes the graph connected and bi-connected while respecting all non-adjacency constraints. 
The final result is a planar and triangulated graph that can be directly used by standard floorplan algorithms. Learning-based systems mainly focus on generating floor plans that look realistic. 
They usually do not strictly guarantee that all adjacency and non-adjacency constraints are satisfied. 
Graph-theoretic methods provide stronger guarantees, but they require the input graph to already satisfy strict structural conditions.\\
Our method combines the advantages of both directions. 
Starting from minimal and possibly incomplete input, we construct a graph that is structurally valid, easy to understand, and ready for rectangular or orthogonal floorplan generation.

\subsubsection{Rule-based Graph Enumeration \cite{SHIKSHA2026104506}}

Rule-based systems generate floor plans by applying rewrite rules and testing many possible room arrangements. 
As the number of rooms increases, the number of possible configurations grows very quickly. 
In the worst case, this growth is exponential. Our method does not explore multiple alternative graphs. 
It works on a single graph and modifies it step by step in a fixed manner. 
Therefore, its running time remains polynomial rather than exponential.

\subsubsection{Classical Graph-Theoretic Floorplanning 
\cite{SHEKHAWAT2021103718,https://doi.org/10.1111/cgf.14451,marson2010automatic,mirahmadi2012novel,wang2018customization,hua2016irregular}}

Classical rectangular-dual algorithms assume that the input graph is already bi-connected and triangulated. 
Under these assumptions, they run efficiently. 
However, they do not address how to construct such a graph from sparse design constraints. Our method fills this gap. 
It transforms a partially defined adjacency graph into a properly triangulated plane graph in polynomial time, while preserving all required adjacencies and non-adjacency constraints.

\subsubsection{Learning-based Floorplan Generation 
\cite{10.1145/3386569.3392391,nauata2020house,nauata2021house,wang2021actfloor,sun2022wallplan,hangraph2pix}}

Recent learning-based models can generate realistic floor plans. However, adjacency relations are often treated as soft constraints, which means violations can occur. 
Additional correction steps are usually required. Our approach can serve as a reliable backbone for such systems. 
It produces a planar graph that satisfies all structural conditions, which can then be used as input to geometric or learning-based floorplan generators.

\subsubsection{Summary}

The proposed pipeline constructs a valid floorplan graph from minimal door-adjacency information without assuming that the input is already well-formed.

\begin{itemize}
    \item \textbf{Polynomial-time performance:} the worst-case running time is quadratic under planarity assumptions.\\
    \item \textbf{Deterministic process:} each modification has a clear structural purpose.\\
    \item \textbf{Constraint preservation:} all adjacency and non-adjacency constraints are respected throughout the process.\\
    \item \textbf{Ready for floorplan generation:} the final graph satisfies the conditions required by rectangular-dual algorithms.
\end{itemize}

\noindent
This makes the proposed approach suitable for applications where structural correctness and strict constraint satisfaction are important.
\subsection{Scope and Extensions}

The proposed prototype is designed as a constraint-first framework for constructing planar and triangulated graphs from sparse door-adjacency specifications. 
The current version focuses on correctness, clarity, and ease of use during early design stages. 
At the same time, it allows several natural extensions.

\begin{itemize}

\item \textbf{Boundary model:}
The present implementation assumes a rectangular outer boundary. 
This matches common architectural practice and works well with rectangular dual floorplan methods. 
The same framework can be extended to support irregular boundaries, courtyards, or more complex site shapes by using more general planar embedding techniques.

\item \textbf{Strict handling of non-adjacency constraints:}
Non-adjacency requirements are enforced as hard constraints throughout the pipeline. 
If the given constraints are inconsistent, the system reports that no feasible graph exists. 
In future work, this can be extended to include controlled constraint relaxation or analysis tools that help users identify minimal changes needed to restore feasibility.\\

\item \textbf{Scalability for early design:}
By maintaining planarity and performing explicit structural checks, the framework achieves polynomial-time performance suitable for small and medium-sized floor plans, which are typical in conceptual design. 
Extending these guarantees to very large programs motivates future work on incremental updates and improved pruning strategies.\\

\item \textbf{Separation of topology and geometry:}
The pipeline focuses only on constructing a valid floorplan graph. 
Geometric aspects such as room areas, proportions, circulation, and regulatory constraints are handled separately by downstream optimization or geometric solvers. 
This separation allows the method to integrate with different geometric floorplan generation tools.

\end{itemize}

Overall, the framework provides a reliable and extendable foundation for floorplan construction. 
Future work will aim to support more general boundary shapes, improve scalability, and incorporate higher-level optimization objectives while preserving structural correctness.
\begin{table}
\centering
\caption{Qualitative comparison of representative methods. 
Unlike most approaches that assume a well-formed adjacency graph, our method builds a valid floorplan graph from sparse door-adjacency and non-adjacency constraints, including automatic connectivity repair.}
\label{tab:method_comparison_backbone}
\renewcommand{\arraystretch}{1.25}
\begin{tabularx}{\textwidth}{@{}l c X X X@{}}
\toprule
\textbf{Method} & \textbf{Year} & \textbf{Typical input assumption} & \textbf{How constraints are handled} & \textbf{Graph feasibility and interaction} \\
\midrule

Marson \& Musse~\cite{marson2010automatic} 
& 2010 
& High-level room program and area data; not based on an explicit door-adjacency graph 
& Heuristic constraint handling; non-adjacency is not explicitly modeled 
& Real-time floorplan generation; no explicit certified graph stage \\

Mirahmadi et al.~\cite{mirahmadi2012novel} 
& 2012 
& Procedural rules; adjacency not represented as an explicit planar graph 
& Rule-based handling; limited support for explicit non-adjacency constraints 
& Fast generation; no certified planar or bi-connected graph construction \\

Wang et al.~\cite{wang2018customization} 
& 2018 
& Starts from an already usable adjacency structure 
& Graph transformations mainly refine adjacency; non-adjacency is not central 
& Efficient rectangular floorplan generation under assumed-valid input \\

Chen et al. (Floor-SP)~\cite{Chen_2019_ICCV} 
& 2019 
& As-built data (RGBD/scan); not based on user-defined door graph 
& Constraints arise from optimization terms; no hard non-adjacency model 
& Strong for reconstruction; not focused on sparse input repair \\

Wu et al.~\cite{wu2019data} 
& 2019 
& Boundary-driven and dataset-trained generation 
& Constraints encouraged through learning objectives 
& Produces realistic floorplans; structural feasibility not explicitly certified \\

House-GAN~\cite{nauata2020house} / House-GAN++~\cite{nauata2021house} 
& 2020--21 
& Requires a coherent adjacency graph as input 
& Compatibility learned from data; no hard planarity or bi-connectivity guarantees 
& High realism; limited transparency in structural enforcement \\

Graph2Plan~\cite{10.1145/3386569.3392391} 
& 2020 
& Uses valid floorplan graphs retrieved from a database 
& Constraints guided by retrieval and learning; limited explicit non-adjacency enforcement 
& Effective graph-conditioned synthesis; assumes usable input graphs \\

Fen Shi et al.~\cite{shi2020addressing} 
& 2020 
& Dense adjacency specifications for large facilities 
& Search- or learning-based constraint satisfaction 
& Scales to dense inputs; not focused on minimal structural repair \\

GPLAN~\cite{SHEKHAWAT2021103718} 
& 2021 
& Requires a planar and connected input graph 
& Hard constraints for modeled requirements; non-adjacency not primary 
& Efficient floorplan generation given a suitable graph \\

G2PLAN~\cite{https://doi.org/10.1111/cgf.14451} 
& 2022 
& Requires a prescribed adjacency graph 
& Graph-theoretic and optimization-based processing; assumes structurally valid input 
& Generates dimensioned floorplans; does not repair sparse or disconnected inputs \\

\textbf{DPLAN: Proposed prototype} 
& -- 
& \emph{Sparse door-adjacency edges with explicit non-adjacency constraints; input may be disconnected} 
& \emph{Strict non-adjacency enforcement at every step} 
& \emph{Constructs a valid graph through connectivity repair, bi-connectivity augmentation, planar triangulation, and optional separating-triangle removal, with interactive feasibility feedback} \\

\bottomrule
\end{tabularx}
\end{table}
\subsection{Correctness}
\label{sec:correctness}

In this section, we will prove that our 
Proposed Algorithms~\ref{OF-CON}--\ref{STR} are guaranteed to construct a graph belonging to the targeted class for every admissible input. The specified non-adjacency constraints for the specified room are preserved whenever they are explicitly enforced by the algorithmic steps. In addition, each proposed algorithm incorporates a fallback mechanism that is invoked only when all otherwise admissible edge choices are excluded due to non-adjacency constraints, thereby ensuring progress while deviating from the constraints only when strictly necessary.

Throughout, let $G=(V,E)$ denote the input door-connectivity graph together with a fixed \emph{non-adjacency set} $\mathcal{N}\subseteq \binom{V}{2}$ of forbidden pairs. Algorithm~\ref{OF-CON} constructs a connected augmentation $G_C$, Algorithm~\ref{BY} constructs a biconnected augmentation $G_B$, Algorithm~\ref{TRI} produces a plane triangulation $G_T$ while rejecting diagonals in $\mathcal{N}$ and any diagonal that violates planarity, and Algorithm~\ref{STR} removes separating triangles to obtain a separating-triangle free graph $G_F$.

\subsubsection{Correctness of Algorithm \ref{OF-CON} (Connectivity with Non-adjacency)}
\label{subsec:corr_alg1}

\begin{theorem}
\label{thm:alg1_correct}
Let $G=(V, E)$ be a plane graph whose connected components are $C_1,\dots, C_k$ with $k\ge 1$, and let $\mathcal{N}$ be a set of forbidden pairs. If Algorithm~\ref{OF-CON} returns $G_C=(V,E\cup A)$, then:
\begin{enumerate}
  \item $G_C$ is connected.
  \item The number of added edges satisfies $|A|\le k-1$, and if Algorithm~\ref{OF-CON} succeeds in merging all components, then $|A|=k-1$ (hence the augmentation is minimum in cardinality among all augmentations that connect these $k$ components).
\end{enumerate}
\end{theorem}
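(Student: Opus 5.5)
The plan is to argue entirely through the union--find invariant maintained in Steps 12--16 of Algorithm~\ref{OF-CON}. Identify each component $C_i$ with index $i$. The invariant is that after each iteration the partition of $\{1,\dots,k\}$ into union--find classes coincides with the partition of components into connected components of the graph $(V, E\cup A_{\text{cur}})$, where $A_{\text{cur}}$ is the set of augmentation edges accepted so far. I will prove this by induction on the number of processed candidate edges. Initially $A_{\text{cur}}=\emptyset$ and every class is a singleton, so the invariant holds. A candidate $(u,v)$ is accepted exactly when $\textsc{Find\_Comp}(c_u)\neq \textsc{Find\_Comp}(c_v)$. By the invariant, this means $u$ and $v$ lie in different connected pieces of the current graph. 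Adding $(u,v)$ merges exactly those two pieces, and $\textsc{Union\_Comp}$ merges exactly the two classes. Rejected candidates change neither the graph nor the partition.

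Part (2) follows from this invariant. Each accepted edge joins two distinct pieces, so it lowers the number of pieces by exactly one, matching the decrement $m\gets m-1$. Since the count starts at $k$ and can never drop below $1$, at most $k-1$ edges are accepted, so $|A|\le k-1$. If all components are merged, the count has gone from $k$ to $1$ with a decrement of one per accepted edge, so $|A|=k-1$. For minimality I will use the standard fact that adding a single edge to any graph reduces its number of connected components by at most one. Hence any augmentation that connects $k$ components needs at least $k-1$ edges, and $k-1$ is optimal.

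For part (1) I will split into cases. If $k=1$, Steps 5--6 return $G$ unchanged, so $G_C=G$ is connected and $A=\emptyset$. If $k\ge 2$, I will show the final count $m$ equals $1$. Suppose the constrained pass ends with $m>1$. Then the fallback at Steps 38--39 re-invokes $\textsc{Connectivity}$ with $\mathcal N=\emptyset$. In that call $\textsc{Generate\_Possible\_Edges}$ outputs every cross pair of outer-face vertices. Each component has at least one vertex on its outer face, since the outer face of a nonempty plane graph always has a vertex on its boundary (an isolated vertex counts as its own boundary). So every pair of remaining pieces is joined by at least one candidate. Because union--find accepts any candidate whose endpoints lie in different classes, the pass cannot finish with two classes that some candidate joins. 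The classes therefore collapse to one, and by the invariant the resulting graph is connected.

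The main obstacle is making the fallback rigorous while keeping $|A|\le k-1$ across both calls. The fallback is applied to $G_C$, which is already partially augmented, so the bound comes from applying the counting argument to the recursive call with $k'=m$ pieces. That call adds at most $m-1$ edges, for a total of $(k-m)+(m-1)=k-1$. A secondary subtlety is the outer-face argument: I will note that each component's outer face is taken with respect to its own embedding, so that every component contributes a nonempty candidate set.
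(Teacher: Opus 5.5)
Your proof is correct, but it reaches connectivity by a different route from the paper. The counting core is shared. Each accepted union--find merge lowers the class count by exactly one ($s_t=k-t$), and minimality follows because a single edge reduces the number of components by at most one.

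The paper proves connectivity (claim 1) through the component graph $H$. Augmentation edges never close a cycle among components, so $H$ is a forest, and because "exactly $k-1$ augmentation edges" are inserted, $H$ is a tree and $G_C$ is connected. That step takes for granted that $k-1$ acceptances actually occur: the paper only says this happens "whenever the scan finds enough cross-component candidates". It also dispatches the fallback by saying the proof is "the same as Case~1". In addition, the paper's cases are keyed to whether some constraint-respecting augmentation exists, not to the algorithm's actual trigger $m>1$.

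Your route closes these points. The invariant identifying union--find classes with the connected pieces of $(V,E\cup A_{\mathrm{cur}})$ gives connectivity directly from "one class at the end". The saturation argument then shows the fallback scan must end with one class. It uses two facts: $\mathcal N=\emptyset$, and every piece has a nonempty outer-face vertex set. Make explicit that classes only grow during the scan: if two endpoints end in different classes, they were in different classes when their candidate was processed, so that candidate would have been accepted. Your accounting $(k-m)+(m-1)=k-1$ also justifies $|A|\le k-1$ across both calls, which the paper leaves implicit.

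The trade-off is this. The paper's forest argument is shorter once success of the scan is assumed. Yours gives an unconditional guarantee, and in fact shows $|A|=k-1$ on every run.
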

\begin{proof}
\textbf{Case~1}: Suppose there exists a set of edges whose insertion into $G$ produces the graph $G_C$, and none of these added edges belong to the non-adjacency set $\mathcal{N}$.\\
 Now, If $k=1$, Algorithm~\ref{OF-CON} returns $G$ immediately, so the claims hold trivially with $A=\emptyset$.\\
Assume $k>1$. Algorithm~\ref{OF-CON} first computes, for each component $C_i$, the set $\textsf{outerfaces}[i]$ consisting of vertices on the outer-face boundary of $C_i$ (Steps~8--9 and function \textsc{ExtractOuterface}). It then forms the candidate set
\[
\textsf{possible\_edges}
=\{(u,v)\mid u\in \textsf{outerfaces}[i],~v\in \textsf{outerfaces}[j],~i\neq j,\text{ and }(u,v)\notin\mathcal{N}\},
\]
exactly as implemented in \textsc{GeneratePossibleEdges} (Steps~21--29). Hence every candidate edge explicitly satisfies $(u,v)\notin\mathcal{N}$.\\
Next, Algorithm~\ref{OF-CON} initializes a union-find structure whose elements are the component indices $\{1,\dots,k\}$ (Step~11). During the scan of $\textsf{possible\_edges}$, an edge $(u,v)$ is appended to $A$ \emph{only if} its endpoints lie in different current union--find sets, i.e., $\textsf{FindComp}(c(u))\neq \textsf{FindComp}(c(v))$ (Steps~12--16). Therefore:
\begin{itemize}
  \item \textbf{(Forbidden edges never added)} Since $(u,v)$ is drawn from $\textsf{possible\_edges}$, we have $(u,v)\notin\mathcal{N}$ for every added edge.\\
  \item \textbf{(Each accepted edge strictly reduces the number of component-sets)} At the moment an edge is accepted, it merges two previously distinct union--find sets (Step~15). Thus, the number of union--find sets decreases by exactly one per accepted edge.
\end{itemize}
Let $s_t$ denote the number of union--find sets after $t$ accepted edges. Initially $s_0=k$. Each accepted edge decreases $s_t$ by $1$, so $s_t=k-t$. In particular, after $k-1$ accepted edges, we have $s_{k-1}=1$, meaning all original components are in one set. Hence, whenever the scan finds enough cross-component candidates to merge all components, the algorithm produces a connected graph using exactly $k-1$ new edges; this proves claim~(2).\\\\
\textbf{Case~2}: Suppose no set of edges exists whose insertion into $G$ yields the graph $G_C$ without violating the non-adjacency constraints, that is, every admissible edge belongs to the non-adjacency set $\mathcal{N}$. In this situation, the algorithm activates a fallback mechanism (Steps~38--39), in which the non-adjacency set is temporarily cleared, and the main procedure is invoked again to ensure the construction of $G_C$, and  now the candidate edge will be:
\[
\textsf{possible\_edges}
=\{(u,v)\mid u\in \textsf{outerfaces}[i],~v\in \textsf{outerfaces}[j],~i\neq j\}\]
and proof will be the same as discussed in Case-1.\\\\
Now, we will conclude by establishing claim~(1), namely that the output graph is connected. 
Let $C_1,\dots, C_k$ denote the connected components of the input graph $G$, and define the \emph{component graph} $H$ as follows: each vertex of $H$ corresponds to one component $C_i$, and an edge is added between $C_i$ and $C_j$ whenever Algorithm~\ref{OF-CON} introduces an augmentation edge $(u,v)$ with $u \in C_i$ and $v \in C_j$.\\\\ 
By construction, each augmentation edge connects two components that were previously distinct according to the union-find structure maintained by the algorithm. Consequently, no augmentation edge can create a cycle in $H$, and the resulting component graph is a forest. At termination, the algorithm inserts exactly $k-1$ augmentation edges, so $H$ contains $k$ vertices and $k-1$ edges. A forest with these parameters must be a tree, and therefore $H$ is connected. This implies that for any pair of components $C_i$ and $C_j$, there exists a path between them in $H$. Replacing each edge along this path by its corresponding augmentation edge in the constructed graph $G_C$ yields a walk connecting $C_i$ and $C_j$ in $G_C$. Since each component $C_i$ is internally connected in $G$, it follows that $G_C$ is connected.\\
We now argue minimality. Any connected supergraph (any graph formed by augmenting $G$ with additional edges such that the resulting graph is connected
) of $G$ must merge the original $k$ components into one, and the insertion of a single edge can reduce the number of components by at most one. Therefore, at least $k-1$ new edges are required to achieve connectivity. Since Algorithm~\ref{OF-CON} terminates after adding exactly $k-1$ edges, the resulting augmentation is minimal with respect to the number of added edges.

\hfill $\square$
\end{proof}

\subsubsection{Correctness of Algorithm \ref{BY} (Biconnectivity with Non-adjacency)}
\label{subsec:corr_alg2}

Algorithm~\ref{BY} processes each articulation point $v$ of the connected graph $G_C$ and connects the blocks that arise when $v$ is removed.

\begin{lemma}[Correctness of \textsc{Blocks}]
\label{lem:blocks}
For an articulation point $v$ in $G_C$, the procedure \textsc{Blocks}$(G_C,v)$ returns exactly the subset of boundary vertices (its outerface boundary under the inherited embedding) of the connected components of $G_C-\{v\}$.
\end{lemma}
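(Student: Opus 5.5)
The plan is to trace the procedure \textsc{Blocks}$(G_C,v)$ in Algorithm~\ref{BY} (Steps~10--18) step by step, showing that each step computes what the statement claims. The argument has three parts.

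\textbf{Step 1: the auxiliary graph is $G_C-\{v\}$.} \textsc{Blocks} copies $G_C$ into $G'$, deletes $v$ from $V'$, and deletes every edge $(v,u_i)$ for $u_i\in\mathrm{nbd}(v)$. The edges of $G_C$ incident to $v$ are exactly the $(v,u_i)$ with $u_i\in \mathrm{nbd}(v)$, so the resulting $G'$ equals the vertex-deleted subgraph $G_C-\{v\}$. Since the argument only uses $v$, I will read the function parameter $r$ as $v$. No edge is added, so $G'$ inherits a plane embedding from $G_C$ by restriction.

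\textbf{Step 2: the components are correct and nontrivial.} The call $Connected\_components(G')$ runs a standard DFS/BFS \cite{korach1988dfs}. It therefore returns the partition of $V_C\setminus\{v\}$ into the vertex sets $D_1,\dots,D_\ell$ of the connected components of $G_C-\{v\}$. Because $v$ is an articulation point (as computed by Tarjan's method \cite{tarjan1972depth}), $G_C-\{v\}$ is disconnected, so $\ell\ge 2$ and the list of components is nontrivial.

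\textbf{Step 3: each block is the outer-face vertex set of its component.} For each $D_i$, the loop sets $blocks[i]\gets Extract\_Outerface(D_i)$. I would argue that restricting the embedding of $G_C$ to the induced subgraph $G_C[D_i]$ gives a well-defined plane graph, hence a well-defined outer face. \textsc{Extract\_Outerface} walks the boundary of that face using the cyclic rotation order, exactly as analysed for Algorithm~\ref{OF-CON}. It thus returns precisely the vertices incident to the outer face of $G_C[D_i]$ under the inherited embedding. Collecting over $i$ gives $blocks=\{B_1,\dots,B_\ell\}$, with each $B_i\subseteq D_i$ equal to the outer-face boundary of $D_i$. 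The sets $B_i$ are pairwise disjoint because the $D_i$ are.

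\textbf{Main obstacle.} The delicate point is the meaning of ``outer face'' for the component $G_C[D_i]$. Its own unbounded face in the inherited embedding can differ from the portion of the outer face of $G_C$: for example, a component may be nested inside a face of another component, or may surround $v$. I would handle this by stating explicitly that the lemma refers to the embedding of $G_C[D_i]$ considered in isolation, which is exactly what the statement's phrase ``under the inherited embedding'' indicates. I would then note that face tracing on an isolated plane graph uniquely determines its unbounded face. Together with a possible remark on the degenerate case where $D_i$ is a single vertex or a tree (every vertex then lies on the outer face), this completes the proof.
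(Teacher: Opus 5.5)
Your proposal is correct and follows the same route as the paper: trace \textsc{Blocks}, note that $G'$ is exactly $G_C-\{v\}$, and conclude that \textsc{Connected\_components} returns precisely the components of $G_C-\{v\}$. The paper's proof stops at that point and simply relabels the output as $\textsf{blocks}$. Your Step~3 therefore supplies the outer-face part that the paper leaves implicit: each $B_i$ is the outer-face vertex set of $G_C[D_i]$ under the restricted drawing, and you make the nesting caveat explicit.
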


\begin{proof}
\textsc{Blocks} forms $G'=(V\setminus\{v\},E\setminus\{(v,u):u\in \mathrm{nbd}(v)\})$ by deleting $v$ and all incident edges (Steps~11--15), and then calls \textsc{ConnectedComponents}$(G')$ (Step~15). By definition, these are precisely the connected components of $G_C-\{v\}$. The returned collection is labeled as $\textsf{blocks}=\{B_1,\dots,B_\ell\}$ (Step~18). \hfill $\square$
\end{proof}

\begin{lemma}[Validity of candidate edges]
\label{lem:valid_edges}
For a fixed articulation point $v$ with blocks $\{B_1,\dots,B_\ell\}$, the procedure \textsc{ValidEdges}$(\textsf{blocks},\mathcal{N})$ returns a set of edges each of which joins vertices from two distinct blocks and does not belong to $\mathcal{N}$.
\end{lemma}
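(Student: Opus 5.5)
The argument is a direct inspection of the nested loops in \textsc{Valid\_Edges} (Steps~19--28 of Algorithm~\ref{BY}). Together with Lemma~\ref{lem:blocks}, it yields two properties for every pair placed in $\textsf{valid\_edges}$: its endpoints lie in distinct blocks, and it is not in $\mathcal{N}$.

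\textbf{Distinct blocks.} Let $(u,w)$ be any pair inserted into $\textsf{valid\_edges}$. Insertion happens only in the innermost loop body, which is reached only for indices $i<j$, with $u\in BL_1=\textsf{blocks}[i]$ and $w\in BL_2=\textsf{blocks}[j]$. Since $i\neq j$, the two endpoints are drawn from two different entries of $\textsf{blocks}$.

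\textbf{Entries are disjoint.} It remains to check that different entries of $\textsf{blocks}$ share no vertex, so that no vertex lies in both $B_i$ and $B_j$. By Lemma~\ref{lem:blocks}, $B_i$ and $B_j$ are subsets of distinct connected components of $G_C-\{v\}$. Distinct connected components are vertex-disjoint, so $B_i\cap B_j=\emptyset$. Consequently $u\neq w$, and no loop at a single vertex is ever produced.

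\textbf{Not forbidden.} The insertion statement is guarded by the test $(u,w)\notin\mathcal{N}$. Every inserted pair therefore passes this test, and $\textsf{valid\_edges}\cap\mathcal{N}=\emptyset$. Because $\mathcal{N}\subseteq\binom{V}{2}$ consists of unordered pairs, the membership test is unaffected by the order in which $u$ and $w$ are listed.

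\textbf{Main obstacle.} The delicate point is the disjointness step, since it is the only place that depends on earlier results. Lemma~\ref{lem:blocks} speaks of boundary subsets of components rather than the components themselves. I would make explicit that each $B_i$ is contained in a single component of $G_C-\{v\}$, and that the indexing $i\mapsto$ component is injective. This holds because \textsc{Blocks} iterates once over each component returned by \textsc{Connected\_components}$(G')$. With that in place, the remaining steps are purely syntactic readings of the pseudocode.
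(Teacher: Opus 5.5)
Your proposal is correct and follows the same route as the paper: a direct reading of the nested loops in \textsc{ValidEdges}. There, a pair $(u,w)$ is inserted only when $u\in B_i$ and $w\in B_j$ with $i<j$, and only after passing the test $(u,w)\notin\mathcal{N}$. Your extra disjointness step, that by Lemma~\ref{lem:blocks} the $B_i$ lie in distinct components of $G_C-\{v\}$ and hence $u\neq w$, is a detail the paper leaves implicit rather than a different argument.
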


\begin{proof}
The code enumerates unordered pairs of distinct blocks $(B_i,B_j)$ and then all pairs $(u,w)$ with $u\in B_i$ and $w\in B_j$ (Steps~21--26). It inserts $(u,w)$ into \textsf{valid\_edges} only if $(u,w)\notin\mathcal{N}$ (Steps~26--27). Thus, every returned edge is cross-block and non-forbidden. \hfill $\square$
\end{proof}

\begin{lemma}[Connecting blocks removes articulation at $v$]
\label{lem:articulation_removed}
Let $v$ be an articulation point in a connected graph $G$. Let $B_1,\dots,B_\ell$ be the connected components of $G-\{v\}$. If we add edges so that in the augmented graph $G'$ the subgraph induced by $V\setminus\{v\}$ is connected (equivalently, the \emph{block-incidence graph} on $\{B_1,\dots,B_\ell\}$ becomes connected), then $v$ is not an articulation point of $G'$.
\end{lemma}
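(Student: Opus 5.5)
The proof goes straight through the definition of an articulation point. Recall that $v$ is an articulation point of a connected graph $G'$ exactly when $G'-\{v\}$ is disconnected. So it suffices to show two things: $G'$ is connected, and $G'-\{v\}$ is connected.

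Both hold because edge augmentation keeps the vertex set $V$ unchanged and never deletes edges. First, $G'$ contains $G$ as a spanning subgraph, and $G$ is connected, so $G'$ is connected. Second, $G'-\{v\}$ is the subgraph of $G'$ induced by $V\setminus\{v\}$. By hypothesis this subgraph is connected. Therefore $v$ does not disconnect $G'$, and $v$ is not an articulation point of $G'$.

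We also justify the parenthetical equivalence in the statement. Define the block-incidence graph $H$ with vertex set $\{B_1,\dots,B_\ell\}$, where $B_i$ and $B_j$ are joined whenever some added edge $(u,w)$ has $u\in B_i$ and $w\in B_j$. Each $B_i$ is connected in $G-\{v\}$, hence also in $G'-\{v\}$.

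Suppose $H$ is connected, and take $x\in B_i$ and $y\in B_j$. Choose a path $B_i=B_{i_0},B_{i_1},\dots,B_{i_r}=B_j$ in $H$. For each step pick an added edge $(u_s,w_s)$ with $u_s\in B_{i_{s-1}}$ and $w_s\in B_{i_s}$. Then concatenate, inside the connected blocks, the subpaths $x\to u_1$, then $w_1\to u_2$, and so on, ending with $w_r\to y$. The result is a walk from $x$ to $y$ avoiding $v$, so $G'-\{v\}$ is connected. This is the same argument as the component-graph tree argument in Theorem~\ref{thm:alg1_correct}.

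Conversely, suppose $G'-\{v\}$ is connected. If $H$ had a nontrivial bipartition with no $H$-edge across it, then no edge of $G'-\{v\}$ would cross the corresponding vertex partition. Original edges stay within a single $B_i$, and added edges across the partition would be $H$-edges. This contradicts connectivity.

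There is a caveat that the statement does not settle. If an added edge has $v$ as an endpoint, it is irrelevant to $G'-\{v\}$. So the equivalence should be read with added edges between vertices of $V\setminus\{v\}$, which is the case for the edges produced by \textsc{ValidEdges} (Lemma~\ref{lem:valid_edges}). The main thing to get right is the equivalence of the two hypotheses; the conclusion itself follows immediately from the definition.

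Finally, the lemma only says that $v$ stops being an articulation point. It does not claim that $G'$ is biconnected, since other articulation points may remain, and it does not rule out new ones appearing. That is handled by iterating over all articulation points in Algorithm~\ref{BY}, and it is outside the scope of this lemma.
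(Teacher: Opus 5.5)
Your proof is correct and matches the paper's: $G'-\{v\}$ is exactly the subgraph induced by $V\setminus\{v\}$, which is connected by hypothesis, so removing $v$ does not disconnect $G'$. Your check that $G'$ is connected and your two-directional proof of the block-incidence equivalence are correct additions that the paper leaves implicit. One correction to your closing aside: adding edges on a fixed vertex set can never create a new articulation point, since if $G-\{w\}$ is connected then so is its spanning supergraph $G'-\{w\}$; this monotonicity is what makes it enough for Algorithm~\ref{BY} to iterate only over the articulation points of $G_C$.
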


\begin{proof}
In $G'$, remove $v$. By assumption, the remaining graph $G'-\{v\}$ is connected. Hence, removal of $v$ does not disconnect $G'$, so $v$ is not an articulation point in $G'$. \hfill $\square$
\end{proof}

\begin{theorem}
\label{thm:alg2_correct}
Assume Algorithm~\ref{BY} is run on a connected graph $G_C$ with articulation-point set $A(G_C)$ and forbidden set $\mathcal{N}$. Let $G_B$ be the returned graph. Then:
\begin{enumerate}
  \item $G_B$ is connected.
  \item Every edge added during the constrained phase satisfies the non-adjacency constraints, i.e., is not in $\mathcal{N}$.
  \item If for every articulation point $v$ there exists a set of allowed edges (not in $\mathcal{N}$) that connects the blocks of $G_C-\{v\}$, then the output $G_B$ is biconnected.
\end{enumerate}
\end{theorem}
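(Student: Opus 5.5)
The plan is to prove the three claims in order, using Lemmas~\ref{lem:blocks}, \ref{lem:valid_edges} and \ref{lem:articulation_removed} as the main tools.

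For claim~(1), I would argue by monotonicity of edge sets. Algorithm~\ref{BY} never deletes an edge or a vertex: Step~8 only forms $E_B=E_C\cup \textsf{selected\_edges}$. So $G_C$ is a spanning subgraph of $G_B$. Any path between two vertices of $G_C$ is still a path in $G_B$, and connectivity of $G_C$ therefore passes directly to $G_B$.

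For claim~(2), I would trace where every edge added in the constrained phase comes from. Each such edge is accepted in the main loop of \textsc{Connect\_Blocks}, which only iterates over $\textsf{valid\_edges}$. By Lemma~\ref{lem:valid_edges}, every element of $\textsf{valid\_edges}$ lies outside $\mathcal{N}$. The only steps that may insert pairs from $\mathcal{N}$ are the fallback steps, which run only when $connected\_blocks>1$. The claim as stated restricts attention to the constrained phase, so the fallback edges are excluded and claim~(2) follows.

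For claim~(3), the argument has three steps.
\begin{itemize}
\item \textbf{Articulation at a fixed $v$ is removed.} Fix an articulation point $v\in A(G_C)$ and let $B_1,\dots,B_\ell$ be the components of $G_C-\{v\}$ given by Lemma~\ref{lem:blocks}. By hypothesis, some set of allowed edges makes the block-incidence graph connected. \textsc{Connect\_Blocks} runs a Kruskal-style union--find scan over all of $\textsf{valid\_edges}$, accepting an edge exactly when it joins two different union--find sets. So at the end the sets coincide with the connected components of the block-incidence graph restricted to $\textsf{valid\_edges}$. That graph is connected whenever any allowed connecting set exists, so the scan ends with a single set. Hence $G_C-\{v\}$ plus the selected edges is connected, and Lemma~\ref{lem:articulation_removed} shows $v$ is no longer an articulation point.
\item \textbf{No new articulation points appear.} Adding edges can never turn a non-articulation vertex into an articulation vertex: if $G-\{u\}$ is connected, so is $G'-\{u\}$ for any supergraph $G'$ on the same vertex set. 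The articulation points of $G_B$ therefore form a subset of $A(G_C)$.
\item \textbf{Conclusion.} By the first step, each point of $A(G_C)$ is no longer an articulation point once its edges are added. Adding further edges (for other articulation points) keeps $G_B-\{v\}$ connected. So $G_B$ has no articulation points, is connected by claim~(1), and is therefore biconnected, assuming $|V|\ge 3$.
\end{itemize}

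The main obstacle is the first step of claim~(3). \textsc{Blocks} records only outer-face vertices of each component, so $\textsf{valid\_edges}$ may be a strict subset of all allowed cross-component pairs, and the hypothesis only supplies \emph{some} allowed connecting set. I would address this by reading the hypothesis relative to the candidate pool, i.e.\ as allowed edges drawn from the returned blocks. Any connecting edge in a plane augmentation must join outer-face vertices of the respective components, which is the same planarity justification the paper uses for Algorithm~\ref{OF-CON}. Once the hypothesis is put in this form, the union--find argument above goes through unchanged.
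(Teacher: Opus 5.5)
Your proposal follows the paper's route: (1) by monotonicity of the edge set, (2) via Lemma~\ref{lem:valid_edges}, and (3) by applying Lemma~\ref{lem:articulation_removed} at each articulation point. It is more careful than the paper. The paper only says \textsc{Connect\_Blocks} works ``whenever it succeeds'', and it leaves implicit that added edges neither create new articulation points nor undo an earlier repair.

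The step that fails is your patch for the outer-face restriction. It is not true that a connecting edge must join outer-face vertices. Components of $G_C-\{v\}$ can be nested inside inner faces of one another, and the hypothesis of (3) does not ask for a planar connecting set anyway. For a concrete case, take a wheel with hub $h$ and rim $c_1c_2c_3c_4$, drawn with the rim outermost. Add $v$ adjacent to $c_1$ and a pendant $x$ adjacent to $v$, and let $\mathcal{N}=\{(x,c_i):1\le i\le 4\}$. The hypothesis holds at both articulation points $v$ and $c_1$, using $(x,h)$ and $(v,c_2)$. Yet at $v$ the hub is not recorded by \textsc{Blocks}, so \textsf{valid\_edges} is empty and the constrained scan ends with two sets. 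Your sentence ``that graph is connected whenever any allowed connecting set exists'' is therefore false. Reading the hypothesis relative to the candidate pool proves a different statement.

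You do not need the reinterpretation. Claim (3) asserts only biconnectivity. The fallback loop of \textsc{Connect\_Blocks} scans exactly the pairs of $\mathcal{N}$ with both endpoints in the recorded sets, on the same union--find structure. So the main scan and the fallback together examine every cross-block pair of recorded vertices. Each recorded set is nonempty, because every component has a vertex on its outer face. Hence the block graph on the scanned pairs is complete multipartite, hence connected, and the scan ends with a single set at every articulation point, whatever $\mathcal{N}$ is. Your remaining steps then give biconnectivity unchanged: Lemma~\ref{lem:articulation_removed}, no new articulation points, persistence under later additions, and $|V|\ge 3$. The hypothesis, in your candidate-pool form, only guarantees that the fallback, and hence a forbidden edge, is never used.
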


\begin{proof}
\textbf{(1) Connectivity.} Algorithm~\ref{BY} only adds edges to $G_C$ (Steps~7--9) and never deletes vertices; adding edges cannot destroy connectivity, hence $G_B$ remains connected.\\\\
\textbf{(2) Non-adjacency preservation (constrained phase).} For each articulation point $v$, candidate edges are drawn from \textsc{ValidEdges}, which by Lemma~\ref{lem:valid_edges} excludes $\mathcal{N}$. \textsc{ConnectBlocks} selects a subset of these candidates (Steps~47--55 and Step~53), so every selected edge from this phase satisfies $(u,w)\notin\mathcal{N}$.\\\\
\textbf{(3) Biconnectivity under feasibility.} Fix an articulation point $v\in A(G_C)$. By Lemma~\ref{lem:blocks}, \textsf{blocks} are the components of $G_C-\{v\}$. \textsc{ConnectBlocks} is designed to connect these blocks by selecting cross-block edges that merge block-sets under a union--find structure (Steps~36--56); whenever it succeeds in making the block-sets connected, Lemma~\ref{lem:articulation_removed} implies $v$ is no longer an articulation point in the augmented graph. Algorithm~2 repeats this for every $v\in A(G_C)$ (Steps~3--7), so after processing all articulation points, the resulting graph has no articulation points. A connected graph with no articulation points is biconnected. \hfill $\square$
\end{proof}

\paragraph{Remark (about the fallback).}
The pseudocode contains an explicit fallback that may ignore non-adjacency constraints to connect the remaining blocks (Steps 57--62). Therefore, the strongest unconditional guarantee is: \emph{Algorithm~\ref{BY} returns a biconnected graph whenever it can complete the block-connection step for each articulation point; it preserves non-adjacency constraints for all edges selected from \textsc{ValidEdges}, and only the fallback may violate $\mathcal{N}$.}

\subsubsection{Correctness of Algorithm \ref{TRI} (Triangulation with Non-adjacency)}
\label{subsec:corr_alg3}

Algorithm~\ref{TRI} triangulates each non-triangular face by adding diagonals that pass \textsc{IsValidDiagonal} checks, which explicitly reject forbidden diagonals and any diagonal that would cross an existing boundary edge or lie outside the face.

\begin{lemma}[Safety of accepted diagonals]
\label{lem:safe_diag}
Every diagonal $(a,c)$ accepted by \textsc{IsValidDiagonal} is (i) not in $\mathcal{N}$, (ii) lies strictly inside the current face polygon, and (iii) does not properly intersect any existing edge of the current embedding. Hence, adding $(a,c)$ preserves planarity of the embedding and respects the non-adjacency constraints.
\end{lemma}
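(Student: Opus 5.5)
The plan is to read the three properties directly off the three tests performed inside \textsc{IsValidDiagonal} (Steps 51--60 of Algorithm~\ref{TRI}), in the order in which they are executed. The function returns \texttt{true} only if none of its rejecting branches fires, so each property follows from one branch.

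\textbf{Property (i).} The first test (Step 52) returns \texttt{false} whenever $(a,c)\in\mathcal{N}$. Hence any diagonal that reaches the final \texttt{return true} satisfies $(a,c)\notin\mathcal{N}$.

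\textbf{Property (ii).} The last test returns \texttt{true} only when $(a,c)\in\mathrm{Diag}(vertices)$ and $(a,c)\subseteq\mathrm{int}(vertices)$. The open segment therefore lies in the interior of the current face polygon.

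\textbf{Property (iii).} This is the step that needs an actual argument. The loop over boundary edges rejects any $(a,c)$ that meets a face edge other than $(a,b)$ and $(b,c)$. I would argue in three parts.
\begin{itemize}
\item Every edge of the current embedding that could cross $(a,c)$ must meet the open face. By (ii), the relative interior of $(a,c)$ lies inside that face, and in a plane embedding no edge passes through the interior of a face.
\item Such an edge can only touch the segment along the face boundary. This boundary consists of the edges of $vertices$, together with diagonals already clipped earlier in the same face.
\item The clipped diagonals are themselves edges of the current polygon $vertices$. This is because each clip replaces the path $a,b,c$ by the edge $(a,c)$ in the boundary list.
\end{itemize}
So all candidate crossings are tested by the loop. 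The excluded edges $(a,b)$ and $(b,c)$ meet $(a,c)$ only at the endpoints $a$ and $c$, which is not a proper intersection.

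The main obstacle is the non-simple face case. In a biconnected plane graph every face boundary is a simple cycle, so I would invoke the biconnectivity of $G_B$ established in Theorem~\ref{thm:alg2_correct}. I would then argue inductively that each clip keeps the remaining polygon a simple cycle bounding a face of the augmented embedding. The argument is that $(a,c)$ splits the face into the triangle $a,b,c$ and a smaller face bounded by the updated list.

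Finally, I would conclude planarity preservation: adding a curve interior to a face, with endpoints on its boundary and no proper crossings, yields a plane embedding. Together with (i), this gives respect of $\mathcal{N}$. I would also remark that the fallback in Steps 42--45 bypasses \textsc{IsValidDiagonal}, so the lemma applies only to diagonals accepted by that predicate.
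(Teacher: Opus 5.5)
Your proposal is correct and takes the same route as the paper, which simply reads (i)--(iii) off the three tests in \textsc{IsValidDiagonal}. You go further than the paper in one place: you justify why checking only the edges of the current vertex list is enough for (iii), using the invariant that this list still bounds a face of the biconnected plane embedding, whereas the paper passes from ``no crossing with the face's boundary edges'' to ``no crossing with any edge'' without comment.

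One caveat applies equally to the paper's lemma. Only clips accepted by the predicate preserve that invariant. A fallback clip earlier in the same face may join $a$ and $c$ across a reflex vertex $b$, after which the list encloses existing edges that a later accepted diagonal can cross. So your argument really covers only faces in which no such fallback has occurred.
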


\begin{proof}
The validation function first rejects $(a,c)$ if it is present in the non-adjacency set (explicitly stated in the description of \textsc{IsValidDiagonal}). It then checks for proper intersections against the face's boundary edges (excluding incident edges) and rejects any such intersection. Finally, it verifies that $(a,c)$ is a polygon diagonal that lies strictly inside the face region. Therefore, any accepted diagonal simultaneously satisfies (i)--(iii), and adding it keeps the embedding planar and does not introduce a forbidden adjacency. \hfill $\square$
\end{proof}

\begin{lemma}[Progress on a single face]
\label{lem:face_progress}
Let $f$ be a bounded face whose boundary is a simple cycle of length $t\ge 4$. Each time Algorithm~\ref{TRI} adds an accepted diagonal inside $f$, the face is split into two bounded faces whose boundary lengths sum to $t+2$. In particular, the number of non-triangular faces decreases after finitely many accepted diagonals, and after $t-3$ accepted diagonals, the region originally corresponding to $f$ is fully triangulated.
\end{lemma}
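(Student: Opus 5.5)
The plan is to argue purely combinatorially on the boundary walk of $f$, using Lemma~\ref{lem:safe_diag} to ensure each accepted diagonal is a genuine chord of the current face. Let the current boundary cycle of $f$ be $(v_1,\dots,v_t)$ and suppose an accepted diagonal $(a,c)$ joins $v_i$ and $v_j$ with $i<j$, non-consecutive on the cycle. By Lemma~\ref{lem:safe_diag} it lies strictly inside $f$ and crosses no edge. By the Jordan curve theorem, the closed curve formed by $(a,c)$ together with either boundary arc splits $f$ into exactly two bounded regions. Their boundaries are the cycles $(v_i,v_{i+1},\dots,v_j)$ and $(v_j,\dots,v_t,v_1,\dots,v_i)$, of lengths $j-i+1$ and $t-(j-i)+1$. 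The sum is $t+2$, since the new edge is counted in both cycles and every old boundary edge in exactly one. Because $a,c$ are non-adjacent on the cycle, both lengths are at least $3$ and strictly less than $t$.

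For Algorithm~\ref{TRI} specifically, \textsc{Consecutive} picks $a=v_i$ and $c=v_{i+2}$. One piece is then the triangle $(a,b,c)$, an ear, and the other is the cycle with $b$ removed, of length $t-1$. This matches the step \texttt{vertices.remove(b)}, so the algorithm's bookkeeping agrees with the geometry.

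For the counting, I would use the potential $\Phi=\sum_{g}(|g|-3)$ over the bounded faces $g$ contained in the original region of $f$. Initially $\Phi=t-3$. Each split replaces $|g|-3$ by $(\ell_1-3)+(\ell_2-3)=|g|+2-6=(|g|-3)-1$, so $\Phi$ drops by exactly one. Since $\Phi\ge 0$, and $\Phi=0$ exactly when every piece is a triangle, the region is fully triangulated after exactly $t-3$ accepted diagonals. Along the way, each split either leaves the number of non-triangular pieces unchanged (an ear split of a face with $t\ge5$) or strictly reduces the remaining non-triangular length. Hence the non-triangular faces all disappear after finitely many steps, which gives the ``in particular'' clause.

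The main obstacle is guaranteeing that a diagonal is actually added at each stage while the boundary still has length at least $4$. This is what makes the count ``$t-3$ accepted diagonals'' meaningful, rather than the loop stalling. I would handle it by appealing to the two-ears theorem for simple polygons: whenever $t\ge4$, some ear $(a,b,c)$ has a valid internal diagonal $(a,c)$. If every such ear is forbidden by $\mathcal{N}$, the fallback branch of \textsc{Clip\_Triangulation} still inserts $(a,c)$ for the first triple. Each iteration therefore adds one diagonal and removes one vertex, and the splitting argument above applies to it as well.

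A secondary subtlety is that the fallback edge $(v_0,v_2)$ may coincide with an existing edge outside $f$, which would create a multi-edge. I would note this explicitly and restrict the statement's ``accepted diagonal'' to edges that pass the \textsc{IsValidDiagonal} geometric checks.
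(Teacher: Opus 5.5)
Your core argument is correct and follows the paper's route. The paper's proof only asserts that an interior non-crossing chord splits a simple polygon into two smaller polygons, and that a $t$-gon needs exactly $t-3$ diagonals. You do more on both counts. You compute the piece lengths $j-i+1$ and $t-(j-i)+1$ explicitly, which the paper never does for the claimed sum $t+2$. You also use the potential $\Phi=\sum_g(|g|-3)$, which drops by exactly one per chord and so proves the $t-3$ count; the paper's remark that the maximum face size decreases does not give this count by itself.

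One sentence in your third paragraph is wrong and should be removed. That is the claim that the splitting argument ``applies to'' the fallback diagonal as well. The fallback of \textsc{Clip\_Triangulation} inserts $(v_0,v_2)$ for the triple at index $0$ without any geometric check, and that triple need not be an ear. The two-ears theorem guarantees that \emph{some} ear exists, not that the one at index $0$ is an ear. If $v_1$ is a reflex vertex of the current polygon, then $(v_0,v_2)$ is not an interior diagonal of $f$. It passes outside $f$, may cross existing edges, and does not split $f$ into two faces. So your no-stall argument does not work for the algorithm as written.

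You do not need that argument, because the lemma is stated only for accepted diagonals. Your last paragraph already proposes restricting to diagonals that pass the geometric tests of \textsc{IsValidDiagonal}. Make that restriction replace the fallback claim rather than sit alongside it, and the proof is complete.

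A minor point: derive the ``in particular'' clause from $\Phi$ alone. A non-ear chord, for example one splitting a hexagon into two quadrilaterals, increases the number of non-triangular faces. Your dichotomy about non-triangular pieces therefore does not hold for general chords, though the conclusion still follows from $\Phi$.
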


\begin{proof}
In a simple polygonal face, a non-crossing diagonal between two non-adjacent boundary vertices partitions the polygon into two smaller polygons. This is exactly the geometric meaning of adding a diagonal that lies strictly inside the face and does not intersect the boundary (Lemma~\ref{lem:safe_diag}). Every such split reduces the maximum face size, and a $t$-gon requires exactly $t-3$ diagonals to triangulate. \hfill $\square$
\end{proof}

\begin{theorem}
\label{thm:alg3_correct}
Assume Algorithm~\ref{TRI} is applied to a planar embedding of a biconnected graph $G_B$ and uses \textsc{IsValidDiagonal} for acceptance. Let $G_T$ denote the output. Then:
\begin{enumerate}
  \item $G_T$ is planar and contains $G_B$ as a spanning subgraph.
  \item Every diagonal added by Algorithm~3 respects the non-adjacency constraints (i.e., is not in $\mathcal{N}$).
  \item Every bounded face of $G_T$ is a triangle; hence $G_T$ is a plane triangulated graph.
\end{enumerate}
\end{theorem}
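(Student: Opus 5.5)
The plan is to prove the three claims of Theorem~\ref{thm:alg3_correct} in order, arguing face by face on the fixed plane embedding of $G_B$, and using Lemmas~\ref{lem:safe_diag} and~\ref{lem:face_progress} as the main tools. First I would set up notation. Let $f_1,\dots,f_q$ be the bounded non-triangular faces collected in \texttt{non\_tri\_faces} by \textsc{Find\_Faces}. Because $G_B$ is biconnected, every face boundary of its plane embedding is a simple cycle, so each $f_j$ is a simple polygon of length $t_j\ge 4$. Moreover, the faces are pairwise interior-disjoint. So diagonals inserted in one face never interact with another face, and it suffices to analyse \textsc{Clip\_Triangulation} on a single face $f$.

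\textbf{Claim (1).} Algorithm~\ref{TRI} only executes $E\gets E\cup \texttt{Triangulate\_edges}$ and never deletes vertices or edges. Hence $V_T=V_B$ and $G_B\subseteq G_T$ is spanning. For planarity I would induct on the number of inserted diagonals, with the invariant that the current graph is plane and each current polygon \texttt{vertices} bounds a face of it. Every diagonal accepted through \textsc{Is\_Valid\_Diagonal} lies strictly inside that face and crosses no edge, by Lemma~\ref{lem:safe_diag}. So inserting it keeps the graph plane and splits the face into two faces, one of which is the clipped triangle $(a,b,c)$. That triangle's boundary consists of the two existing edges $(a,b)$ and $(b,c)$ together with the new edge $(a,c)$. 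The remaining polygon, with $b$ removed, is again a face, which restores the invariant.

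\textbf{Claim (2).} Every accepted diagonal passes the first test of \textsc{Is\_Valid\_Diagonal}, so $(a,c)\notin\mathcal{N}$. This is item (i) of Lemma~\ref{lem:safe_diag}.

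\textbf{Claim (3).} Each iteration of the while loop removes one vertex from \texttt{vertices} and emits one triangle, so after $t-3$ iterations the face has been cut into $t-2$ triangles (Lemma~\ref{lem:face_progress}). The loop terminates since $|\texttt{vertices}|$ strictly decreases. Faces that were already triangles are untouched. The outer face is excluded by \texttt{Is\_Exterior\_Face}, which matches the definition of a PTG. Hence every bounded face of $G_T$ is a triangle.

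\textbf{Main obstacle: the fallback branch.} When no ear passes \textsc{Is\_Valid\_Diagonal}, the algorithm adds $(a,c)$ from \texttt{Consecutive}$(\texttt{vertices},0)$ without testing it. Claims (1)--(3) are stated for diagonals that are accepted via \textsc{Is\_Valid\_Diagonal}, as the hypothesis of the theorem says. So I would read ``every diagonal added'' in claim (2) as referring to that acceptance mechanism, exactly as in Lemma~\ref{lem:safe_diag}.

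For claims (1) and (3), I would still need to show that a fallback chord stays inside the face. The first thing I would try is a combinatorial argument. In the plane graph, any two boundary vertices of a face can be joined by a curve drawn inside that face. Taking $a,c$ at distance two along the face boundary gives such a chord without crossings, at least when $(a,c)$ is not already an edge. So planarity and the triangle count persist even in the fallback branch, and the face still shrinks by one vertex per iteration, which is all the counting argument for claim (3) requires.
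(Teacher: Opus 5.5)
Your argument for the three claims, read as the paper reads the theorem (every inserted diagonal passes \textsc{IsValidDiagonal}), is the paper's proof. Claims (1) and (2) follow from the fact that Algorithm~\ref{TRI} only adds edges, together with Lemma~\ref{lem:safe_diag}. Claim (3) follows face by face from Lemma~\ref{lem:face_progress}. Your observation that biconnectivity makes every face boundary a simple cycle is a useful addition, since it supplies the hypothesis of Lemma~\ref{lem:face_progress} that the paper leaves implicit.

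The genuine problem is your final paragraph, which tries to extend (1) and (3) to runs in which the fallback fires. The case you set aside, where $(a,c)$ is already an edge, actually occurs, and there the triangle count does not persist. Take $K_4-bd$ drawn with straight edges, with $a=(0,0)$, $b=(1,1)$, $c=(2,0)$, $d=(1,\tfrac12)$. The bounded face $f$ with boundary $a,b,c,d$ is a dart with reflex vertex $d$. The edge $ac$ lies outside $f$; it bounds the triangular face $a,c,d$. Now let $\mathcal{N}=\{(b,d)\}$. Then \textsc{IsValidDiagonal} rejects $bd$ as forbidden and rejects $ac$ because its segment is not inside $f$, so the fallback fires. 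If the boundary list starts at $a$, the fallback adds $(a,c)$. In a simple graph this changes nothing (if drawn anyway, it is a parallel edge), so $f$ is still a quadrilateral in $G_T$. In fact no simple triangulation of $f$ avoiding $bd$ exists at all. So claim (3) fails for some runs that use the fallback, and no argument can cover them. The theorem holds only under the hypothesis that the fallback is never used, which is exactly how the paper's proof and its subsequent remark treat it.

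Even when $ac$ is not an edge, redrawing the fallback chord as a curve inside the face is not what the algorithm does. Later calls to \textsc{IsValidDiagonal} test straight segments against the straight-line polygon \texttt{vertices}, and after a fallback that polygon need not bound a face. Hence Lemma~\ref{lem:safe_diag} no longer certifies diagonals accepted afterwards, and the invariant your induction for (1) uses is lost. Drop the fallback paragraph and state the no-fallback hypothesis explicitly.
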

\begin{proof}
Algorithm~\ref{TRI} only adds edges, so $G_B$ remains a spanning subgraph of the result. By Lemma~\ref{lem:safe_diag}, every accepted diagonal is non-crossing and lies inside the target face, so planarity is preserved throughout; this proves (1). The same lemma gives (2).\\\\
For (3), Algorithm~\ref{TRI} iterates over all non-triangular faces and repeatedly attempts diagonals until the face is decomposed; by Lemma~\ref{lem:face_progress}, each face of length $t$ becomes triangulated after at most $t-3$ accepted diagonals. Applying this to every bounded face yields that all bounded faces are triangles in the final graph. \hfill $\square$
\end{proof}
\paragraph{Remark (about the fallback).}
The pseudocode contains an explicit fallback that may ignore non-adjacency constraints to connect the remaining blocks (Steps 41--44). Therefore, the strongest unconditional guarantee is: \emph{Algorithm~\ref{TRI} returns a triangulated graph whenever it can complete the ear clipping step for each non-triangulation face; it preserves non-adjacency constraints for all edges selected from \textsc{ValidEdges}, and only the fallback may violate $\mathcal{N}$.}

\subsubsection{Correctness of Algorithm \ref{STR} (Separating-triangle removal)}
\label{subsec:corr_alg4}

Algorithm~\ref{STR} takes the triangulated graph $G_T$ and repeatedly modifies it to eliminate separating triangles, returning a graph $G_F$ claimed to be free of separating triangles. The procedure explicitly maintains a counter $m=|T|$, where $T=\textsc{ST}(G)$ denotes the current set of separating triangles.

\begin{lemma}[Monotone progress measure]
\label{lem:monotone_m}
Within \textsc{RemoveSTEdgeRemoval}, a graph modification is accepted only if the new separating-triangle count $m'$ satisfies $m'\le m-1$ (Step~32). Therefore, each accepted modification strictly decreases $m$.
\end{lemma}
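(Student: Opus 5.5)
The argument is a direct inspection of the pseudocode of \textsc{RemoveSTEdgeRemoval}, organised around the three places where the working graph $G_T$ and the counter $m$ can change. I would list these update sites and show that each one is guarded so that $m$ drops by at least one.

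\textbf{Replacement sites.} In Step~2 (constrained) and Step~3 (fallback), a tentative graph $G'_T=G_T\cup\{(r,d)\}\setminus\{(x,y)\}$ is formed. Its separating triangles are recomputed as $\mathcal{T}'=\mathrm{ST}(G'_T)$ with $m'=|\mathcal{T}'|$. The assignments $G_T\gets G'_T$, $\mathcal{T}\gets\mathcal{T}'$, $m\gets m'$ are executed only inside the branch guarded by $m'\le m-1$ and $\texttt{is\_planar}(G'_T)$. In the \textbf{else} branch the tentative change is undone, so $G_T$, $\mathcal{T}$ and $m$ are unchanged. Hence whenever a replacement is accepted, the new value of $m$ is $m'\le m-1<m$. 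In the accepting branch, $m$ is set to exactly $|\mathrm{ST}(G_T)|$, so the counter and the true count agree there. This is all the lemma asserts for the guarded steps.

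\textbf{Deletion site.} The main obstacle is Step~1, the exterior-edge deletion. There $m\gets m-1$ is written directly, not through the $m'$ test. For this site I would first note that a strict decrease of the counter is immediate from the assignment itself. I would then argue that the counter still matches the true count. Suppose $(x,y)$ is an edge of the separating triangle $T$ with $|N_{xy}|=2$, so it lies on the outer face. Deleting it destroys $T$, since $T$ no longer exists as a cycle. It creates no new triangle, since removing an edge cannot create a $3$-cycle. So $|\mathrm{ST}|$ falls by at least one, which is consistent with the recomputation $\mathcal{T}\gets\mathrm{ST}(G_T)$ performed there.

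\textbf{Scope of the claim.} If the recount on deletion shows that more than one separating triangle disappeared, I would state the lemma for the counter as recomputed. The cleanest route is to read the lemma as "every accepted modification strictly decreases $m$", and to accept that Step~1 may decrease it by more than one. Combining the three sites gives the lemma. It follows that $m$ is a nonnegative integer that strictly decreases, which is what the later termination argument will use.
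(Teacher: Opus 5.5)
Your proposal is correct and follows the paper's own proof: it inspects the code at the two kinds of update sites. In the replacement branches, acceptance is guarded by $m'\le m-1$ and then sets $m\gets m'$; in the exterior-edge deletion branch, the code sets $m\gets m-1$ directly. Your extra check that the counter agrees with $|\mathrm{ST}(G_T)|$ goes beyond what the paper argues and is not needed for the lemma. Be careful there: in Step~1 the counter drops by exactly one, while the true count may drop by more, so after a deletion the two can drift apart. This drift does not affect the strict decrease of $m$.
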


\begin{proof}
In the edge-replacement stage, the algorithm forms a candidate $G'_T = G_T \cup \{(r,d)\}\setminus\{(x,y)\}$ and recomputes $T'=\textsc{ST}(G'_T)$ and $m'=|T'|$ (Steps~30--31). It accepts the replacement only if $m'\le m-1$ (Step~32) and then sets $G_T\leftarrow G'_T$ and $m\leftarrow m'$ (Step~33). In the deletion stage, when an eligible edge is deleted, the code updates $m\leftarrow m-1$ after recomputing separating triangles (Steps~16--18). Hence, any accepted action strictly decreases $m$. \hfill $\square$
\end{proof}

\begin{lemma}[Termination]
\label{lem:alg4_terminates}
Algorithm \ref{STR} terminates after a finite number of accepted modifications.
\end{lemma}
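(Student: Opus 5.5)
The plan is a potential-function argument built on Lemma~\ref{lem:monotone_m}, with $m=|\mathcal{T}|$ as the measure. Since $m$ is a nonnegative integer, it suffices to show three things: every accepted modification lowers $m$ by at least one; the initial value is finite; and between accepted modifications the algorithm does only a bounded amount of work.

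\textbf{Step 1: bound the initial value.} A separating triangle is a 3-cycle of $G_T$. The graph $G_T$ is planar with $|V_T|=n$ vertices, so it has at most $3n-6$ edges and at most $\binom{n}{3}$ triangles. Hence $m_0=|\textsc{ST}(G_T)|$ is finite. For plane triangulations one can also use the standard bound of $O(n)$ separating triangles, but finiteness is all we need here.

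\textbf{Step 2: apply Lemma~\ref{lem:monotone_m}.} Every accepted action strictly decreases $m$. This covers the exterior-edge deletion of Step~1, the constrained replacement of Step~2, and the unconstrained fallback of Step~3, since the fallback uses the same acceptance test $m'\le m-1$. The value of $m$ never goes negative. Therefore at most $m_0$ modifications can be accepted over the whole call of \textsc{Remove\_ST\_Edge\_Removal}.

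\textbf{Step 3: bound the work between accepted modifications.} The loops are finite for the following reasons.
\begin{itemize}
\item The outer loop ranges over the finite list $\mathcal{T}$.
\item Each triangle has three edges.
\item For each edge, the candidates $r\in\mathrm{Common\_Neighbors}(x,y)$ and $d\in\mathrm{nbd}(x)\cap\mathrm{nbd}(y)\cap\mathrm{nbd}(z)$ are finite subsets of $V$.
\end{itemize}
Each candidate test performs one $\textsc{ST}$ recomputation and one planarity test, and both are finite procedures. A rejected candidate is reverted, so the graph is unchanged and no loop restarts because of it. So even a pass that accepts nothing, and merely leaves $m>0$, ends in finitely many steps.

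\textbf{Step 4: count the top-level calls.} The main routine invokes \textsc{Remove\_ST\_Edge\_Removal} at most twice (Steps~4--6). Combining Steps~2 and~3 gives termination.

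The main obstacle is the reassignment $\mathcal{T}\gets\mathrm{ST}(G_T)$ inside the \textbf{for all} $T\in\mathcal{T}$ loop. If the loop is read as iterating over a live, changing collection, finiteness of a single pass is not immediate. I would handle this by interpreting the loop as iterating over a fixed snapshot of $\mathcal{T}$. Alternatively, I would note that every reassignment follows an accepted modification. By Step~2 there are at most $m_0$ such reassignments, so the iteration can be restarted at most $m_0$ times, and each restart scans at most $m_0$ triangles. That gives an $O(m_0^2)$ bound on outer iterations, and termination follows either way.
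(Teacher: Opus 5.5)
Your proposal is correct and follows essentially the paper's argument: Lemma~\ref{lem:monotone_m} together with $m\ge 0$ bounds the number of accepted modifications by $m_0$, and all loops range over finite sets. Your extra care about the two top-level calls and the in-loop reassignment $\mathcal{T}\gets\mathrm{ST}(G_T)$ refines what the paper leaves implicit; note also that under your snapshot reading Steps~3--4 alone already give termination, which helps because the deletion branch sets $m\gets m-1$ without recomputing the true count, so the nonnegativity of $m$ that you and the paper both assert is not automatic there.
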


\begin{proof}
At every acceptance, $m$ decreases by at least $1$ (Lemma~\ref{lem:monotone_m}). Since $m\ge 0$ always, there can be at most $m_0$ accepted modifications where $m_0=|\textsc{ST}(G_T)|$ is the initial number of separating triangles. All loops are over finite sets (triangles, edges, neighbors), and no step increases $m$ while being accepted. Therefore, the algorithm cannot accept modifications indefinitely and must terminate. \hfill $\square$
\end{proof}

\begin{theorem}
\label{thm:alg4_correct}
Let $G_T$ be the triangulated graph produced by Algorithm~\ref{TRI} and let $\mathcal{N}$ be the non-adjacency set. If Algorithm~\ref{STR} returns $G_F$, then:
\begin{enumerate}
  \item $G_F$ contains no separating triangles, i.e., $\textsc{ST}(G_F)=\emptyset$.
  \item During the constrained replacement stage, every added edge $(r,d)$ satisfies $(r,d)\notin \mathcal{N}$.
  \item If the constrained stage cannot reduce $m$ further, the fallback stage (which drops the non-adjacency filter) guarantees that the algorithm can still reduce $m$ and thus reach $\textsc{ST}(G)=\emptyset$.
\end{enumerate}
\end{theorem}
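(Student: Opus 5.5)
The plan is to reason directly from the pseudocode of Algorithm~\ref{STR}, using Lemmas~\ref{lem:monotone_m} and~\ref{lem:alg4_terminates} as the backbone. I will prove item~(2) first, since it is purely syntactic. Then I will prove item~(3), which is the substantive progress claim. Item~(1) will follow from (3) together with termination.

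For (2), I inspect Step~2 of \textsc{Remove\_ST\_Edge\_Removal}. A candidate pair $(r,d)$ is passed to the construction of $G'_T=G_T\cup\{(r,d)\}\setminus\{(x,y)\}$ only inside the branch guarded by $(r,d)\notin\mathcal{N}$. Only such a $G'_T$ can be accepted and assigned to $G_T$. The exterior-edge deletion of Step~1 adds no edges at all. Hence every edge added in the constrained stage lies outside $\mathcal{N}$, exactly as in Lemma~\ref{lem:valid_edges} and Lemma~\ref{lem:safe_diag}.

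For (3), I will fix a separating triangle $T=(x,y,z)$ of the current plane triangulation, with interior vertex set $I\neq\emptyset$, and exhibit an admissible Step~3 move. If some edge of $T$ lies on the outer face, Step~1 already applies. Its deletion destroys the cycle $xyz$, and it creates no new triangle because it removes an edge. So $m$ drops, consistent with Lemma~\ref{lem:monotone_m}.

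Otherwise all three edges of $T$ are interior. Take the edge $(x,y)$. Its two incident faces are $xyr$ outside $T$ and $xyd$ inside $T$, with $d\in I$. Flipping $(x,y)$ to $(r,d)$ keeps the graph a plane triangulation, because the quadrilateral $xryd$ is a face union. It also destroys $T$.

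The main obstacle is to show that the flip does not create as many new separating triangles as it destroys, so that $m'\le m-1$ holds for at least one choice among the three edges of $T$. Any new separating triangle must use the edge $(r,d)$, and therefore has the form $r\,d\,w$ with $w$ a common neighbour of $r$ and $d$ other than $x,y$. My first attempt is a planarity argument: such a $w$ would force a path from the exterior of $T$ to $I$ that avoids $\{x,y,z\}$, which is impossible unless $w=z$. So the only possible new triangle is $rdz$.

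I would then choose among the three edges of $T$ so as to avoid this case, for instance by taking $d$ not adjacent to the opposite vertex's outer neighbour. Failing that, I would fall back on the pseudocode's exhaustive search over all $(x,y)\in E(T)$ and all $r$, arguing that $T$ itself is counted and lost. Since the fallback drops the $\mathcal{N}$ filter, feasibility of the flip does not depend on $\mathcal{N}$.

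Finally, for (1): by Lemma~\ref{lem:alg4_terminates} only finitely many modifications are accepted. By (3), whenever $\textsc{ST}(G)\neq\emptyset$ some modification is available and accepted. So at termination $m=0$, i.e.\ $\textsc{ST}(G_F)=\emptyset$.
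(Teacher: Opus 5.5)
Your argument for (2) is fine. You also correctly see that (1) and (3) depend on showing that a move with $m'\le m-1$ exists whenever $m>0$. The paper's proof leaves this open too: it asserts convergence to $m=0$, and it proves (3) only conditionally (if a decreasing replacement exists, the fallback finds it). Your proposal, however, stops at exactly this point.

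Your reduction is correct. Flipping an interior edge $xy$ of $T$ to $rd$, where $r$ and $d$ are the apices of the faces on $xy$ outside and inside $T$, destroys every separating triangle through $xy$. The new triangles $rdx$ and $rdy$ are faces, so the only possible new separating triangle is $rdz$, and it appears iff $r\sim z$ and $d\sim z$. But ``choose among the three edges to avoid this case'' is only asserted. Your fallback ``$T$ is counted and lost'' gives only $m'\le m$ ($T$ lost, $rdz$ gained), not the $m'\le m-1$ the acceptance test requires.

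The missing lemma is: if all edges of $T$ are interior, some edge $e$ of $T$ has outer apex $r_e$ not adjacent to the opposite vertex. Proof sketch: suppose $r_{xy}\sim z$. Then $x,y,z,r_{xy}$ span a plane $K_4$. Its faces bounded by $xyz$ and by $xyr_{xy}$ are bounded, the latter because it is a face of $G$ and $xy$ is interior. So the unbounded face of this $K_4$ is bounded by, say, $zxr_{xy}$. The apex $r_{zx}$ lies in the closure of that face and $y$ does not. Hence $r_{zx}\sim y$ would force $r_{zx}=r_{xy}$ and make $zx$ an outer edge, a contradiction. (A minor point: Step~1 tests $|N_{xy}|=2$, not ``$xy$ lies on the outer face''. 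An outer edge of $T$ that also lies on a second separating triangle is therefore not deleted there.)

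The more serious gap is that your flip is not a move Algorithm~\ref{STR} ever tries. So even a completed existence argument does not give (3) or (1) for the algorithm. Steps~2 and~3 only use $d\in\mathrm{nbd}(x)\cap\mathrm{nbd}(y)\cap\mathrm{nbd}(z)$, and the inner apex of $xy$ need not be adjacent to $z$. This intersection can even be empty. Glue two octahedra along $T=xyz$, as follows.
Inside $T$, put a triangle $abc$ with $a,b,c$ joined to $\{y,z\},\{z,x\},\{x,y\}$ respectively.
Outside $T$, put $uvw$ joined in the same way, and delete $uv$ so that the outer face is $uzvw$.
Then $T$ is the only separating triangle and its edges are interior. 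We have $N_{xy}=\{z,c,w\}$, and similarly for $yz$ and $zx$, so Step~1 never fires. No vertex is adjacent to all of $x,y,z$, so Steps~2--3 have no candidate. The algorithm returns a graph that still contains $T$, even though your flip $xy\mapsto cw$ would remove it.

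So (1) and (3) fail for the pseudocode as written. What your approach can establish is the statement for a corrected rule: delete an outer edge of $T$, or flip an edge $e$ of $T$ whose outer apex $r_e$ is non-adjacent to the opposite vertex to $(r_e,d_e)$. Such moves never create separating triangles. This also repairs your derivation of (1). The pseudocode does not iterate until no move is available: the first call skips every edge because $E'$ is initialised to the whole edge set of $G_T$, and the second makes a single pass over the initial list. You need exactly this ``no new separating triangles'' property to conclude $m=0$ after that one pass.
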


\begin{proof}
Algorithm~\ref{STR} begins by computing $T=\textsc{ST}(G_T)$ and $m=|T|$ (Steps~1--4), and then calls \textsc{RemoveSTEdgeRemoval} to iteratively modify $G_T$ until separating triangles disappear, with a second call permitted if $m>0$ remains (Steps~5--7).\\\\
\textbf{(1) Elimination of separating triangles.}
By Lemma~\ref{lem:monotone_m}, every accepted modification strictly reduces the number $m$ of separating triangles. Lemma~\ref{lem:alg4_terminates} guarantees that only finitely many such reductions can occur. The procedure terminates only after repeated calls to \textsc{RemoveSTEdgeRemoval} fail to produce further reductions, at which point the outer routine returns the graph $G_F$ (Step~7). Since the algorithm is explicitly designed to accept only modifications that decrease $|T|$, and is re-invoked with an empty working edge set $E'=\varnothing$ whenever necessary (Step~6), the process converges to the unique fixed point $m=0$. Consequently, the output graph satisfies $\textsc{ST}(G_F)=\emptyset$, and contains no separating triangles.
\\\\
\textbf{(2) Non-adjacency preservation in the constrained stage.}
In the replacement stage, an edge $(r,d)$ is considered only if $(r,d)\notin \mathcal{N}$ (Step~29). Hence, every accepted replacement edge in this stage respects the non-adjacency constraints.\\\\
\textbf{(3) Guaranteed progress via fallback.}
If no deletion or constrained replacement is possible for the current triangle, the pseudocode explicitly enters a fallback stage labeled "ignoring non-adjacency constraints and retrying replacements." This strictly enlarges the candidate search space (it includes all candidates previously considered, plus those previously filtered out by $\mathcal{N}$), while still enforcing the acceptance condition $m'\le m-1$ before committing to an update. Therefore, whenever there exists \emph{any} edge replacement that decreases $m$, the fallback can realize such a decrease, ensuring eventual reachability of $m=0$.\\
This completes the correctness argument. \hfill $\square$
\end{proof}

\paragraph{Remark (Role of Algorithm \ref{STR} under RFP and OFP modes).}
The requirement handled by Algorithm~\ref{STR} depends on the selected user mode. When the user demands that all rooms be rectangular (RFP mode), separating triangles must be eliminated without introducing additional vertices. In this case, Algorithm~\ref{STR} transforms the given triangulated graph into a separating-triangle-free graph, which is necessary for constructing a rectangular dual.\\
On the other hand, if non-rectangular modules are permitted (OFP mode), separating triangles can be resolved by inserting new vertices. In this setting, the correctness of Algorithms~\ref{OF-CON}--\ref{TRI} is already sufficient to ensure a valid plane triangulated backbone, and no further structural restriction is required at this stage.
\end{document}